\documentclass[3p,times]{article}
\usepackage{tda}

\begin{document}

\title{Reconstructing Embedded Graphs from Persistence Diagrams}

\author{Robin Lynne Belton\thanks{Depart. of Mathematical Sciences,
    Montana State U.}
    \and
        Brittany Terese Fasy\footnotemark[1]~\footnotemark[2]
        \and
        Rostik Mertz\thanks{School of Computing,
            Montana State
            U. \newline
   {\scriptsize {\tt \{robin.belton, brittany.fasy, samuelmicka, david.millman, annaschenfisch, jordan.schupbach, luciawilliams\}@montana.edu}
    {\tt antonmertz@gmail.com}
    {\tt dsalinasduron@westminstercollege.edu}}}
        \and
        Samuel Micka\footnotemark[2]
        \and
        David L. Millman\footnotemark[2]
        \and
        Daniel Salinas\footnotemark[2]
        \and
        Anna Schenfisch\footnotemark[1]
        \and
        Jordan Schupbach\footnotemark[1]
        \and
        Lucia Williams\footnotemark[2]
    }

\index{Belton, Robin Lynne}
\index{Fasy, Brittany Terese}
\index{Mertz, Rostik}
\index{Micka, Samuel}
\index{Millman, David L.}
\index{Salinas, Daniel}
\index{Schenfisch, Anna}
\index{Schupbach, Jordan}
\index{Williams, Lucy}

\maketitle 

\begin{abstract}
    The persistence diagram (PD) is an increasingly popular topological descriptor.
By encoding the size and prominence of topological features at varying scales,
the PD provides important geometric and topological information about a space.
Recent work has shown that well-chosen (finite) sets of PDs can differentiate between
geometric simplicial complexes, providing a method for representing complex
shapes using a finite set of descriptors.
A related inverse problem is the following: given a set of PDs
(or an oracle we can query for persistence diagrams), what is underlying geometric
simplicial complex?  In this paper, we present an
algorithm for reconstructing embedded graphs in $\R^d$ (plane graphs in $\R^2$)
with $n$ vertices
from $n^2 - n + d + 1$ directional (augmented) PDs.
Additionally, we empirically validate the
correctness and time-complexity of our algorithm in $\R^2$ on randomly generated plane graphs using our implementation,
and explain the numerical limitations of implementing our algorithm.

\end{abstract}

%----------------------- Introduction -------------------------------------
\section{Introduction}
\label{sec:intro}

Topological data analysis (TDA) provides a set of promising tools to help
analyze data in fields as varied as
materials science, transcriptomics, and neuroscience \cite{giusti2015clique,lee2017quantifying,rizvi2017single}.
The wide applicability is due to the fact that many forms of data can be
modeled as graphs or simplicial complexes, two widely-studied types of
topological spaces.
Topological spaces are described in terms of their invariants--such as the
homotopy type or homology classes. Persistent homology considers the evolution
of the homology groups in a filtered topological space.

\paragraph{Motivation}
The problem of manifold and stratified space learning is an active research area
in computational mathematics. For example, Chambers et al.\ use persistent
homology in a stratified space setting~\cite{chambers2018heuristics}, describing an algorithm to identify
shapes that simplify a noisier shape, and then confirm that the given simplification
still satisfies the desired topological properties. Zheng et al.\
also address a problem in space learning, and study the 3D reconstruction of
plant roots from multiple 2D images~\cite{rootreconstruction}, using persistent homology to ensure the resulting 3D root model is connected.
Another reconstruction problem involves reconstructing road networks.
Three common approaches to solving these problems involve Point Clustering,
Incremental Track Insertion, and Intersection Linking~\cite{maps}.
Ge, Safa, Belkin, and Wang develop a point clustering algorithm using Reeb
graphs to extract the skeleton graph of a road from point-cloud
data~\cite{ge2011data}. The original embedding can be reconstructed using a
principal curve algorithm~\cite{kegl2000learning}.
Karagiorgou and Pfoser give an algorithm to reconstruct a road network from vehicle
trajectory GPS data by identifying intersections with clustering, then using vehicle
trajectories to connect them~\cite{ili}. Ahmed et  al.\  provide an incremental track insertion
algorithm to reconstruct road networks from point cloud data~\cite{iti}. The
reconstruction is done incrementally, using a variant of the Fr\'echet distance
to partially match input trajectories to the reconstructed graph.
Ahmed, Karagiorgou, Pfoser, and Wenk describe all these methods in~\cite{maps}.
Finally, Dey, Wang, and Wang use persistent homology to reconstruct embedded
graphs. This research has also been applied to input trajectory
data~\cite{dey2018graph}. Dey et al.\ use persistence to guide the Morse
cancellation of critical simplices. We see from these applications the
necessity for reconstruction algorithms, and in particular the necessity for
reconstruction algorithms of graphs since much of the research involving
reconstruction of road networks involves reconstructing graphs.

We explore the reconstruction of graphs from a widely used topological
descriptor of data, \emph{persistence diagrams}.
The problem of reconstruction
for simplicial complexes has received
significant recent attention \cite{curry2018directions,
ghrist2018persistent,turner2014persistent}.
Our work is motivated by \cite{turner2014persistent}, which proves that one can
reconstruct simplicial complexes in~$\R^2$ and $\R^3$
using a subset of a parameterized infinite set of persistence diagrams.
In this work, we use a version
of persistence diagrams that includes information that is not normally
considered; thus, for clarity, we refer to these descriptors as
\emph{augmented persistence diagrams} (APDs).  Our approach
differs from those listed above because we provide a deterministic
algorithm using APDs generated from specific
directional height filtrations to reconstruct the original graph.

\paragraph{Our Contribution}
In this work, we focus on graphs embedded in $\R^d$ (plane graphs in $\R^2$)
and use directional
APDs to reconstruct a graph.
In particular, our main contributions are
an upper bound on the number of APDs required for reconstructing embedded graphs in $\R^d$,
a polynomial-time algorithm for reconstructing plane graphs,
and the first deterministic reconstruction algorithm for embedded graphs
in arbitrary dimension.

The current paper is an extension of conference proceedings from CCCG 2018~\cite{belton2018learning}.
We extend the proceedings paper in the following ways:
(1) we revise proofs for clarity;
(2) we extend our algorithms for graph reconstruction to~$\R^d$;
(3) we expand our literature review to include a discussion of recent results;
(4) we publicly release code for our
algorithm\footnote{The code is available in a git repo hosted on GitHub:
\url{https://github.com/compTAG/reconstruction}.}; and
(5) we provide an experimental section to demonstrate the implementation.

%----------------------- Problem Def --------------------------------------
\section{Preliminaries}
\label{sec:preliminary}
 We begin by summarizing the necessary
background information, but refer the reader  to
\cite{edelsbrunner2010computational} for a more comprehensive overview of
computational topology.

\paragraph{Axis-Aligned Directions}
When considering a standard unit basis vector of~$\R^d$ in dimension $i \in
\{1,\ldots, d\}$, we
will write $e_i$ to denote the direction.
\paragraph{Plane Graphs}
Among our main objects of study are \emph{plane graphs} with straight-line embeddings
(referred to simply as \emph{plane graphs} throughout this paper). A plane graph
in $\R^2$ is
a set of vertices and a set of straight-line connections between pairs
of vertices called edges (denoted by $V$ and $E$ respectively)
such that no two edges in the embedding cross. We frequently
denote $|V|=n$ as the number of vertices in $\simComp$.
Throughout this paper, we make assumptions about the positioning
of vertices in graphs.

\begin{assumption}\label{assumpt:gp}
Let $\simComp$ be a graph embedded in
$\R^d$ with vertices $V$.
We assume that for all $x, y\in V$ where $x= (x_1, \ldots, x_d)$ and
\mbox{$y =(y_1, \ldots, y_d)$}, $x_i \neq y_i$ for any $i \in \{1, \ldots, d\}$.
Furthermore, we assume that no three vertices are collinear when projected
onto the subspace spanned by $e_1$ and $e_2$.
\end{assumption}

\paragraph{Height Filtration}
Let $\simComp$ be a plane graph. Consider a direction~$\dir$ in the unit sphere $\sph^{d-1}$ in
$\R^d$; we define the \emph{lower-star filtration}
with respect to direction $\dir$ in two steps. First, let
$\hFiltFun{\dir}: \simComp\rightarrow \R$ be defined for a simplex
$\sigma \subseteq \simComp$ by~\mbox{$\hFiltFun{\dir}(\sigma) = \max_{v \in \sigma}
\dprod{v}{\dir},$} where~$\dprod{v}{\dir}$ is the inner (dot) product and measures
height of $v$ in the direction of $\dir$, since~$\dir$ is a unit vector. Thus, the height
$\hFiltFun{\dir}(\sigma)$
 is the maximum height of all vertices in~$\sigma$.
Then, for each $t \in \R$, the
subcomplex~$\simComp_t:=\hFiltFunT{\dir}{(-\infty,t]}$ is composed of all
simplices that lie entirely below or at height~$t$, with respect to
direction $\dir$. Notice~$\simComp_r \subseteq \simComp_t$ for all~$r \leq t$
and~$\simComp_r=\simComp_t$ if no vertex has height in the interval $[r,t]$.
The sequence of all such subcomplexes, indexed by~$\R$, is the
\emph{height filtration} with respect to~$\dir$, denoted~$\hFilt{\dir} := \hFiltComplex{\dir}{\simComp}$.
Notice that the complex changes a finite number of times in this filtration.
We note that the lower-star filtration is the discrete analog to a
lower-levelset filtration, where the complex is intersected with a raising
closed half-plane.

\paragraph{Augmented Persistence Diagrams}
The persistence diagram for a filtered simplicial complex $K$
is a summary of the homology groups
as the parameter ranges from $-\infty$ to $\infty$; in particular,
the persistence diagram is a set of birth-death pairs of the
form~$(b,d)$ in the extended plane, each with a
corresponding dimension $k\in \Z_{\geq 0}$.  In particular, each pair
represents an independent generator of the $k$-th
homology group $H_k(K_t)$ for $t \in [b,d)$.
For technical reasons, all points in the diagonal~$y=x$ are also
included with infinite multiplicity.
In~\cite{belton2018learning}, we introduced the notion of the \emph{augmented
persistence diagrams} (APDs), which is persistence diagram with an additional
finite multi-set of points on the diagonal.  These points can be considered as
the points explicitly computed in an algorithm (such as the matrix reduction
described in~\cite{edelsbrunner2002topological}), but a formal definition using
the paired simplices of $K$ is provided in \cite[\S $2.5$]{mickaPhD}.
In the remainder of the paper,
when we use ``diagram,'' we mean APD; we write ``persistence diagram'' when we
mean the traditionally considered diagram. We denote the space of all APDs by $\mathcal{D}$.

For height filtrations of a graph embedded in $\R^d$, a
zero-dimensional birth
occurs when the height filtration discovers a new
vertex, representing a new connected component. A one-dimensional birth occurs
when a one-cycle is completed.
Zero-dimensional deaths correspond to connected components merging.
One-dimensional deaths are all at $\infty$.
All higher-dimensional homology groups are trivial.

For a direction $\dir \in \sph^{d-1}$, let the \emph{directional augmented
persistence diagram}~$\dgm{}{\hFiltComplex{\dir}{K}}$
 be the set of
birth-death pairs  from the height
filtration~$\hFiltComplex{\dir}{K}$.
Since each point in $\dgm{}{\hFiltComplex{\dir}{K}}$ has an associated dimension, we may restrict
our attention to the points in a specified dimension.  In particular, for $i \in
\Z$, let~$\dgm{i}{\hFiltComplex{\dir}{K}}$ be the diagram restricted to the
points with dimension $i$.\footnote{
    The diagram $\dgm{i}{\hFiltComplex{\dir}{K}}$ is actually a sub-diagram of
	$\dgm{}{\hFiltComplex{\dir}{K}}$.  As such, the set of all
    diagrams for a given direction is counted as one diagram, not one for each
    dimension.
}
As with the height filtration, we
simplify notation and define~$\dgm{i}{\dir} :=
\dgm{i}{\hFiltComplex{\dir}{K}}$ when the complex is clear from context.
We denote~$\beta_i$ to be the $i$-th Betti
number, i.e., the rank of the $i$-th homology group.
In general, the complexity of computing a diagram is matrix
multiplication time, with respect to the number $n$ of simplicies in the filtration;
that is, the complexity is $\Theta(n^{\omega})$, where $\omega$ corresponds
to the smallest known exponent for matrix multiplication time.
In some cases (e.g., for
computing $\dgm{0}{F}$ or $\dgm{d-1}{F}$ when $F$ is a height filtration
in~$\R^d$), the computation time is $\Theta(n\alpha(n))$, where $\alpha$ is the
inverse Ackermann function.

In what follows, for the unknown complex $K$,  we assume that we have an \emph{oracle}
$\mathcal{O}$ that takes a direction~$\dir \in \sph^{d-1}$ and produces the
diagram $\dgm{}{\dir}$
in that direction.  Additionally, we may restrict the
diagram to specific dimension(s) by specifying a dimension $i\in \Z$
and denoting the (sub-)diagram as $\dgm{i}{\dir}$.

We define $\persComp$ such that $\Theta(\persComp)$ is
the time complexity for $\mathcal{O}$ to return this~diagram.  Notice that
$\persComp = \Omega(| \dgm{}{\dir} |)$, where $|
\dgm{}{\dir} |$ denotes the number of off-diagonal points in~$\dgm{}{\dir}$.

Next, we state a lemma relating birth-death pairs.
We omit the proof, but refer the reader
to~\cite[pp.~$120$--$121$ of \S $V.4$]{edelsbrunner2010computational} for more
details.

\begin{lemma}[Adding a Simplex]
    Let  $k \in \Z_{\geq 0}$.
    Let $L \subset K$ be
    simplicial complexes that differ by a single
    $k$-simplex.
    Then, exactly one of the following is true:
    \begin{enumerate}
        \item $\beta_{k}(K) = \beta_{k}(L) + 1$,
        \item $\beta_{k-1}(K) = \beta_{k-1}(L)-1$.
    \end{enumerate}
\label{lem:addSimp}
\end{lemma}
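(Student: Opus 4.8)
The plan is to fix a coefficient field, work at the chain level, and track exactly which cycle and boundary ranks can change when the single $k$-simplex $\sigma$ is adjoined to $L$ to form $K = L \cup \{\sigma\}$. Write $Z_j$ and $B_j$ for the degree-$j$ cycle and boundary groups, so $\beta_j = \dim Z_j - \dim B_j$. First I would record the structural reductions. Since $L$ is a subcomplex, all proper faces of $\sigma$ already lie in $L$, so $\partial_k\sigma$ is a genuine chain in $C_{k-1}(L)$ and, as $\partial\partial=0$, a cycle: $\partial_k\sigma \in Z_{k-1}(L)$. Because no simplex of any dimension other than $k$ is added, the groups $C_{k-1}, C_{k+1}$ and the maps $\partial_{k-1}, \partial_{k+1}$ are untouched, so $Z_{k-1}$ and $B_k$ are unchanged. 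Hence the only ranks that can move are $\dim Z_k$, which controls $\beta_k$, and $\dim B_{k-1}$, which controls $\beta_{k-1}$.

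The heart of the argument is a rank--nullity count on the extended boundary map $\partial_k\colon C_k(K)\to C_{k-1}(K)$, whose kernel is $Z_k(K)$ and whose image is $B_{k-1}(K)$. This gives $\dim C_k(K) = \dim Z_k(K) + \dim B_{k-1}(K)$, and the same identity for $L$; subtracting and using $\dim C_k(K) = \dim C_k(L)+1$ yields
\[ \bigl(\dim Z_k(K) - \dim Z_k(L)\bigr) + \bigl(\dim B_{k-1}(K) - \dim B_{k-1}(L)\bigr) = 1. \]
Both parenthesized increments are nonnegative, because $Z_k(L)\subseteq Z_k(K)$ and $B_{k-1}(L)\subseteq B_{k-1}(K)$; two nonnegative integers summing to $1$ force exactly one of them to equal $1$ and the other $0$. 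This is precisely the ``exactly one'' dichotomy. To identify which case occurs, I would note that $B_{k-1}(K)$ is spanned by $B_{k-1}(L)$ together with $\partial_k\sigma$, so $\dim B_{k-1}$ jumps by $1$ exactly when $\partial_k\sigma \notin B_{k-1}(L)$, and by $0$ otherwise.

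Assembling the cases gives the stated alternatives. If $\partial_k\sigma \in B_{k-1}(L)$, then $\dim B_{k-1}$ is unchanged and $\dim Z_k$ rises by $1$, so $\beta_k(K) = \beta_k(L)+1$ with $\beta_{k-1}$ fixed, which is conclusion~(1); if $\partial_k\sigma \notin B_{k-1}(L)$, then $\dim B_{k-1}$ rises by $1$ and $\dim Z_k$ is unchanged, so $\beta_{k-1}(K) = \beta_{k-1}(L)-1$ with $\beta_k$ fixed, which is conclusion~(2). A more structural route leads to the same split: the long exact sequence of the pair $(K,L)$ together with the fact that the relative chain complex is the single generator $\sigma$ in degree $k$ (so $H_k(K,L)$ is one-dimensional and the other relative groups vanish) reduces everything to whether the connecting map $H_k(K,L)\to H_{k-1}(L)$ is zero or injective.

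The step I expect to require the most care is establishing ``exactly one'' rather than merely ``at least one''; the rank--nullity identity, combined with the two subgroup inclusions that make each increment nonnegative, is exactly what pins the total change at $1$ and distributes it to a single group. The choice of a field is the technical hypothesis that keeps this clean---it lets $\beta_j$ be read as a vector-space dimension and prevents the torsion phenomena that can muddy the analogous count over $\Z$---so throughout I would fix a field, consistent with the persistence framework used in the rest of the paper.
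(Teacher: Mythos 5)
Your proof is correct: the paper itself omits the proof of this lemma and simply cites Edelsbrunner and Harer (\S V.4), and your rank--nullity argument on $\partial_k\colon C_k \to C_{k-1}$, together with the dichotomy on whether $\partial_k\sigma$ already lies in $B_{k-1}(L)$, is exactly the standard argument given there (the positive/negative simplex classification). Your observation that the two nonnegative increments $\dim Z_k$ and $\dim B_{k-1}$ must sum to $1$ cleanly establishes the ``exactly one'' claim, and fixing field coefficients is the right hypothesis, consistent with the persistence setting of the paper.
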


As a result of \lemref{addSimp},
we can construct our filtration by adding one simplex at a time and
form a bijection between simplices of $K$ and birth-death events in the resulting
APD. If $G$ is any graph, then the maximum number of edges in $G$ is~$n(n-1)/2$,
and so \mbox{$|E|=O(n^2)$.}
In the
case when $G$ is a plane graph,~\mbox{$|E|=O(n)$} due to the planarity of $G$.
Furthermore, an APD will have at least~$n$ points from the
vertices in $G$ corresponding to births in the zero-dimensional diagram.
These
observations give us the following~corollary on the size of APDs for graphs.

\begin{corollary}[Size of Augmented Persistence Diagrams]
Let $G$ be an embedded graph and $n$ be the number of vertices in $G$. Then
an augmented persistence diagram has $O(n^2)$ birth-death pairs.
In the case when $G$ is a plane graph, the diagram
has $\Theta(n)$ birth-death pairs.
\label{rcor:PDpoints}
\end{corollary}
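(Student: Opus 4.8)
The plan is to count the simplices of the filtration and transfer that count to the birth-death pairs using the bijection furnished by \lemref{addSimp}. First I would observe that a graph $G$, viewed as a simplicial complex, consists only of $0$-simplices (its $n$ vertices) and $1$-simplices (its edges), so the total number of simplices is exactly $n + |E|$. For any height filtration, \lemref{addSimp} tells us that inserting the simplices one at a time makes each insertion either a birth (raising some $\beta_k$) or a death (lowering some $\beta_{k-1}$); pairing these events shows that the number of off-diagonal and augmented points in the APD is $O(n + |E|)$.

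For the general upper bound I would bound $|E|$ by the edge count of the complete graph on $n$ vertices, namely $|E| \leq n(n-1)/2 = O(n^2)$. Hence the number of birth-death pairs is $n + O(n^2) = O(n^2)$, which establishes the first claim.

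For the plane-graph case I would invoke planarity through Euler's formula, which gives $|E| \leq 3n - 6 = O(n)$, so the total simplex count is $n + O(n) = O(n)$; this yields the upper bound $O(n)$ on the number of birth-death pairs. For the matching lower bound I would use the observation already recorded before the statement: each of the $n$ vertices is discovered at its own height and contributes a distinct $0$-dimensional birth, so the zero-dimensional sub-diagram alone contains at least $n$ points, giving $\Omega(n)$ birth-death pairs overall. Combining the two bounds produces $\Theta(n)$.

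The only delicate point is the counting step itself, since the lemma pairs \emph{simplices} with birth-death \emph{events}, whereas the corollary counts birth-death \emph{pairs}: a finite-persistence pair accounts for two simplices while an essential class accounts for one. I would therefore state explicitly that the number of diagram points is at most the number of simplices and at least the number of $0$-dimensional births; this sandwich is all that the asymptotic $O(\cdot)$ and $\Theta(\cdot)$ claims require, so the factor-of-two discrepancy affects only constants and leaves the conclusions intact.
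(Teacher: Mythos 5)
Your proposal is correct and follows essentially the same route as the paper: the paper likewise derives the corollary from the simplex--event bijection of \lemref{addSimp}, the bound $|E| \leq n(n-1)/2$ for general graphs, planarity giving $|E| = O(n)$, and the $n$ zero-dimensional births supplying the $\Omega(n)$ lower bound. Your explicit invocation of Euler's formula ($|E| \leq 3n-6$) and your remark distinguishing birth-death \emph{pairs} from birth-death \emph{events} merely make precise what the paper leaves implicit, without changing the argument.
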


\section{Related Work}
In~\cite{turner2014persistent}
Turner et al., introduced the \emph{persistent homology transform (PHT)}
that represents a shape in $\R^d$---such as a simplicial complex---as a
family of persistence diagrams, parameterized by $\sph^{d-1}$.
In particular, the diagram for parameter $s \in \sph^{d-1}$ is the
persistence diagram defined by the height filtration in direction $s$.
The \emph{Euler characteristic transform (ECT)} is defined similarly
and maps a shape to a parameterized family of Euler characteristic curves~(ECCs),
where the ECC is the graph of Euler characteristic by filtration parameter, for
the same directional height filtrations.
Turner et~al.\ show that both of these maps are injective
for simplicial complexes in $\R^2$ or $\R^3$.
Recently, variations of the PHT and ECT have attracted interest in other
research domains and researchers are realizing the potential of persistent
homology as an effective data descriptor.
For example, Crawford et~al.~\cite{crawford2019predicting} introduces the
\emph{smooth Euler characteristic transform (SECT)} as a method of predicting
clinical outcomes using MRIs from patients with glioblastoma multiforme.
In~\cite{hofer2017deep}, persistence diagrams are used as features in deep
neural networks for classification of surfaces and graphs.
Additional injectivity and representation results for ECT and other topological
transforms are studied in~\cite{curry2018directions,ghrist2018persistent}.
Furthermore, a recent survey by Oudot and Solomon explores the current
state of inverse problems in topological persistence as a potential tool for
producing explainable data descriptors~\cite{outdot2018inverse}.

In order to leverage the injectivity of the PHT for shape comparison, two approaches can
be taken:
(1) show that the {PHT} has a finite representation;
(2) provide an algorithm that reconstructs the shape from a finite set of
directions.
In fact, (2) is a harder problem than (1), since an algorithm for reconstruction
will need to define a finite representation of the PHT.
One method to tackle approach (1) is to observe that diagrams only provide new information when a transposition
in the ordering of the filtration occurs; this observation is a direct result
of~\cite{cohen2007stability} and  guides the intuition
behind the current paper, as well
as~\cite{belton2018learning,curry2018directions,turner2014persistent}.
For example, consider a finite geometric simplicial complex in~$\R^d$ for
some $d \geq 2$.
Since transpositions can only happen when two vertices are transposed in the
filter, the set of directions in $\sph^{d-1}$ for which
two vertices occur at the same height is finite.
In other words, there are two hemispheres of~$\sph^{d-1}$
for two given vertices, say~$v$ and~$w$: one where $v$ is seen before $w$ and one where~$w$ is
seen before~$v$.
We take the hemispheres for all ${n}\choose{2}$ pairs of vertices
and consider
the regions for which the ordering is consistent.
For each such region, the persistence diagram continuously varies and no
transpositions (or `knees') are witnessed~\cite{cohen2006vines}.
In the current paper, we take approach (2) and show that we can use an oracle
to select a finite set of directions~$P \subset \sph^{d-1}$ that allow us to
reconstruct the original complex from the directional augmented persistence diagrams
from directions in~$P$.
In particular,  we
prove that a quadratic number of directions (with respect to the number of
vertices) is sufficient to reconstruct
a graph.

%----------------------- Vertex Reconstruction -----------------------------
\section{Vertex Reconstruction}
\label{sec:vRec}

Next, we present an algorithm for recovering the locations of vertices of an
embedded graph. We begin with a plane graph~$\simComp$, where we are able to
use three directional augmented persistence diagrams. We then extend
this method for any embedded graph in $\R^d$, using $d+1$ directional augmented
persistence diagrams.

%----------------------Technical details ------------------------------------

\subsection{Vertex Reconstruction for Plane Graphs}

The intuition behind vertex reconstruction is
that for each direction, we identify the lines on which the vertices of
$\simComp$  must lie. We show how to choose specific directions so that we can
identify all vertex locations by searching for points in the plane where three
lines intersect. We call these lines \emph{filtration~lines}:

\begin{definition}[Filtration Hyperplanes and Filtration Lines]
Given a direction~\mbox{$\dir \in \sph^{d-1}$} and a height $h \in~\R$, the
\emph{filtration hyperplane at height $h$} is the~$(d-1)$-dimensional
hyperplane, denoted $\pLine{\dir}{h}$,
through point~$h*\dir$ and perpendicular to direction $\dir$, where $*$ denotes
scalar multiplication. Given a finite
set of vertices~$V \subset~\R^d$, the \emph{filtration hyperplanes of~$V$}
are the set
of hyperplanes
$$
    \pLines{\dir}{V} := \{ \pLine{\dir}{\hFiltFun{\dir}(v)} \}_{v \in V}.
$$
In the special case when $d=2$, we refer to filtration hyperplanes
as \emph{filtration lines}.
\label{def:filtLines}
\end{definition}
By construction, all hyperplanes in $\pLines{\dir}{V}$ are parallel, and
$\pLine{\dir}{h}=\pLine{-\dir}{-h}$. In particular, if $v \in
V$, where~$V$ is the vertex set of a plane graph $\simComp$,
then the line $\pLine{\dir}{\hFiltFun{\dir}(v)}$ is perpendicular to $\dir$
and occurs at the height where the
filtration~$\hFilt{\dir}(\simComp)$ includes~$v$ for the first time.
In what follows, we adopt the following notation:
given a direction $s_i \in \sph^1$ and a point~$p \in \R^2$,
define~\mbox{$\simplePLine{i}{p} := \pLine{\dir_i}{\hFiltFun{\dir_i}(p)}$} as a way to
simplify notation.

By \lemref{addSimp}, the births in the zero-dimensional augmented persistence
diagram are in one-to-one correspondence with the vertices of the plane graph~$\simComp$.  
This means that, given a filtration line $\pLine{\dir}{h}$, exactly one vertex in $V$ lies 
on~$\pLine{\dir}{h}$. Using filtration lines from three directions, we show a one-to-one
correspondence between three-way intersections  of filtration lines and the
vertices in $\simComp$ in the next~lemma:

\begin{lemma}[Vertex Existence]\label{lem:vertExist}
    Let~$\simComp=(V,E)$ be a plane graph and let~\mbox{$n=|V|$.}  Let $\dir_1,\dir_2
    \in \sph^1$ be linearly independent and further suppose that
    $\pLines{\dir_1}{V}$ and~$\pLines{\dir_2}{V}$ each contain~$n$ lines. Let $A$
    be the set of $n^2$ intersection points between lines in~$\pLines{\dir_1}{V}$ and
    in~$\pLines{\dir_2}{V}$. Let $\dir_3 \in \sph^1$ such that each $a \in A$ has a
    unique height in direction $\dir_3$. Then, the following equality holds:
    $V = \pLines{\dir_3}{V} \cap A$.
\end{lemma}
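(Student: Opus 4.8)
The plan is to prove the set equality by establishing the two inclusions separately. Throughout, I read $\pLines{\dir_3}{V} \cap A$ as the set of points $a \in A$ that lie on at least one line of $\pLines{\dir_3}{V}$, i.e.\ on the union $\bigcup_{\ell \in \pLines{\dir_3}{V}} \ell$.

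For the inclusion $V \subseteq \pLines{\dir_3}{V} \cap A$, I would fix a vertex $v \in V$ and show it belongs both to $A$ and to the union of the $\dir_3$-filtration lines. By Definition~\ref{def:filtLines}, $v$ lies on $\simplePLine{1}{v} \in \pLines{\dir_1}{V}$ and on $\simplePLine{2}{v} \in \pLines{\dir_2}{V}$. Since $\dir_1$ and $\dir_2$ are linearly independent, these two non-parallel lines meet in exactly one point, which must be $v$; hence $v$ is an intersection point of a line from each family, so $v \in A$. Likewise $v$ lies on $\simplePLine{3}{v} \in \pLines{\dir_3}{V}$, so $v \in \pLines{\dir_3}{V} \cap A$. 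This direction is essentially immediate from the definition of filtration lines.

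For the reverse inclusion $\pLines{\dir_3}{V} \cap A \subseteq V$, which I expect to be the crux, I would take $a \in A$ lying on some line $\ell \in \pLines{\dir_3}{V}$. By definition $\ell = \simplePLine{3}{w}$ for some vertex $w \in V$, and every point of $\ell$ has height $\hFiltFun{\dir_3}(w)$ in direction $\dir_3$; in particular $\hFiltFun{\dir_3}(a) = \hFiltFun{\dir_3}(w)$. By the forward inclusion already proved, $w \in V \subseteq A$, so $a$ and $w$ are two points of $A$ sharing the same height in direction $\dir_3$. The hypothesis that each point of $A$ has a unique height in direction $\dir_3$ says precisely that the map $a' \mapsto \hFiltFun{\dir_3}(a')$ is injective on $A$; therefore $a = w$, and in particular $a \in V$.

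The only real obstacle is recognizing that the hypothesis on $\dir_3$ does exactly the work needed: it guarantees that each filtration line in direction $\dir_3$ contains at most one point of $A$, and since the line $\simplePLine{3}{w}$ already passes through the vertex $w$ (which lies in $A$), that single point is forced to be $w$ itself. Two auxiliary facts should be checked in passing so that the counting is consistent: first, that the $n^2$ points of $A$ are genuinely distinct, since a point of $A$ determines a unique line from each parallel family (two lines of the same family cannot share a point); and second, that the hypothesis that $\pLines{\dir_1}{V}$ and $\pLines{\dir_2}{V}$ each contain $n$ lines, together with the vertex--birth correspondence from \lemref{addSimp}, yields a clean bijection between vertices and filtration lines in each direction, so that no vertex is lost when forming the families.
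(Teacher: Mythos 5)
Your proof is correct, and your forward inclusion $V \subseteq \pLines{\dir_3}{V} \cap A$ matches the paper's nearly verbatim: a vertex $v$ lies on $\simplePLine{i}{v}$ for each $i$, hence lies in $A$ and on a line of $\pLines{\dir_3}{V}$. Where you genuinely diverge is the reverse inclusion, and your route is the more direct one. The paper argues by contradiction at the level of lines: it observes that $\pLines{\dir_3}{V}$ has $n$ lines, supposes some $\ell \in \pLines{\dir_3}{V}$ misses $A$, and contradicts this via the vertex $v$ with $\ell = \simplePLine{3}{v}$ and $v \in V \subseteq A$. As written, that argument establishes only that each $\ell \cap A$ is nonempty, and it leaves implicit the step actually needed for $\supseteq$: that each line of $\pLines{\dir_3}{V}$, being a level set of the height map $p \mapsto \dprod{p}{\dir_3}$, contains \emph{at most} one point of $A$ by the uniqueness hypothesis, so the single point of $A$ on $\simplePLine{3}{v}$ is forced to be $v$ itself. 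Your argument makes precisely this step explicit and dispenses with the contradiction: given $a \in A$ on $\ell = \simplePLine{3}{w}$, both $a$ and $w$ are points of $A$ at the same $\dir_3$-height, so injectivity of the height map on $A$ yields $a = w \in V$ pointwise. What the paper's counting framing buys is the companion fact that $\pLines{\dir_3}{V}$ consists of exactly $n$ lines each meeting $A$ in exactly one point, which the reconstruction algorithm in \thmref{intComp} exploits when pairing the $i$-th line of $\pLines{e_2}{V}$ with the $i$-th line of $\pLines{\dir}{V}$; what your version buys is a complete and tighter proof of the set equality the lemma actually asserts. Your two auxiliary checks (distinctness of the $n^2$ points of $A$, and the vertex--line correspondence from \lemref{addSimp}) are already built into the lemma's hypotheses, so they are harmless but not required.
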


\begin{proof}
    We prove this equality in two steps.
    First, we prove \mbox{$V \subseteq \pLines{\dir_3}{V} \cap A$.}
    Let~\mbox{$v \in V$.}  Then, for $i \in \{1,2,3\}$, there exists
    $\simplePLine{i}{v} \in \pLines{\dir_i}{V}$ such that~\mbox{$v \in
    \simplePLine{i}{v}$.}  Thus, $v \in \simplePLine{1}{v} \cap
    \simplePLine{2}{v} \cap \simplePLine{3}{v}$ and
    $$\simplePLine{1}{v} \cap \simplePLine{2}{v} \cap \simplePLine{3}{v}
    = \simplePLine{1}{v} \cap \bigg(\simplePLine{2}{v} \cap
    \simplePLine{3}{v}\bigg)
    \subset \pLines{\dir_i}{V} \cap A,$$
    by the definitions of filtration lines and $A$.
    Thus, $V  \subseteq \pLines{\dir_3}{V} \cap A$.

    Next, we prove $V \supseteq \pLines{\dir_3}{V} \cap A$.
    Since each $v \in V \subseteq A$ has a unique height in direction $s_3$ and
since every line of~$\pLines{\dir_3}{V}$ contains a vertex, we
    know that~$\pLines{\dir_3}{V}$ has $n$ lines.  Thus, we claim that each of
    these parallel lines intersects~$A$.
    Assume, for contradiction, that
    there exists~$\ell \in~\pLines{\dir_3}{V}$ such that $A \cap \ell = \emptyset$.
    Since $\ell$ is a filtration line in direction~$\dir_3$, there exists~\mbox{$v \in V$} such 
that~$\ell = \simplePLine{3}{v}$, meaning that this $v$ lies on $\ell$.
    However, $v \in  A$ since~\mbox{$V \subseteq \pLines{\dir_3}{V} \cap A \subseteq
    A$},
    contradicting the hypothesis that~$A\cap \ell = \emptyset$.
\end{proof}

If we generate vertical lines, $L_V = \pLines{e_1}{V}$, and
horizontal lines,~\mbox{$L_H = \pLines{e_2}{V}$,}
for our first two directions, then only
a finite number of directions in $\sph^1$ have been eliminated for the choice of
$s_3$.
In the next lemma, we choose a specific third direction by considering a
bounding region defined by
the largest distance between any two lines in $L_V$ and
smallest distance between any two consecutive lines in $L_H$.
Then, we pick the third direction so that if one of the corresponding
lines intersects the bottom left corner of this region then it will also intersect
the along the right edge of the region. In~\figref{vert}, the third direction
was computed using this procedure with the region having a width that is the length
between the left most and right most vertical lines, and height that is the length
between the top two horizontal lines. Next, we give a more precise description
of the vertex localization procedure.
\begin{figure}
\begin{center}
\includegraphics{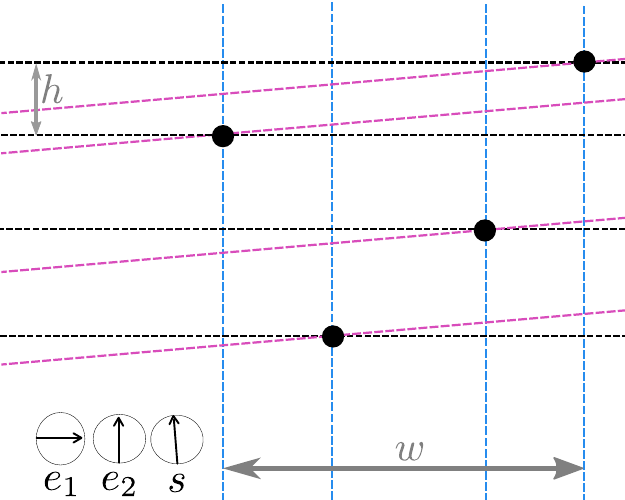}
    \caption{A vertex set $V$ of size four, with three sets of filtration lines.
    Here, notice that $e_1 \in \sph^1$ and $e_2 \in \sph^1$ are linearly
independent, and the third direction~$\dir$
satisfies the assumptions of \lemref{vertExist}.
The lines of~$\pLines{e_1}{V}$ are
the blue vertical lines,~$\pLines{e_2}{V}$ are the black horizontal lines,
    and~$\pLines{\dir}{V}$ are the pink diagonal lines. The three-way intersection
points (one from each set of filtration lines) is in one-to-one
correspondence to the vertices in $V$.
Lines~\ref{line:vert:2sort}
through~\ref{line:vert:lastdir} in Algorithm~\ref{alg:vert_recon} provide details
for finding $s_3$ using the width of the
vertical lines (marked~$w$) and the minimum height difference between horizontal
    lines (marked $h$).}
\label{fig:vert}
\end{center}
\end{figure}

\begin{lemma}[Vertex Localization]
    Let $L_H$ and $L_V$ be~$n$ horizontal and $n$ vertical lines, respectively.
    Let~$w$ (and $h$) be the largest (and smallest) distance between two lines of
    $L_V$ (and $L_H$, respectively). Let $B$ be the smallest axis-aligned
    bounding region containing the intersections of lines in $L_H \cup L_V$.
    Let \mbox{$\dir = (w, h/2)/||(w,h/2)||$,} i.e., a unit vector oriented towards the point $(w, h/2)$.
    Any line parallel to $\dir$ can intersect at most one line of $L_H$ in $B$.

    \label{lem:vertex-localization}
\end{lemma}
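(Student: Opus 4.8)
The plan is to reduce the claim to a one-line slope estimate. First I would fix coordinates so that the lines of $L_V$ are the vertical lines $x = a_1, \ldots, a_n$ and the lines of $L_H$ are the horizontal lines $y = b_1, \ldots, b_n$; this is consistent with $L_V = \pLines{e_1}{V}$ and $L_H = \pLines{e_2}{V}$. Under this identification, the largest distance between two lines of $L_V$ is exactly the horizontal extent $w = \max_i a_i - \min_i a_i$, and the smallest distance between two lines of $L_H$ is $h = \min_{j \neq k} |b_j - b_k|$. Since every intersection of lines in $L_H \cup L_V$ is a point $(a_i, b_j)$, the bounding region is $B = [\min_i a_i, \max_i a_i] \times [\min_j b_j, \max_j b_j]$; in particular the $x$-coordinate of any point of $B$ ranges over an interval of length exactly $w$.

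Next I would compute the slope of a line parallel to $\dir$. Since $\dir$ is oriented towards $(w, h/2)$, every such line has the form $y = \tfrac{h}{2w} x + c$ for some $c \in \R$. The key observation is that as $x$ ranges over the width-$w$ interval $[\min_i a_i, \max_i a_i]$, the value of $y$ along this line changes by exactly $\tfrac{h}{2w}\cdot w = \tfrac{h}{2}$. Thus the portion of the line lying inside $B$ spans a range of heights of length only $h/2$.

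Finally I would argue by contradiction. Suppose a line $\ell$ parallel to $\dir$ meets two distinct horizontal lines $y = b_j$ and $y = b_k$ at points inside $B$, at $x$-coordinates $x_1$ and $x_2$ respectively. Both $x_1, x_2$ lie in $[\min_i a_i, \max_i a_i]$, so $|x_1 - x_2| \leq w$, and hence
$$
|b_j - b_k| = \frac{h}{2w}\,|x_1 - x_2| \leq \frac{h}{2w}\cdot w = \frac{h}{2} < h .
$$
This contradicts the definition of $h$ as the smallest distance between two lines of $L_H$, unless $b_j = b_k$. Therefore $\ell$ meets at most one line of $L_H$ inside $B$.

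Since every step is a direct computation, I do not anticipate a genuine obstacle. The only point requiring care is matching the informal quantities $w$ and $h$ to the coordinates above and confirming that the binding constraint comes from the $x$-extent $w$ of $B$ (the heights $b_j$ automatically lie within the vertical extent of $B$, so no separate check is needed there). The factor $h/2 < h$ is exactly the slack that forces the slope $\tfrac{h}{2w}$ to be shallow enough to guarantee the conclusion, which also explains the choice of $h/2$ rather than $h$ in the definition of $\dir$.
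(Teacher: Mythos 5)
Your proof is correct and takes essentially the same approach as the paper's: both argue by contradiction that a line of slope $\frac{h}{2w}$ meeting two horizontal lines inside $B$ would force two lines of $L_H$ at vertical distance at most $\frac{h}{2w}\cdot w = \frac{h}{2} < h$, contradicting the minimality of $h$. If anything, your write-up is more explicit than the paper's, which asserts the final inequality $|y_1 - y_2| < h$ without spelling out the intermediate computation.
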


\begin{proof}
    Note that, by definition, $\dir$ is a vector in the direction that is at a slightly
    smaller angle than the diagonal of the region with width $w$
    and height~$h$.
    Assume, by contradiction, that a line parallel to $\dir$ can intersect two
    lines of $L_H$ within~$B$. Specifically, let $\ell_1, \ell_2 \in L_H$ and
    let~$\ell_{\dir}$ be a line parallel to $\dir$ such that the points
    $\ell_i \cap \ell_{\dir} = (x_i, y_i)$ for~$i = \{1, 2\}$ are the two such intersection
    points within~$B$. Since the lines of $L_H$ are horizontal
    and by the definition of~$h$, we observe that $|y_1-y_2|\geq h$.
    Let $w' = |x_1 - x_2|$, and observe $w' \leq w$.  Since the slope of $\ell_s$ is
    $\frac{h}{2w}$, %$(h/2)  / w$,
    we have~$|y_1 - y_2| < h$, which is a contradiction.
\end{proof}

We conclude the discussion of plane graph reconstruction with an algorithm to
compute the coordinates of the vertices
of the original graph in~$\R^2$, using only three diagrams.

\begin{theorem}[Vertex Reconstruction]
    Let~$\simComp$ be a plane graph. We can compute the coordinates of all
    $n$ vertices of~$\simComp$ using three directional augmented persistence diagrams
    in $\Theta(n \log{n} + \persComp)$ time, where $\Theta(\persComp)$ is the time complexity of
    computing a single directional augmented
    persistence~diagram for $G$.
\label{thm:intComp}
\end{theorem}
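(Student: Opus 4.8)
The plan is to turn the two preceding lemmas into a constructive pipeline and then account for the running time of each stage. The theorem claims we can recover all $n$ vertex coordinates from three directional APDs in $\Theta(n\log n + \persComp)$ time, so I would structure the proof as: (1) query the oracle in two axis-aligned directions and read off the filtration lines; (2) construct the third direction $\dir_3$ using \lemref{vertex-localization}; (3) query the oracle a third time and use \lemref{vertExist} to identify exactly those intersection points that are genuine vertices; (4) bound the time of each stage.

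\medskip

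\noindent\textbf{Step-by-step plan.} First I would query $\mathcal{O}$ in directions $e_1$ and $e_2$, obtaining $\dgm{0}{e_1}$ and $\dgm{0}{e_2}$. By \lemref{addSimp}, the zero-dimensional births are in bijection with the vertices of $\simComp$, so the birth heights in direction $e_i$ give exactly the $n$ heights $\{\hFiltFun{e_i}(v)\}_{v\in V}$, and hence the $n$ vertical lines $L_V = \pLines{e_1}{V}$ and the $n$ horizontal lines $L_H = \pLines{e_2}{V}$. These lines satisfy the hypotheses of \lemref{vertExist} (linear independence of $e_1,e_2$ and $n$ lines each), so the $n^2$ pairwise intersections form the set $A$. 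Next I would compute $w$, the largest gap between lines of $L_V$, and $h$, the smallest gap between consecutive lines of $L_H$, and set $\dir_3 = (w,h/2)/\|(w,h/2)\|$ as in \lemref{vertex-localization}. That lemma guarantees no line parallel to $\dir_3$ meets two horizontal lines inside the bounding region $B$; combined with Assumption~\ref{assumpt:gp}, this is what ensures each $a\in A$ has a unique height in direction $\dir_3$, so $\dir_3$ is a valid third direction for \lemref{vertExist}. I would then query $\mathcal{O}$ a third time to obtain $\dgm{0}{\dir_3}$, read off the filtration lines $\pLines{\dir_3}{V}$, and apply \lemref{vertExist} to conclude $V = \pLines{\dir_3}{V}\cap A$, recovering all vertices exactly.

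\medskip

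\noindent\textbf{Timing.} For the complexity bound I would argue as follows. The three oracle calls cost $\Theta(\persComp)$ by definition. Extracting the $n$ birth heights from each zero-dimensional diagram and reading off the filtration lines is linear in the diagram size, hence $O(\persComp)$. Computing $w$ and $h$ requires sorting the heights in each axis-aligned direction, which is the $\Theta(n\log n)$ term; from the sorted order, $w$ and $h$ (and the bounding box $B$) are found in $O(n)$ time. The final vertex identification is the delicate accounting point: naively intersecting $\pLines{\dir_3}{V}$ with all of $A$ is $\Theta(n^2)$, so I would instead exploit that $\pLines{\dir_3}{V}$ also has exactly $n$ lines (shown inside the proof of \lemref{vertExist}) and match each of the $n$ diagonal lines to its unique vertex. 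Since each diagonal line passes through exactly one vertex and each vertex lies at a known $(L_V,L_H)$ grid position, a sort of the $\dir_3$-heights against the sorted grid positions recovers the correspondence in $\Theta(n\log n)$ rather than $\Theta(n^2)$.

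\medskip

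\noindent\textbf{Main obstacle.} The hard part is the final timing step: I must avoid enumerating all $n^2$ points of $A$ and instead show that the true vertices can be located directly from the three sorted lists of filtration lines in $\Theta(n\log n)$ total, matching the claimed bound. Establishing correctness of $\dir_3$ (that each point of $A$, not merely each vertex, has a distinct $\dir_3$-height, so that \lemref{vertExist} applies) is where I expect to lean most heavily on Assumption~\ref{assumpt:gp} and \lemref{vertex-localization}, and I would want to verify carefully that \lemref{vertex-localization} rules out the degenerate coincidences that would otherwise force an $n^2$ search.
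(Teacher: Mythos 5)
Your pipeline coincides with the paper's proof of \thmref{intComp}: query $\dgm{0}{e_1}$ and $\dgm{0}{e_2}$, sort the birth heights in $\Theta(n\log n)$ to obtain $w$ and $h$, choose the third direction via \lemref{vertex-localization}, query the oracle once more, and identify the vertices via \lemref{vertExist}. Your concern about avoiding a $\Theta(n^2)$ sweep over $A$ is also resolved the same way the paper resolves it: the paper never enumerates $A$, but intersects the $i$-th sorted horizontal line of $\pLines{e_2}{V}$ with the $i$-th sorted line of $\pLines{\dir}{V}$ in $\Theta(n)$ time, the correspondence being monotone because the third family of filtration lines has small positive slope and each such line meets only one horizontal line inside $B$. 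Your sort-based matching is a mildly more expensive $\Theta(n\log n)$ variant of the same idea and stays within the claimed bound.

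There is one concrete error: you take the third \emph{query} direction to be $\dir_3=(w,h/2)/\|(w,h/2)\|$ itself. Filtration lines are \emph{perpendicular} to the query direction, so with your choice the lines of $\pLines{\dir_3}{V}$ would have slope $-2w/h$; these steep lines can cross many lines of $L_H$ inside $B$, and \lemref{vertex-localization} gives no guarantee for them, so the hypothesis of \lemref{vertExist} (distinct $\dir_3$-heights on $A$) can fail. In \lemref{vertex-localization}, $(w,h/2)$ is the direction the filtration lines must be \emph{parallel} to; accordingly, the paper queries the oracle in the unit direction perpendicular to $[w,h/2]$. A second, smaller point: \assumptref{gp} is what guarantees $n$ distinct filtration lines in each of the two axis directions; the uniqueness of $\dir$-heights over all of $A$ then follows from the localization lemma (for points on distinct horizontal lines) together with the positive slope (for points sharing a horizontal line), rather than from the assumption directly. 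With the direction corrected, your argument is the paper's proof.
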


\begin{proof}
    We proceed with a constructive proof that is presented as an
    algorithm in \algref{vert_recon}. Let
    $\mathcal{O}$ be an oracle that takes a
    direction~$\dir \in \sph^{1}$ and returns the
    zero-dimensional directional APD for the unknown plane graph~$\simComp$
    in direction~$\dir$ in $\Theta(\persComp)$ time.

    We start with requesting two directional augmented persistence diagrams from
    the oracle, $\dgm{0}{e_1}$ and $\dgm{0}{e_2}$.
    Note that, by the \assumptref{gp}, no two
    vertices of~$\simComp$ share an $x$- or $y$-coordinate.
    By \rcorref{PDpoints}, the
    sets~$\pLines{e_1}{V}$ and~$\pLines{e_2}{V}$ (which we do not explicitly
    construct) each contain~$n$
    distinct lines.
    Let $L_1=\{ h_1, h_2, \ldots, h_n \}$
    be the resulting set of heights of lines in~$\pLines{e_1}{V}$,
    in increasing order. Likewise, let~$L_2=\{ h'_1,
    h'_2, \ldots, h'_n \}$ be the ordered set of heights of lines
    in~$\pLines{e_1}{V}$, also in increasing order.
    As a consequence of \lemref{addSimp}, the birth times in the zero-dimensional
    diagrams are in one-to-one correspondence to heights of the
    filtration lines.
    Thus, we can compute $L_1$ and $L_2$ from the APD in direction
    $\dir$ in $\Theta(n)$~time by iterating through the points in the
    zero-dimensional augmented persistence diagram, then sorting in
    $\Theta(n\log n)$ time.

    Let $A$ be the set of $n^2$ intersection points between the lines in
    $\pLines{e_1}{V}$ and in~$\pLines{e_2}{V}$.  Exactly $n$ of these points
correspond to vertices of $V$.
    The next step is to identify
    a third direction $\dir$ such that each line in~$\pLines{\dir}{V}$ intersects
    with only one point in $A$, which we will use in order to distinguish which
    $n$ intersection points correspond to vertices in $V$.

    Let $w=h_n-h_1$ and let~$h$ be the minimum of $\{ h'_i -h'_{i-1} \}_{i=2}^n$.
    In words, $w$ is the difference between the maximum and minimum heights of
    lines in
    $\pLines{e_1}{V}$  and $h$ is the minimum
    height difference between consecutive lines in~$\pLines{e_2}{V}$; see \figref{vert}.
    Note that we can compute $w$ in~$\Theta(1)$ time
    from $L_1$ and $h$ in $\Theta(n)$ time from $L_2$.
    Let $B$
    be the smallest axis-aligned bounding region containing the intersection
    points~$A$, and let
    $\dir$ be a unit vector perpendicular to the vector~$[w, \frac{h}{2}]$.
    We request the set $\dgm{0}{s}$ from our oracle $\mathcal{O}$.
    As before, the heights of the lines in $\pLines{s}{V}$ are the birth times of
    points in $\dgm{0}{s}$.  We save this set of heights as $L_s$ in $\Theta(n)$
    time, and sort $L_s$ in $\Theta(n \log n)$ time.

    Finally, by~\lemref{vertex-localization},
    any line in~$\pLines{s}{V}$ intersects no more than one line
    of~$\pLines{e_2}{V}$ within $B$. Furthermore, by~\lemref{vertExist},
    the $n$ vertices in $G$ are in one-to-one correspondence with points in
    $\pLines{\dir}{V} \cap A$.
    We compute these three-way intersections of $\pLines{\dir}{V} \cap A$
    by intersecting the $i$-th line
    of~$\pLines{e_2}{V}$ with the $i$-th line of $\pLines{\dir}{V}$ in~$\Theta(n)$~time.

    In total, this algorithm, summarized in \algref{vert_recon},
    uses three directional diagrams, two requested from
    the oracle in Line \ref{line:vert:2oracle} and one
    requested in Line~\ref{line:vert:1oracle}.
    These two lines take
    $\Theta(\persComp)$ time each,
    Lines \ref{line:vert:2sort} and \ref{line:vert:1sort} take $\Theta(n
    \log n)$ time each, and the for loop in Lines
    \ref{line:vert:forstart} through \ref{line:vert:forend}
    takes $\Theta(n)$ time.  All other lines are linear or constant,
    with respect to $n$.
    Thus, the total time complexity
    is~$\Theta(n \log n  + \persComp)$.
\end{proof}
\begin{algorithm}
\caption{Reconstruct Vertices}
\label{alg:vert_recon}
\begin{algorithmic}[1]
    \REQUIRE Oracle $\mathcal{O}$ for an unknown graph $G=(V,E) \subset \R^2$.
    \ENSURE Set of vertex locations.
    \STATE Consult $\mathcal{O}$ to obtain diagrams $\dgm{0}{e_1}$ and $\dgm{0}{e_2}$
        \label{line:vert:2oracle}
    \STATE $L_1 \gets$ birth times from points in $\dgm{0}{e_1}$
    \STATE $L_2 \gets$ birth times from points in $\dgm{0}{e_2}$
    \STATE Sort $L_1$ and $L_2$ in increasing order\label{line:vert:2sort}
    \STATE $w \gets$ maximum minus minimum in $L_1$
    \STATE $h \gets$ minimum gap between two consecutive values in $L_2$
    \STATE $\dir \gets$ a unit vector perpendicular to the vector $[ w, h/2 ]$\label{line:vert:lastdir}
    \STATE Consult $\mathcal{O}$ to obtain diagram $\dgm{0}{s}$ \label{line:vert:1oracle}
    \STATE $L_s \gets$ birth times from points in $\dgm{0}{s}$
    \STATE Sort $L_s$ in increasing order\label{line:vert:1sort}
    \FOR{$i=1, 2, \ldots, n$}\label{line:vert:forstart}
    \STATE $\ell_i \gets$ horizontal line with the $i$-th element of
    $L_2$ as the $y$-coordinate
    \STATE $\ell_i' \gets$ line perpendicular to $s$ at height
    equal to the $i$-th element of $L_s$
    \STATE $v_i = \ell_i \cap \ell_i'$
    \ENDFOR \label{line:vert:forend}
    \RETURN $\{ v_i \}_{i=1}^n$
\end{algorithmic}
\end{algorithm}

\subsection{Vertex Reconstruction in $\R^d$}

The vertex reconstruction algorithm of the previous subsection generalizes to higher
dimensions.  In $\R^d$, a filtration line becomes a \emph{filtration hyperplane},
a~\mbox{$(d-1)$-dimensional} hyperplane that goes through one of the vertices in the
vertex set (and is perpendicular to a given direction). Similar to filtration lines, filtration hyperplanes
generated by a fixed direction are parallel and are in a one-to-one
correspondence with the vertices (for almost all directions).

\begin{lemma}[Generalized Vertex Existence]\label{part:genallIntersect}
    Let~$\simComp=(V,E)$ be a straight-line embedded graph in $\R^d$.
    Let $\dir_1,\dir_2,\ldots, \dir_d$ be linearly independent directions in~$\sph^{d-1}$
    and further suppose
    that $\pLines{\dir_i}{V}$
    contains~$n$ filtration hyperplanes for each \mbox{$i \in \{1, 2, \ldots,d\}$.}
    Choosing one hyperplane in each set~$\pLines{s_i}{V}$,
    the intersection of these hyperplanes is a point.
    Let~$A$ denote the $n^d$ such intersection points.
    Let $\dir_{d+1} \in \sph^{d-1}$ such that each $a \in A$ has a unique height in
    direction~$\dir_{d+1}$.  Then, the following equality holds:
    $V = \pLines{\dir_{d+1}}{V} \cap A$.
\label{lem:genVertExist}
\end{lemma}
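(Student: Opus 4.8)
The plan is to mirror the two-inclusion argument of \lemref{vertExist}, replacing filtration lines by filtration hyperplanes and the pair $\dir_1, \dir_2$ by the $d$-tuple $\dir_1, \ldots, \dir_d$. The one genuinely new ingredient is a linear-algebra observation that justifies the definition of $A$: for any choice of hyperplanes $H_i \in \pLines{\dir_i}{V}$, one from each direction, the intersection $\bigcap_{i=1}^d H_i$ is a single point. Indeed, each $H_i$ has the form $\{x \in \R^d : \dprod{x}{\dir_i} = c_i\}$ for a constant $c_i$, so a point of the intersection solves the linear system whose coefficient matrix has rows $\dir_1, \ldots, \dir_d$. Since these directions are linearly independent, that matrix is invertible and the system has a unique solution. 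Hence each of the $n^d$ choices of hyperplanes determines exactly one point, and $A$ is well-defined.

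First I would establish $V \subseteq \pLines{\dir_{d+1}}{V} \cap A$. Fix $v \in V$. For each $i \in \{1, \ldots, d\}$ there is a filtration hyperplane $\pLine{\dir_i}{\hFiltFun{\dir_i}(v)} \in \pLines{\dir_i}{V}$ passing through $v$. By the observation above, the intersection of these $d$ hyperplanes is a single point, which must therefore be $v$; thus $v \in A$. Since $v$ also lies on the filtration hyperplane $\pLine{\dir_{d+1}}{\hFiltFun{\dir_{d+1}}(v)} \in \pLines{\dir_{d+1}}{V}$, we conclude $v \in \pLines{\dir_{d+1}}{V} \cap A$.

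Next I would prove the reverse inclusion, which is where the hypothesis on $\dir_{d+1}$ carries the weight. The assumption that every $a \in A$ has a unique height in direction $\dir_{d+1}$ says precisely that no two points of $A$ lie on a common hyperplane perpendicular to $\dir_{d+1}$; equivalently, each hyperplane of $\pLines{\dir_{d+1}}{V}$ contains at most one point of $A$. Now $\pLines{\dir_{d+1}}{V}$ consists of the $n$ hyperplanes $\pLine{\dir_{d+1}}{\hFiltFun{\dir_{d+1}}(v)}$ as $v$ ranges over $V$, and by the first inclusion each such hyperplane already contains the point $v \in A$. That vertex is therefore the \emph{unique} point of $A$ on it, so $\pLines{\dir_{d+1}}{V} \cap A = \{\, v : v \in V \,\} = V$. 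This yields the reverse inclusion and, in fact, the full equality in one stroke.

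The step I expect to require the most care is the linear-algebra observation, since it is the only place where linear independence of $\dir_1, \ldots, \dir_d$ is used and where the planar picture of two non-parallel lines meeting in a point must be replaced by the general fact that $d$ hyperplanes with linearly independent normals meet in exactly one point. Everything else is a routine translation of the proof of \lemref{vertExist}, with the uniqueness-of-heights hypothesis on $\dir_{d+1}$ playing exactly the role that the choice of the third direction $\dir_3$ played there.
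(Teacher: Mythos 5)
Your proposal is correct and follows the same two-inclusion skeleton as the paper's proof, but it differs in two ways that are worth noting. First, you prove the claim that choosing one hyperplane from each of $\pLines{\dir_1}{V},\ldots,\pLines{\dir_d}{V}$ yields a single intersection point (invertibility of the matrix with rows $\dir_1,\ldots,\dir_d$); the paper simply asserts this as part of the lemma's setup and never uses linear independence explicitly in the proof body, so your linear-algebra observation fills in the one place that hypothesis actually does work. Second, for the reverse inclusion the paper argues by contradiction, assuming some hyperplane $\ell \in \pLines{\dir_{d+1}}{V}$ misses $A$ and deriving a contradiction from $V \subset A$; strictly speaking this only shows each hyperplane of $\pLines{\dir_{d+1}}{V}$ has \emph{nonempty} intersection with $A$, and the step from there to $\pLines{\dir_{d+1}}{V} \cap A \subseteq V$ is left implicit. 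Your direct argument closes this gap cleanly: the unique-height hypothesis on $\dir_{d+1}$ gives \emph{at most} one point of $A$ per hyperplane, the first inclusion gives at least the defining vertex, so the intersection is exactly $V$. In short, your route is logically the same decomposition but is more self-contained and, in the reverse inclusion, more airtight than the paper's own write-up.
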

\begin{proof}
    This proof follows the same structure of the proof of \lemref{vertExist},
    generalizing $\R^2$ to $\R^d$.

    We prove this equality in two steps.
    First, we prove $V \subseteq \pLines{\dir_{d+1}}{V} \cap A$.
    Let $v \in V$.  Then, for $i \in \{1,2,\ldots, d + 1\}$, there exists
    $\simplePLine{i}{v} \in \pLines{\dir_i}{V}$ such that~\mbox{$v \in
    \simplePLine{i}{v}$.}  Thus, $v \in \cap_{i=1}^{d+1} \simplePLine{i}{v}$, as was to be shown.

    Next, we prove $V \supseteq \pLines{\dir_{d+1}}{V} \cap A$.
    Assume, for contradiction, that
    there exists a filtration hyperplane~$\ell \in~\pLines{\dir_{d+1}}{V}$ such
    that $A \cap \ell = \emptyset$.
    Since~\mbox{$\ell \in \pLines{\dir_{d+1}}{V}$}, there exists a vertex
    $v \in V$ such that~$\ell = \simplePLine{d+1}{v}$ and~$v$ lies on $\ell$.
    However, since $V \subset A$ from above,
    $v$ is in $\cap_{i=1}^d \simplePLine{i}{v} = A$, contradicting the
    hypothesis $A \cap \ell = \emptyset$.
\end{proof}

Just as in the case of plane graphs, we can now describe a method for locating
all vertices. The following lemma is a higher-dimensional analogue
of~\lemref{vertex-localization}.

\begin{lemma}[Generalized Vertex Localization]
    Let $\simComp = (V,E)$ be a straight-line embedded graph in $\R^d$,
    with $n=|V|$.
    Choosing one hyperplane in each set~$\pLines{e_i}{V}$ for $1\le i \le d$,
    the intersection of these hyperplanes is a point.
    Let~$A$ denote the $n^d$ such intersection points.
    Then, we can find a direction $s\in \sph^{d-1}$ such that each
    hyperplane in $\pLines{s}{V}$
    intersects at most one of the $n^d$ points in~$A$ in~$\Theta(dn\log n)$ time.
    \label{lem:genVertLoc}
\end{lemma}
\begin{proof}
    Let~$L_i=\{\hVarForDirI{1},
    \hVarForDirI{2}, \ldots, \hVarForDirI{n} \}$ be the ordered set of heights of hyperplanes
    in~$\pLines{e_i}{V}$.
    Let $\wForDirI = \hVarForDirI{n} - \hVarForDirI{1}$; in other words, $\wForDirI$ is the largest distance between
    any two hyperplanes in the set $\pLines{e_i}{V}$,
    and let $w = \max_{1\le i\le d} \wForDirI$.
    Let~$\hForDirI = \min \{ \hVarForDirI{j} - \hVarForDirI{j-1} \}_{j=2}^n$;
    in other words, $\hForDirI$ is
    the smallest height difference between any two (adjacent)
    hyperplanes in the set $\pLines{e_i}{V}$, and let~$h= \frac{1}{2}\min_{1\le i\le d} \hForDirI$.
    Then, we consider the hyperplane that intersects the origin and each
    point~$we_i + \langle 0,\ldots 0,\frac{h}{d-1} \rangle$ for $i\in \{1,2,\ldots,
    d-1\}$.
    Next, we choose a vector orthogonal to the hyperplane.
    Without loss of generality, we choose
     \[
     x = \Bigl\langle -\frac{1}{w}, \ldots,
    -\frac{1}{w}, \frac{d-1}{h} \Bigr\rangle^{\top}
    \]
    and observe that
     $(we_i + \langle 0,\ldots, 0,\frac{h}{d-1} \rangle) \cdot x = 0$
     for each $i \in \{1,2,\ldots, d-1\}$. Finally, we define~$\dir =
     \frac{x}{||x||}$.

We now show that $\dir$ satisfies the claim that each hyperplane in $\pLines{s}{V}$
    intersects at most one of the $n^d$ points in~$A$ and that we can find
	$\dir$ in
    $\Theta(dn\log n)$ time.
    Let $p=(p_1,p_2, \ldots, p_d)$ and $q=(q_1,q_2, \ldots, q_d)$ be points
    in $A$ with $p \neq q$, and assume, for contradiction,
that they lie on the same
hyperplane in~$\pLines{\dir}{V}$.
Then, $p \cdot \dir = q \cdot \dir$.  By the definition of dot product, we have
    the following equation (after multiplying $\dir$ by $||x||$):
\begin{align*}
    \frac{d-1}{h}(p_d - q_d) - \sum_{i=1}^{d-1} \frac{1}{w} (p_i - q_i) &=0.
% \frac{1}{w} \sum_{i=1}^{d-1} (p_i - q_i) &= \frac{d-1}{h}(p_d - q_d)\\
\end{align*}
    Since $\frac{d-1}{h}$ and $w$ are positive numbers, we can rearrange this
    equality to obtain:
    \begin{align}
    %\frac{1}{w} \left|\sum_{i=1}^{d-1}(p_i - q_i)\right| &= \frac{d-1}{h}
    %    \left|p_d-q_d\right|.\label{eq:equality}
        \left|p_d-q_d\right|
        &=
        \frac{h}{w(d-1)} \left|\sum_{i=1}^{d-1}(p_i - q_i)\right|
        .\label{eq:equality}
\end{align}

    Recall that $h= \frac{1}{2}\min_{1\le i\le d} h_i \leq \frac{1}{2}h_d$. Therefore, we know
    that~\mbox{$2h \leq h_d \leq |p_d - q_d|$.}  Applying \eqnref{equality} to this inequality, we obtain:
\begin{align*}
    2h &\le \frac{h}{w(d-1)} \left|\sum_{i=1}^{d-1}(p_i - q_i)\right|\\
       &\le \frac{h}{w(d-1)} (d-1)\max_{1 \le i \le d-1} |(p_i - q_i)|\\
       &\le \frac{h}{w} w.
\end{align*}
Thus, we have $2h < h$,
which is a contradiction when $n\geq 2$.  For the case where~$n=1$, the vertex
    is $(\hOneForDir{1}, \hOneForDir{2}, \ldots, \hOneForDir{d})$ and the $(d+1)^{\text{st}}$ direction is not needed.

We analyze the complexity of computing $w$ and $h$.
For each direction $e_i$, we perform three steps.
First, we sort the heights of $\pLines{e_i}{V}$ in $\Theta(n \log n)$ time.
Second, we compute $\wForDirI$ in constant time
(as it is the maximum value minus the minimum value of the heights).
Third, we compute $\hForDirI$ in $\Theta(n)$ time.
As there are $d$ dimensions, computing the sets $\{ \wForDirI \}$ and
$\{ \hForDirI \}$ takes~$\Theta(d n\log n)$ time.
Computing $w$ and $h$ from the sets $\{ \wForDirI \}$ and $\{ \hForDirI \}$ is $\Theta(d)$
time.
Thus, the bottleneck is sorting in each direction, which makes the total runtime~$\Theta(dn\log n)$.
\end{proof}

Equipped with the above method for finding a suitable
$(d+1)$\textsuperscript{st} direction to locate vertices in higher dimensions,
we conclude this section with a theorem describing the algorithm to
compute the coordinates of the vertices of the original embedded graph.

\begin{theorem}[Vertex Reconstruction in Higher Dimensions]
    Let~$\simComp$ be a straight-line embedded graph in~$\R^d$ for $d>1$.
    We can can compute the coordinates of all
    $n$ vertices of $\simComp$ using
    $d+1$ directional augmented persistence~diagrams
    in~$\Theta(dn^{d+1} + d\persComp)$ time,
    where $\Theta(\persComp)$ is the time complexity of computing
    a persistence diagram.
\label{thm:genIntComp}
\end{theorem}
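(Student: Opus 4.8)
The plan is to mirror the constructive proof of the planar case (Theorem~\ref{thm:intComp}), replacing the three carefully chosen directions with the $d$ axis-aligned directions $e_1,\ldots,e_d$ together with one extra direction supplied by \lemref{genVertLoc}. Concretely, I would first consult the oracle $\mathcal{O}$ for the $d$ zero-dimensional diagrams $\dgm{0}{e_1},\ldots,\dgm{0}{e_d}$. By \lemref{addSimp}, the zero-dimensional births are in bijection with the vertices, so reading off and sorting the birth heights in each direction yields, for every $i$, the $n$ distinct heights determining the filtration hyperplanes $\pLines{e_i}{V}$; \assumptref{gp} and \rcorref{PDpoints} guarantee there are exactly $n$ distinct hyperplanes per direction. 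Since $e_1,\ldots,e_d$ are linearly independent, choosing one hyperplane from each set gives a single intersection point, and the $n^d$ such points form the candidate set $A$ of \lemref{genVertExist}.

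Next I would invoke \lemref{genVertLoc} to compute, in $\Theta(dn\log n)$ time, a direction $s$ for which every hyperplane of $\pLines{s}{V}$ meets at most one point of $A$, and then query the oracle once more for $\dgm{0}{s}$, extracting and sorting the $n$ heights of $\pLines{s}{V}$. The at-most-one property from \lemref{genVertLoc} is exactly the hypothesis ``each $a\in A$ has a unique height in direction $s$'' required by \lemref{genVertExist}, so the two lemmas chain together to give $V = \pLines{s}{V}\cap A$. Combining them, each of the $n$ hyperplanes in $\pLines{s}{V}$ contains exactly one point of $A$, and that point is a vertex; hence reconstruction reduces to a combinatorial matching step: for each hyperplane $\ell\in\pLines{s}{V}$, scan the points of $A$ to locate the unique $a\in A$ with $\dprod{a}{s}$ equal to the height of $\ell$.

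For the running-time accounting I would sum the pieces: the $d+1$ oracle calls cost $\Theta(d\,\persComp)$; reading, sorting, and running \lemref{genVertLoc} over the $d$ axis-aligned directions costs $\Theta(dn\log n)$; and the matching step dominates, since testing each of the $n$ hyperplanes against each of the $n^d$ candidate points via a $\Theta(d)$-time dot product costs $\Theta(dn^{d+1})$. Summing gives $\Theta(dn^{d+1} + d\,\persComp)$, as claimed. I would also dispatch the degenerate case $n=1$ separately, reading the single vertex directly off the $d$ axis-aligned heights without needing the extra direction, exactly as noted in the proof of \lemref{genVertLoc}.

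The conceptual content is carried entirely by \lemref{genVertExist} and \lemref{genVertLoc}, so the main obstacle is not correctness but an honest complexity analysis: one must confirm that the $\Theta(dn^{d+1})$ matching term is genuinely the bottleneck and that neither the explicit enumeration of the $n^d$ points of $A$ nor the per-point dot products hides a larger cost. A secondary subtlety worth stating explicitly is that the bijection between births and hyperplanes, clean in $\R^2$, survives in $\R^d$: because only $\dgm{0}{\cdot}$ is used and zero-dimensional births still correspond to the first appearance of each vertex under the lower-star filtration, the argument reusing \rcorref{PDpoints} goes through unchanged.
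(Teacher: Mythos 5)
Your proposal matches the paper's own proof essentially step for step: the $d$ axis-aligned diagrams plus the extra direction from \lemref{genVertLoc}, the candidate set $A$ of $n^d$ intersection points, the chaining of \lemref{genVertLoc} into \lemref{genVertExist}, the brute-force $\Theta(dn^{d+1})$ matching, and the same $\Theta(dn^{d+1} + d\persComp)$ accounting. It is correct as written, so no further comparison is needed.
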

\begin{proof}
    We proceed with a constructive proof, generalizing the constructive proof
    from~\thmref{intComp} and \algref{vert_recon}. Let
    $\mathcal{O}$ be an oracle that takes a direction~\mbox{$\dir \in \sph^{d-1}$} and returns the
    $\dgm{0}{s}$ in $\Theta(\persComp)$ time.

    For~$i \in \{1, \dots, d\}$, we use this oracle to obtain $\dgm{0}{e_i}$.
    Note that, by the \assumptref{gp} and \rcorref{PDpoints},
    for each of these directions, we have
    exactly $n$ distinct filtration hyperplanes,
    in one-to-one correspondence with the vertices.  Note that, for a given
    direction $e_i$, we
    store the filtration hyperplanes as a list of the vertex heights.
    Choosing one hyperplane in each direction yields~$d$ pairwise orthogonal
    hyperplanes; their intersection is a point in~$\R^d$ and this point is a
    potential vertex location.  In total, we
    have~$n^d$ potential vertex locations, of which only~$n$ are actual
    vertices. We
    denote this set of~$n^d$ potential vertex locations by~$A$.
    Then, computing these lists of vertex heights takes~$\Theta(\persComp)$ 
    time per dimension to account for computing and listing
    the points of the APD.

    Let $\dir$ be chosen as in \lemref{genVertLoc} in~$\Theta(dn\log n)$ time.
    By \lemref{genVertLoc}, each hyperplane~$\simplePLine{\dir}{v}$ intersects
    at most one point in~$A$ for each~$v\in V$.  Thus, by
    \lemref{genVertExist}, there are exactly~$n$ distinct
    intersections between~$\pLines{\dir}{V}$ and $A$, in one-to-one correspondence
    with the~$n$ vertices of $\simComp$.

    Then, to identify vertex locations in~$\R^d$, we employ the following
    brute force algorithm. We
    check each element~$v\in A$ for intersections with any
    hyperplane~$\ell \in \pLines{\dir}{V}$. Since~$|A| = n^d$
    and~$|\pLines{\dir}{V}| = n$,
    we have~$n^{d+1}$ checks that we must perform, with each
    check taking $\Theta(d)$ time. Thus, the total time complexity
    of calculating $V$ from the~$d+1$ sets of filtration hyperplanes
    is~$\Theta(dn^{d+1})$ and no additional augmented
    persistence diagrams are computed.

    In total, this algorithm uses $d+1$ directional diagrams. The time
    complexity of constructing the~$d+1$ sets of filtration hyperplanes
    is~$\Theta(dn \log n + d\persComp)$, and an additional~$\Theta(dn^{d+1})$ time
    to compute the actual vertex locations.
    Thus, the total time complexity is~$\Theta(dn^{d+1} + d\persComp)$.
\end{proof}

%----------------------- Edge Reconstruction -------------------------------
\section{Edge Reconstruction}
\label{sec:eRec}

Given the vertices constructed in \secref{vRec}, we describe how to reconstruct
the edges in an embedded graph using~$n^2-n$ augmented persistence diagrams.
The key to determining whether an edge exists or not is counting the degree of
a vertex for edges in the half plane ``below'' the vertex with respect to a
given direction. We begin with a method for reconstructing plane graphs,
and then extend our method to embedded graphs in $\R^d$.

\subsection{Edge Reconstruction for Plane Graphs}

We first define necessary terms, and then describe our algorithm for
constructing edges.

\begin{definition}[Indegree of Vertex]
Let $\simComp$ be a straight-line embedded graph in $\R^d$
with vertex set~$V$. Then, for every vertex
$v \in V$ and every direction~\mbox{$\dir \in \sph^{d-1}$,} we define:
$$\indeg{v}{\dir} = |\{(v,v') \in E \mid \dir \cdot v' \leq \dir \cdot v \}|.$$
\end{definition}
Thus, the indegree of $v$ is the number of edges incident to $v$ that lie
below~$v$, with respect to direction~$\dir$; see~\figref{indegree}.
\begin{figure}[htb]
\begin{center}
\includegraphics{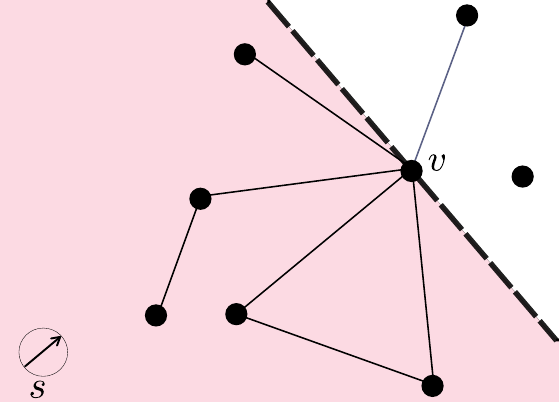}
\caption{A plane graph with a dashed line drawn intersecting $v$ in the direction
perpendicular to~$\dir$. Since four edges incident to $v$  lie below $v$ indicated
by the shaded region, with respect to direction~$\dir$, $\indeg{v}{\dir}=4$.
    }
\label{fig:indegree}
\end{center}
\end{figure}

Given a directional augmented persistence diagram, we prove that we can compute
the indegree of a
vertex with respect to that direction:

\begin{lemma}[Indegree from Diagram]\label{lem:Indegree}
    Let $\simComp=(V,E)$ be a straight-line embedded graph in $\R^d$.
    Let~$\dir \in~\sph ^{d-1}$
    such that no two vertices have the same height with respect to $\dir$ (and
    thus
    $|\pLines{\dir}{V}| = n$). Let~$\dgm{0}{\dir}$ and~$\dgm{1}{\dir}$ be the
    zero- and one-dimensional points of the
    augmented persistence diagram resulting from the height filtration
    $\hFilt{\dir}(\simComp)$. Then, for all $v \in V$,
\begin{equation*}
\begin{split}
\indeg{v}{\dir} = & |\{(x,y) \in \dgm{0}{\dir} \mid y = v \cdot \dir \}|
+|\{(x,y) \in \dgm{1}{\dir} \mid x = v \cdot \dir \}|.
\end{split}
\end{equation*}
Furthermore, if $n=|V|$ and $d=2$ then $\indeg{v}{\dir}$ can be
computed in $\Theta(n)$ time. If $d>2$, then $\indeg{v}{\dir}$ can be
computed in $O(n^2)$ time.
\end{lemma}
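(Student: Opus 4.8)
The plan is to exploit the bijection between the simplices of $\simComp$ and the birth-death events of the APD guaranteed by \lemref{addSimp}, together with the elementary fact that, under the lower-star filtration $\hFilt{\dir}$, an edge $e=(v,v')$ enters the filtration at height $\max\{\dprod{v}{\dir}, \dprod{v'}{\dir}\}$, i.e. at the height of its higher endpoint. First I would observe that the hypothesis $|\pLines{\dir}{V}|=n$ forces all vertex heights to be distinct, so the edges that enter at exactly height $\dprod{v}{\dir}$ are precisely those incident to $v$ whose other endpoint lies strictly below $v$ --- that is, exactly the $\indeg{v}{\dir}$ edges counted by the indegree. No edge not incident to $v$ can enter at this height, since that would require a second vertex at height $\dprod{v}{\dir}$.

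Next I would classify each such edge using the $k=1$ case of \lemref{addSimp}: adding the edge either decreases $\beta_0$ (merging two components, hence producing a zero-dimensional death whose death-coordinate is $\dprod{v}{\dir}$, i.e. a point $(x,y)\in\dgm{0}{\dir}$ with $y=\dprod{v}{\dir}$) or increases $\beta_1$ (completing a one-cycle, hence producing a one-dimensional birth whose birth-coordinate is $\dprod{v}{\dir}$, i.e. a point $(x,y)\in\dgm{1}{\dir}$ with $x=\dprod{v}{\dir}$), and never both. Conversely, by the uniqueness of vertex heights, every zero-dimensional death at height $\dprod{v}{\dir}$ and every one-dimensional birth at height $\dprod{v}{\dir}$ must arise from an edge whose higher endpoint is $v$. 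Summing the two disjoint counts therefore reproduces $\indeg{v}{\dir}$, which is the claimed formula. (The birth of $v$ itself contributes a point with \emph{birth}-coordinate $\dprod{v}{\dir}$ in $\dgm{0}{\dir}$, but since the dimension-zero term of the formula reads the \emph{death}-coordinate $y$, this vertex event does not interfere.)

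The main obstacle --- and the reason the \emph{augmented} diagram is essential --- is handling an edge that merges a component the instant its top vertex $v$ appears: such an edge produces a zero-dimensional pair $(\dprod{v}{\dir}, \dprod{v}{\dir})$ lying exactly on the diagonal. An ordinary persistence diagram may suppress these diagonal points, which would undercount $\indeg{v}{\dir}$; the APD records them explicitly, so the simplex-to-event bijection of \lemref{addSimp} stays exact and the count is correct. I would make this point explicitly rather than gloss over it, since it is where the distinction between the APD and the traditional diagram does the real work.

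Finally, for the complexity claim I would compute both cardinalities by a single linear scan over the points of $\dgm{0}{\dir}$ and $\dgm{1}{\dir}$, testing each point's relevant coordinate against the precomputed value $\dprod{v}{\dir}$; this runs in time proportional to the total number of birth-death pairs in the two diagrams. Invoking \rcorref{PDpoints}, a plane graph ($d=2$) yields a diagram with $\Theta(n)$ pairs, giving the $\Theta(n)$ bound, while a general embedded graph in $\R^d$ has $O(n^2)$ edges and hence $O(n^2)$ pairs, giving the $O(n^2)$ bound for $d>2$.
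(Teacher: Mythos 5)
Your proposal is correct and takes essentially the same route as the paper's proof: classify each edge whose top endpoint is $v$ via the $k=1$ case of \lemref{addSimp} into component-merging edges (zero-dimensional deaths with $y = v \cdot \dir$) and cycle-creating edges (one-dimensional births with $x = v \cdot \dir$), count both sets by a scan of the diagrams, and get the time bounds from \rcorref{PDpoints}. Your additional care --- proving the converse (every such diagram point comes from an edge with top endpoint $v$) and flagging the on-diagonal pair $(v \cdot \dir,\, v \cdot \dir)$ produced when an edge merges a component the moment $v$ appears, which is precisely where the \emph{augmented} diagram is indispensable --- makes explicit two points the paper's proof leaves implicit.
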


\begin{proof}
Let $v, v' \in V$ such that $\dir \cdot v' < \dir \cdot v$, i.e., the vertex~$v'$
is lower than $v$ in direction~$\dir$. Let $e =(v, v') \in E$. Then, by~\lemref{addSimp},
    we have two cases to consider when $e$ is added to $F_s$:

    \emph{Case 1: $e$ joins two disconnected components.}  If $e$ connects two
    previously disconnected components, then $e$ is associated with a death in
    $\dgm{0}{\dir}$ at height~\mbox{$\dir \cdot v$.}  Moreover, since all deaths in
    $\dgm{0}{\dir}$ are associated with adding an edge, we know that the set of
    all edges that fall into this case with $v$ as the top endpoint is $A= \{(x,y)
    \in \dgm{0}{\dir} \mid y = v \cdot \dir \}$.

    \emph{Case 2: $e$ creates a one-cycle.}  In this case, $e$ is associated
    with a birth in~$\dgm{0}{\dir}$ at height~$\dir \cdot v$.  Thus, we have
    that~$B=\{(x,y) \in \dgm{1}{\dir} \mid x = v \cdot \dir \}$ is the set of
    edges that fall into
    this case with $v$ as the top endpoint.

The union $A \cup B$ is the set of all edges ending at~$v$ with respect to~$s$,
hence~$\indeg{v}{\dir} = |A \cup B|$.
Furthermore, by~\rcorref{PDpoints}, if $d=2$, then $G$ is a plane graph and
each of $\dgm{0}{\dir}$ and $\dgm{1}{\dir}$ have
$\Theta(n)$ points, so we count and sum the points joining two disconnected
components or creating one-cycles at height $v$ in time $\Theta(n)$ time. If
$d>2$, then each of $\dgm{0}{\dir}$ and $\dgm{1}{\dir}$ have
$O(n^2)$ points by~\rcorref{PDpoints},
so we count and sum the points joining two disconnected
components or creating one-cycles at height $v$ in time $O(n^2)$ time.
\end{proof}

In order to decide whether an edge $(v,v')$ exists between two vertices, we look at the
degree of $v$ as seen by two close directions such that $v'$ is the only vertex in what
we call a \emph{wedge at $v$}:
\begin{definition}[Wedge]
    Let ~$v \in V$, and choose $\dir_1,\dir_2 \in \sph^{d-1}$.  Then, a
    \emph{wedge} at $v$ is the closure of the
    symmetric difference between the half planes below $v$ in directions
    $\dir_1$ and~$\dir_2$. In the special case when $d=2$, we refer
    to the wedge as a \emph{bow tie}.
\end{definition}

In what follows, we use a wedge that is defined by a direction perpendicular to
an edge and slight tilts of that direction.
In the proof of \cite[Theorem 3.1]{turner2014persistent},
Turner et al.\ also develop a construction
that is conceptually similar to the wedge for identifying links
 where changes in Betti
numbers are observed when considering the effect of the continuous rotation of directions above
and below the plane orthogonal to an edge.

Because we assume that no three vertices in our plane graph are collinear, for each
pair of vertices~$v,v'\in V$, we always find a bow tie centered at $v$ that
contains the vertex~$v'$ and no other vertex in $V$; see
\figref{edgeExist}.
We use bow tie regions to determine if there exists an edge between $v$ and $v'$.
In the next lemma, we show how to decide if the edge~$(v,v')$ exists in our plane graph.
\begin{figure}[htb]
\begin{center}
\includegraphics{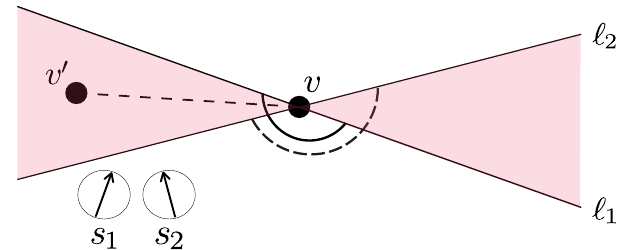}
\caption{ Bow tie~$\mathbb{B}$ at~$v$, denoted by the shaded area.~$\mathbb{B}$ contains
exactly one vertex,~$v'$, so the only potential edge in~$\mathbb{B}$ is $(v,v')$.
In order to determine if there exists an edge between~$v$ and~$v'$, we
compute~$\indeg{v}{\dir_1}$ and~$\indeg{v}{\dir_2}$, i.e., the number of edges
incident to~$v$ in the solid and dashed arcs, respectively. An edge
exists between~$v$ and~$v'$ if and only if~$|\indeg{v}{\dir_1} -
    \indeg{v}{\dir_2}|=1$.}
\label{fig:edgeExist}
\end{center}
\end{figure}

\begin{lemma}[Edge Existence] \label{lem:edgeExist}
    Let $\simComp=(V,E)$ be a straight-line embedded graph in $\R^d$.
Let~\mbox{$v,v' \in V$}. Let $\dir_1, \dir_2 \in  \sph ^{d-1}$ such that the
wedge~$\mathbb{B}$ at $v$ defined by $\dir_1$ and $\dir_2$ satisfies:~$\mathbb{B}
\cap V\setminus\{v\} = v'$.~Then,
$$\big|\indeg{v}{\dir_1} - \indeg{v}{\dir_2}\big|=1 \iff (v,v') \in E.$$
\end{lemma}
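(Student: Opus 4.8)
The plan is to reduce the edge-existence question to a difference of indegrees across the two boundary directions of the bow tie, using Lemma~\ref{lem:Indegree} to convert indegrees into diagram point counts and the wedge condition $\mathbb{B}\cap V\setminus\{v\} = v'$ to guarantee that the only edge whose ``below/above'' status can differ between $\dir_1$ and $\dir_2$ is the potential edge $(v,v')$. First I would set up the key structural observation: for any edge $(v,w)$ incident to $v$, whether $w$ lies below $v$ (i.e.\ whether $(v,w)$ is counted in $\indeg{v}{\dir_j}$) is determined by the sign of $\dir_j\cdot(v-w)$. Thus, as we switch from $\dir_1$ to $\dir_2$, the indegree $\indeg{v}{\dir}$ changes by exactly the (signed) number of neighbors $w$ of $v$ that cross from ``above'' to ``below'' (or vice versa) between the two directions. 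The region of points that are below $v$ with respect to exactly one of $\dir_1,\dir_2$ is precisely the bow tie $\mathbb{B}$, so a neighbor $w$ contributes to the change $\indeg{v}{\dir_1}-\indeg{v}{\dir_2}$ if and only if $w\in\mathbb{B}$.

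Next I would invoke the wedge hypothesis: since $\mathbb{B}\cap V\setminus\{v\} = v'$, the only vertex that can possibly lie in the bow tie is $v'$. Consequently, the only incident edge that can change its below/above status is $(v,v')$, if it exists. This immediately gives both directions of the biconditional. For the ($\Leftarrow$) direction, if $(v,v')\in E$ then $v'$ is a neighbor of $v$ lying in $\mathbb{B}$, so it contributes $\pm 1$ to the difference and no other neighbor contributes, forcing $|\indeg{v}{\dir_1}-\indeg{v}{\dir_2}|=1$. For the ($\Rightarrow$) direction, if $(v,v')\notin E$ then no neighbor of $v$ lies in $\mathbb{B}$ (since $v'$ is the only candidate vertex there and it is not adjacent to $v$), so every incident edge is counted identically in both directions and the difference is $0$; contrapositively, a difference of $1$ forces $(v,v')\in E$.

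The main obstacle, and the step I would treat most carefully, is pinning down the precise relationship between ``lying in the bow tie $\mathbb{B}$'' and ``changing below/above status,'' including the boundary/degenerate cases. I need to confirm that no neighbor of $v$ sits exactly on one of the two bounding hyperplanes (so that its status is unambiguous for each $\dir_j$), which should follow from the genericity built into Assumption~\ref{assumpt:gp} together with the freedom to choose $\dir_1,\dir_2$ as slight tilts of the direction perpendicular to the segment $vv'$; I would explicitly argue that the tilts can be taken small enough that no third vertex enters $\mathbb{B}$ and no neighbor lands on the boundary. I also need to verify that the contribution of $v'$ is exactly $\pm 1$ and not larger, which holds because $(v,v')$ is a single edge contributing at most one to each indegree. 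Once these genericity and counting details are settled, the biconditional follows directly; the indegree-to-diagram translation from Lemma~\ref{lem:Indegree} then lets us compute each $\indeg{v}{\dir_j}$ from the oracle's output, though that computational aspect is not needed for the statement itself.
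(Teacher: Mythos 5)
Your proposal is correct and takes essentially the same route as the paper's proof: both partition the edges incident to $v$ into those lying on the same side of both bounding hyperplanes (which contribute identically to $\indeg{v}{\dir_1}$ and $\indeg{v}{\dir_2}$ and cancel) and those lying in the wedge, where the hypothesis $\mathbb{B} \cap V\setminus\{v\} = v'$ guarantees that only the potential edge $(v,v')$ can produce a discrepancy, so the difference is $\pm 1$ exactly when $(v,v') \in E$. The genericity and boundary concerns you flag (no neighbor on a bounding hyperplane, tilts small enough that no third vertex enters $\mathbb{B}$) are legitimate but are handled in the paper not in this lemma, whose hypotheses already posit the wedge, but in the construction of \thmref{edgeEmbed}, where $\dir_1,\dir_2$ are chosen via the half-angle $\theta$ so that $v$ and $v'$ have distinct heights in both directions.
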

\begin{proof}
    Since edges
    in~$\simComp$ are straight lines, any edge incident to~$v$ either falls
    entirely in the wedge region~$\mathbb{B}$ or the interior of the
    edge is on the same
    side (above or below) of
    both hyperplanes that define~$\mathbb{B}$.
    Let $A$ be the set of edges that are incident to $v$ and below both
    hyperplanes; that is, $$A = \{ (v,w) \in E ~|~
        \dprod{\dir_1}{w} < \dprod{\dir_1}{v} \text{ and }
        \dprod{\dir_2}{w} < \dprod{\dir_2}{v}\}.$$
    Next, we split the wedge into the
    two infinite cones defined by the two connected components of the interior
    of $\mathbb{B}$. Let $\mathbb{B}_1$ be the set of
    edges in one cone and~$\mathbb{B}_2$
    be the set of edges in the other cone.
    Then, by definition of indegree,
    \begin{align*}
        \big|\indeg{v}{\dir_1} - \indeg{v}{\dir_2}\big|
        &= \big||A|+|\mathbb{B}_1|-|A|-|\mathbb{B}_2| \big|\\
        &= \big| |\mathbb{B}_1|-|\mathbb{B}_2| \big|.
    \end{align*}
    As~$\mathbb{B}$ contains one vertex~$v'$,
    $\left| |\mathbb{B}_1| - |\mathbb{B}_2| \right|$
    is one when $(v,v')\in E$, and zero otherwise.
    Therefore, we conclude that~$|\indeg{v}{\dir_1} - \indeg{v}{\dir_2}| = 1 \iff (v,v') \in E$,
    as required.
\end{proof}

Next, we prove that we can find the embedding of the edges in plane graphs using $\Theta(n^2)$ directional augmented persistence diagrams.
See \appendref{Example} for an example of
walking through the reconstruction.

\begin{theorem}[Edge Reconstruction]
\label{thm:edgeEmbed}
    Let $G=(V,E)$ be a plane graph.
    If~$V$ is known, then we can compute~$E$ using~$n^2-n$ directional
    augmented
persistence~diagrams in $\Theta(n^2\persComp)$ time,
    where~$\Theta(\persComp)$ is the time
    complexity of computing a single~diagram.
\end{theorem}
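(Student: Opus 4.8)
The plan is to reduce edge reconstruction to a collection of independent edge-existence tests, one per unordered pair of vertices, and to resolve each test with a single bow tie via \lemref{edgeExist}. Since $V$ is known and, by \assumptref{gp}, no three vertices are collinear, the discussion preceding the theorem guarantees that for every pair $v, v' \in V$ there is a bow tie centered at $v$ containing $v'$ and no other vertex of $V$. First I would make this existence algorithmic: for a fixed $v$, sort the other $n-1$ vertices by the angle of $v'-v$ about $v$ in $\Theta(n\log n)$ time, and for the pair $\{v,v'\}$ choose two directions $\dir_1,\dir_2$ perpendicular to $\overline{vv'}$ but tilted by an amount smaller than the angular gaps separating $v'$ (and the opposite ray) from the neighboring vertices in this sorted order. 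This guarantees the resulting bow tie $\mathbb{B}$ isolates $v'$, i.e., $\mathbb{B}\cap V\setminus\{v\}=\{v'\}$.

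Next I would invoke the oracle on $\dir_1$ and $\dir_2$ to obtain the two augmented diagrams, and apply \lemref{Indegree} (the $d=2$ case) to read off $\indeg{v}{\dir_1}$ and $\indeg{v}{\dir_2}$ from the zero- and one-dimensional points at height $v\cdot\dir_j$; each such computation costs $\Theta(n)$. By \lemref{edgeExist}, the pair satisfies $(v,v')\in E$ exactly when $|\indeg{v}{\dir_1}-\indeg{v}{\dir_2}|=1$, so this single comparison decides the edge. Iterating over all ${n \choose 2}$ unordered pairs, each consuming two directions, uses $2{n \choose 2}=n^2-n$ directional diagrams, and collecting the accepted pairs yields $E$.

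For the complexity, the $n^2-n$ oracle queries dominate and cost $\Theta(n^2\persComp)$. All bookkeeping is absorbed into this bound: by \rcorref{PDpoints} a plane-graph diagram has $\Theta(n)$ points, so $\persComp=\Omega(n)$ and hence $n^2\persComp=\Omega(n^3)$; meanwhile the per-vertex angular sorts total $\Theta(n^2\log n)$ and the $\Theta(n^2)$ indegree extractions cost $\Theta(n)$ each for a total of $\Theta(n^3)$, both of which are $O(n^2\persComp)$. Thus the overall running time is $\Theta(n^2\persComp)$.

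I expect the main obstacle to be the direction-selection step: certifying that the two tilted directions produce a bow tie whose \emph{entire} region---including the antipodal cone of the symmetric difference of the two half-planes---avoids every other vertex, so that \lemref{edgeExist} applies with $\mathbb{B}\cap V\setminus\{v\}=\{v'\}$. This requires bounding the tilt not only against the angular neighbors of $v'$ but also against vertices lying near the opposite direction, and arguing that \assumptref{gp} (no third vertex on the line through $v$ and $v'$) leaves a strictly positive gap. The remainder of the argument is routine given \lemref{Indegree} and \lemref{edgeExist}.
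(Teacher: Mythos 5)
Your proposal is correct and takes essentially the same route as the paper's proof: both reduce the problem to ${n \choose 2}$ bow-tie edge tests, each consuming two diagrams (hence $n^2-n$ total), resolved by computing $\indeg{v}{\dir_1}$ and $\indeg{v}{\dir_2}$ via \lemref{Indegree} and deciding membership via \lemref{edgeExist}, with all bookkeeping absorbed into $\Theta(n^2\persComp)$ because $\persComp=\Omega(n)$ by \rcorref{PDpoints}. The only difference is implementation-level: you choose a per-pair tilt from local angular gaps (sorting in $\Theta(n\log n)$ per vertex), whereas the paper precomputes one global half-angle $\theta=\frac{1}{2}\min_{v}\theta(v)$ from cyclic orderings of the \emph{lines} through each vertex (which automatically handles the antipodal cone you flag as the main obstacle), using \cite{verma2011slow} to get all orderings in $\Theta(n^2)$---a distinction that does not change the stated diagram count or runtime.
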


\begin{proof}
    We prove this theorem constructively, and summarize the construction in
    \algref{edge_recon}. In the algorithm, we first preprocess in order to find a
    global bow tie half-angle. Then we iterate through each pair of vertices
    and test to see if the edge exists.  This edge test is done in two steps:
    first create a bow tie that isolates the potential edge, then apply \lemref{edgeExist} to determine if the edge
    exists or not by comparing the indegrees of $v$ with respect to the two
    directions defining the bow tie.

    \emph{Preprocessing
    (Lines~\ref{line:edge:firstline}--\ref{line:edge:3minang} of \algref{edge_recon}).}
    We initialize a set $E$ of edges to be the empty set in
    Line~\ref{line:edge:firstline}.  Next, we compute an angle that is
    sufficiently small to be used to construct bow ties for every edge.
    For each vertex~$v\in V$, we consider the cyclic ordering of the points
    in~$V \setminus \{v\}$ around $v$; let $c[v]$ denote this ordered list of
    vertices.
    By Lemmas~$1$ and~$2$ of~\cite{verma2011slow}, we compute~$c[v]$ for all~$v \in V$
    in~$\Theta(n^2)$ total time\footnote{Note that the na\"ive approach would
    be to sort about each vertex
    independently, which would take $\Theta(n^2 \log n)$ time, but the results
    of~\cite{verma2011slow} improve this to $\Theta(n^2)$.
    The lemmas in \cite{verma2011slow} use
    big-O notation, but the presented algorithm is actually asymptotically
    tight.}.
    Once we have these cyclic orderings, we  compute all~$n$ lines through
    $v$ and~$v_i \in V \setminus \{v\}$ and can compute a cyclic ordering of all
    such lines through $v$ in $\Theta(n)$ per vertex.  (The step of obtaining the
    cyclic ordering of lines given the cyclic ordering of vertices is similar to
    the merge step of merge sort).
    Given two adjacent lines through $v$, consider the angle between these
    lines; see the angles labeled
    $\theta_i$ in \figref{edgebow}. For each vertex $v$, the minimum of such angles, denoted
    $\theta(v)$, is computed in Line~\ref{line:edge:2minang} in $\Theta(n)$
    time.
    Finally, we define $\theta = \frac{1}{2} \min _{v \in V} \theta(v)$ in
    Line~\ref{line:edge:3minang}.  The value $\theta$ will be used to compute
    bow ties in the edge test.
    The
    runtime for this preprocessing is~$\Theta(n^2)$ and
    requires no augmented persistence~diagrams.
    \begin{figure}
        \center
        \includegraphics{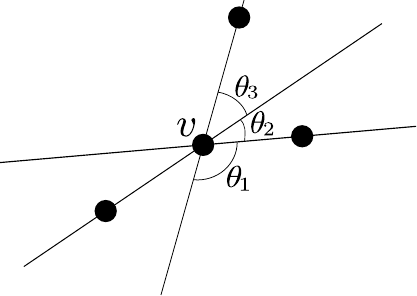}
        \caption{Ordering of all vertices about $v$.
        Lines are drawn through all vertices and then angles are computed
        between all adjacent pairs of lines. The smallest angle is denoted
        as~$\theta(v)$. Here,~$\theta(v)=\theta_2$.}
        \label{fig:edgebow}
    \end{figure}

    \emph{Edge Test
    (Lines~\ref{line:edge:edgeTestStart}--\ref{line:edge:edgeTestEnd})}.
    Let
    $\mathcal{O}$ be an oracle that takes a direction $\dir$ in $\sph^{1}$ and returns the APD for the
    unknown plane graph~$\simComp$
    in direction $\dir$ in time~$\Theta(\persComp)$.
    Let $v,v' \in V$ such that~$v \neq v'$.  We now provide the two steps
    necessary to test if $(v,v') \in E$ using only two diagrams.

    The first step is to construct bow ties
    (Lines~\ref{line:edge:bowtiestart}--\ref{line:edge:bowtieend} of \algref{edge_recon}).
    Let~$\dir$ be a unit vector perpendicular to \mbox{vector~$v'-v$},
    and~let $\dir_1,\dir_2$ be the two unit vectors that form
    angles~$\pm {\theta}$ with~$\dir$. Note that~$v$ and~$v'$ are at the same
    height in direction~$\dir$, but different heights in direction $\dir_1$ and
    $\dir_2$ (and, in fact, their order changes between directions $\dir_1$ and~$\dir_2$).
    We consult the oracle $\mathcal{O}$ to obtain the APDs  $\dgm{}{\dir_1}$ and
            $\dgm{}{\dir_2}$.  Note that we need the zeroth- and
            first-dimensional diagrams only, and these are the only non-trivial
            diagrams for a graph.
    Let $\mathbb{B}$ be the bow tie between~$\pLine{\dir_1 }{\hFiltFun{\dir_1}(v)}$
and~$\pLine{\dir_2}{\hFiltFun{\dir_2}(v)}$.
    Note that, by construction,~$\mathbb{B}$ contains exactly one
    point from~$V$, namely~$v'$.
    This first step of the edge test takes $\Theta(\persComp)$ time and will use
    two augmented persistence~diagrams.

    The second step of the edge test is to compute indegrees of $v$ in order to determine if there exists
    an edge between~$v$ and $v'$
    (Lines~\ref{line:edge:indegreestart}--\ref{line:edge:edgeTestEnd}
    of \algref{edge_recon}).
    By \lemref{Indegree}, we compute the indegrees $\indeg{v}{\dir_1}$
    and $\indeg{v}{\dir_2}$ from~$\dgm{}{\dir_1}$ and $\dgm{}{\dir_2}$,
    respectively, in $\Theta(n)$ time;
    see Lines~\ref{line:edge:indeg1} and~\ref{line:edge:indeg2}.
    Then, using~\lemref{edgeExist}, we determine
    whether the edge~$(v,v')$ is in $E$ by checking if~$|\indeg{v}{\dir_1}-\indeg{v}{\dir_2}|=1$.
    If this equality holds, the edge
    exists; if not, the edge does not; see
    Lines~\ref{line:edge:indeg}--\ref{line:edge:edgeTestEnd}.
    The bottleneck of the edge test is the $\Theta(n)$ indegree computation,
    the second step of the edge test takes~$\Theta(n)$ time.
    We do not compute additional diagrams in this step.

    We apply the edge test for all ${n \choose 2} = \frac{1}{2}(n^2-n)$ distinct pairs in $V$.
    For all pairs, the complexity
    of the edge test uses~$n^2-n$ persistence diagrams and takes
    $\Theta(n^2\persComp + n^3)$ time.  Observing that the time to compute a
    diagram $D$ is~$\Omega(|D|)$ and using \rcorref{PDpoints}, we observe that $\persComp$ is
    $\Omega(n)$. As a result, we can simplify $\Theta(n^2\persComp + n^3)$ to
    $\Theta(n^2 \persComp)$.
    Thus, the runtime of \algref{edge_recon} is~$\Theta(n^2\persComp)$ ($\Theta(n^2\persComp)$
    for preprocessing
    and $\Theta(n^3)$ for the edge~tests).
\end{proof}
\begin{algorithm}
\caption{Reconstruct Edges}
\label{alg:edge_recon}
\begin{algorithmic}[1]
    \REQUIRE Oracle $\mathcal{O}$ for an unknown graph $G=(V,E) \subset \R^2$;
    the vertex set $V$
    \ENSURE the edge set $E$
    \STATE $E \gets \emptyset$\label{line:edge:firstline}
    \STATE $c \gets$ use~\cite{verma2011slow} to compute cyclic orderings
    for all vertices in $V$ stored
    as lists, indexed by $v\in V$\label{line:edge:1minang}
    \FOR{$v \in V$}
    \STATE $\ell[v] \gets$ cyclic ordering of lines though $v$, computed from
    $c[v]$
    \STATE $\theta(v) \gets$ minimum angle between any two lines in
    $\ell[v]$\label{line:edge:2minang}
    \ENDFOR
    \STATE $\theta = \frac{1}{2}\min_{v\in V} \theta(v)$\label{line:edge:3minang}
    \FOR{$(v,v') \in V \times V, v\neq v'$}\label{line:edge:forloopstart}
        \STATE $\dir \gets$ unit vector perpendicular to $(v' -
        v)$\label{line:edge:edgeTestStart}\label{line:edge:bowtiestart}
        \STATE $\dir_1 \gets$ $\dir$ rotated by $\theta$
        \STATE $\dir_2 \gets$ $\dir$ rotated by
        $-\theta$\label{line:edge:bowtie2}
        \STATE Consult $\mathcal{O}$ to obtain
            $\dgm{}{\dir_1}$ and
            $\dgm{}{\dir_2}$ restricted to zero- and one-dimensional points
            \label{line:edge:bowtieend}
        \STATE Compute $\indeg{v}{\dir_1}$
            from $\dgm{}{\dir_1}$
            \label{line:edge:indegreestart}\label{line:edge:indeg1}
        \STATE Compute $\indeg{v}{\dir_2}$ from  $\dgm{}{\dir_2}$\label{line:edge:indeg2}
        \IF{$|\indeg{v}{\dir_1} - \indeg{v}{\dir_2}| == 1$}\label{line:edge:indeg}
            \STATE Add $(v,v')$ to $E$ \label{line:edge:edgeCondition}
        \ENDIF\label{line:edge:edgeTestEnd}
    \ENDFOR\label{line:edge:forloopend}
    \RETURN $E$
\end{algorithmic}
\end{algorithm}

Putting together \thmref{intComp} and \thmref{edgeEmbed} leads us to our primary
result for plane graphs:

\begin{theorem}[Plane Graph Reconstruction]
\label{thm:reconstruction}
    Let~$\simComp=(V,E)$ be a plane graph with $n$ vertices embedded in $\R^2$.
    \algref{vert_recon} and \algref{edge_recon} calculate the vertex locations
    and edges using~$n^2 - n + 3$ different
    directional augmented persistence diagrams
    in $\Theta(n^2\persComp)$ time,
    where~$\Theta(\persComp)$ is the time
    complexity of computing a single diagram.
\end{theorem}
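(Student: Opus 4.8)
The plan is to treat this statement as a straightforward composition of the two main theorems already established in the preceding two sections, namely \thmref{intComp} (Vertex Reconstruction) and \thmref{edgeEmbed} (Edge Reconstruction). First I would run \algref{vert_recon} to recover the full vertex set $V$, invoking \thmref{intComp} to certify that this uses exactly three directional augmented persistence diagrams (the zero-dimensional diagrams in directions $e_1$, $e_2$, and the tilted direction $s$) and runs in $\Theta(n\log n + \persComp)$ time. Once $V$ is known, I would feed it into \algref{edge_recon} and appeal to \thmref{edgeEmbed} to recover $E$ using $n^2 - n$ further diagrams in $\Theta(n^2 \persComp)$ time. Pipelining these two phases yields both $V$ and $E$, which is the entire reconstruction.

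The counting of diagrams is then purely additive, and the key point to verify is that the two phases request diagrams for \emph{disjoint} sets of directions, so no double-counting or sharing occurs: the vertex phase queries three axis-aligned-and-tilted directions, while the edge phase queries directions of the form $\dir_1,\dir_2$ obtained by tilting the normal to each candidate edge $v'-v$ by $\pm\theta$. These are generically distinct from the three vertex-phase directions, so the total is $3 + (n^2 - n) = n^2 - n + 3$ directional augmented persistence diagrams, as claimed. I would also remark that \assumptref{gp} (general position) is assumed throughout and guarantees the hypotheses of both sub-theorems hold simultaneously for the same graph $\simComp$, so the two algorithms can indeed be chained on a single input.

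The remaining work is the asymptotic simplification of the running time. Summing the two phases gives $\Theta(n\log n + \persComp) + \Theta(n^2\persComp) = \Theta(n\log n + \persComp + n^2\persComp)$, and I would argue that the $n^2\persComp$ term dominates the other two. For the $\persComp$ term this is immediate since $n^2 \ge 1$. For the $n\log n$ term I would invoke \rcorref{PDpoints} together with the stated bound $\persComp = \Omega(|\dgm{}{\dir}|)$, which forces $\persComp = \Omega(n)$; hence $n^2\persComp = \Omega(n^3)$, which absorbs $n\log n$. This collapses the total to $\Theta(n^2\persComp)$.

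I expect the proof to be essentially a corollary of the two constituent theorems, so there is no deep obstacle. The only genuine care required is the bookkeeping described above: confirming that the diagram counts add without overlap, that a single general-position input simultaneously satisfies the hypotheses of \thmref{intComp} and \thmref{edgeEmbed}, and that the lower-order $\Theta(n\log n + \persComp)$ vertex-phase cost is correctly absorbed into the $\Theta(n^2\persComp)$ edge-phase cost via the $\persComp = \Omega(n)$ lower bound.
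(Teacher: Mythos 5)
Your proposal is correct and follows essentially the same route as the paper: the published proof is exactly the composition of \thmref{intComp} and \thmref{edgeEmbed}, adding $3 + (n^2-n)$ diagrams and absorbing the $\Theta(n\log n + \persComp)$ vertex-phase cost into $\Theta(n^2\persComp)$. Your additional bookkeeping (disjointness of the queried directions and the explicit $\persComp = \Omega(n)$ absorption via \rcorref{PDpoints}) is sound and only makes explicit what the paper leaves implicit.
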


\begin{proof}
    By \thmref{intComp}, \algref{vert_recon} reconstructs the vertices $V$ using
three APDs in $\Theta(n\log{n} + \persComp)$ time.
    By \thmref{edgeEmbed}, \algref{edge_recon} reconstructs the edges $E$
with $n^2-n$ directional augmented persistence
diagrams in $\Theta(n^2\persComp)$ time.
Thus, we can
reconstruct all vertex locations and edges of $G$
using $n^2 - n + 3$ augmented persistence diagrams
in $\Theta(n^2\persComp)$ time.
\end{proof}

\subsection{Edge Reconstruction in $\R^d$}

We can also reconstruct edges of graphs embedded in higher
dimensions. We can form a higher-dimensional version of
the bow tie, referred to as a \emph{wedge}, which is the symmetric difference of two
$(d-1)$-dimensional~hyperplanes.

\begin{theorem}[Edge Reconstruction in Higher Dimensions]
\label{thm:highDEdgeEmbed}
    Let $G=(V,E)$ be a straight-line embedded graph in $\R^d$ for some~$d>1$.
If~$V$ is known, then we can compute~$E$ using $n^2 - n$ directional
augmented persistence~diagrams in~$O(n^2 \persComp  + n^4)$ time,
    where~$\Theta(\persComp)$ is the time
    complexity of computing a single
    diagram.
\end{theorem}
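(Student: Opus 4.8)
The plan is to mirror the structure of the plane-graph edge reconstruction in \thmref{edgeEmbed}, but replace the planar-specific machinery (the cyclic ordering via \cite{verma2011slow}, the bow tie, and the $\Theta(n)$ indegree count) with its $\R^d$ analogue. The core logic is unchanged: for each of the ${n \choose 2}$ ordered pairs $(v,v')$, I would construct a wedge at $v$ that isolates $v'$ as the unique vertex of $V \setminus \{v\}$ inside it, query the oracle for two diagrams in the two tilted directions $\dir_1, \dir_2$ defining the wedge, compute $\indeg{v}{\dir_1}$ and $\indeg{v}{\dir_2}$ via \lemref{Indegree}, and invoke \lemref{edgeExist} (which is already stated for arbitrary $\R^d$) to conclude $(v,v') \in E$ iff the indegrees differ by exactly one. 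Since we test $n^2 - n$ ordered pairs and use two diagrams per pair but can reuse the same $n^2-n$ directions, the total diagram count is $n^2 - n$, matching the statement.

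The key steps, in order, are as follows. First I would establish the higher-dimensional wedge construction: given $v, v'$, choose a direction $\dir$ perpendicular to $v' - v$ and two directions $\dir_1, \dir_2$ obtained by tilting $\dir$ by a sufficiently small angle so that $v'$ falls strictly inside the symmetric difference of the two half-spaces while every other vertex lies on the same side of both defining hyperplanes. The existence of a uniformly small tilt angle requires a preprocessing step analogous to computing $\theta$: for each $v$, I would examine the angular separations of the directions to the other $n-1$ vertices and take a global minimum, which can be done in $O(n^2 \log n)$ or better. Second, for each pair I query the two APDs and apply \lemref{Indegree}; note that in $\R^d$ with $d>2$ the graph need not be planar, so each diagram has $O(n^2)$ points and each indegree computation costs $O(n^2)$ rather than $\Theta(n)$. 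Third, I tally the runtime: the $n^2-n$ edge tests each cost $O(\persComp)$ for the two oracle calls plus $O(n^2)$ for the indegree comparison, giving $O(n^2 \persComp + n^4)$, which matches the claimed bound.

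The main obstacle I anticipate is justifying that a single global tilt angle works in $\R^d$ and that the wedge genuinely isolates exactly one vertex. In the plane the bow tie argument relies on the no-three-collinear assumption and a clean cyclic ordering of vertices about $v$; in higher dimensions there is no cyclic order, so I would instead argue directly that for the hyperplane through $v$ orthogonal to $v'-v$, a small enough rotation keeps all vertices other than $v'$ strictly on their original side, while $v'$ (which lies \emph{on} the unrotated hyperplane by construction) crosses to the opposite side between $\dir_1$ and $\dir_2$. The delicate point is that edges incident to $v$ other than $(v,v')$ must not contribute to the difference $|\indeg{v}{\dir_1} - \indeg{v}{\dir_2}|$; this is exactly what \lemref{edgeExist} guarantees once the wedge contains only $v'$, so the real work is purely the geometric wedge-existence argument, after which the counting and complexity bookkeeping follow the planar proof verbatim. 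I would also remark that \assumptref{gp} only constrains projections onto the $e_1,e_2$-subspace, so I must verify the tilt construction does not require genericity in all coordinates, or else note the mild additional genericity needed.
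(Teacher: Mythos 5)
Your overall architecture is the same as the paper's (pairwise edge tests via \lemref{edgeExist}, two diagrams per pair for $n^2-n$ total, $O(n^2)$-time indegree from \lemref{Indegree}, giving $O(n^2\persComp+n^4)$), but your core geometric step---building the wedge directly in $\R^d$---is where the proposal has a genuine gap, and it is exactly the point you flagged but did not resolve. In $d>2$ the central direction $\dir$ perpendicular to $v'-v$ is not unique: it ranges over a $(d-2)$-sphere inside $u^\perp$, where $u=(v'-v)/\|v'-v\|$. For a vertex $v''$, whether the tilt is safe is governed by the ratio $|\dprod{\dir}{(v''-v)}|\,/\,|\dprod{u}{(v''-v)}|$, and the numerator depends on the \emph{choice} of $\dir$, not merely on the angle $\angle(v''-v,\,v'-v)$ at $v$. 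The component of $v''-v$ in $u^\perp$ can be large while its component along your particular $\dir$ is zero: for any fixed $v''$ there is a codimension-one subsphere of choices of $\dir$ for which $v''$ lies exactly on the central hyperplane $\{x : \dprod{\dir}{(x-v)}=0\}$, and \assumptref{gp} (which only constrains the projection onto the span of $e_1,e_2$) does nothing to exclude this. Consequently your proposed preprocessing---a global minimum over angular separations of the direction vectors at each vertex---does not yield a valid uniform tilt angle $\theta$: a vertex can sit on or arbitrarily near the central hyperplane even when all pairwise angles at $v$ are bounded below, and then every nonzero tilt puts it inside the wedge, breaking the hypothesis $\mathbb{B}\cap V\setminus\{v\}=v'$ of \lemref{edgeExist}. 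Repairing the direct approach would require, for each pair, choosing $\dir$ quantifiably far from $n-2$ great subspheres of $\sph^{d-2}$ and deriving a per-pair tilt bound, none of which your sketch supplies.

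The paper avoids this entirely with a projection-and-lift trick. It sets $\pi(x_1,\ldots,x_d)=(x_1,x_2)$ and works with $V_*=\pi(V)$; \assumptref{gp} guarantees that $V_*$ has no three collinear points and no repeated coordinates, so the planar preprocessing and bow-tie construction of \algref{edge_recon} apply verbatim to $V_*$, producing a single global angle $\theta$ exactly as in \thmref{edgeEmbed}. Each bow-tie line $a_1x_1+a_2x_2=h$ then lifts to the hyperplane $a_1x_1+a_2x_2+0x_3+\cdots+0x_d=h$ in $\R^d$; since membership of a point $x$ in the lifted wedge depends only on $\pi(x)$ and $\pi$ is an orthogonal projection injective on $V$, the lifted wedge at $v$ contains exactly $v'$ among $V\setminus\{v\}$. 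This is the one missing idea in your proposal; once the wedge is obtained this way, your counting and complexity bookkeeping (including the $O(n^2)$-point diagrams for $d>2$) coincide with the paper's proof.
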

\begin{proof}
    Let $\subspace{e_1}{e_2}$ be the subspace of $\R^d$
    spanned by $e_1$ and $e_2$.  Let $\pi \colon \R^d \to \R^2$ be defined by
    $\pi(x_1,x_2, \ldots, x_d) = (x_1,x_2)$; in other words,~$\pi$
    is the projection to~$\subspace{e_1}{e_2}$. Let $V_* = \pi(V)$ and
    note that by the \assumptref{gp}, no three vertices are collinear in
	$V_*$ and no two points share any coordinate values.
    Since the vertex set $V_*$ lies in a two-dimensional plane, we can use
    Lines~\ref{line:edge:firstline}--\ref{line:edge:3minang} of
    \algref{edge_recon} to compute an angle $\theta$ for $V_*$.

    Let $v,v' \in V$. We use a method similar to the one defined in
    Lines~\ref{line:edge:bowtiestart}--\ref{line:edge:bowtie2} of
    \algref{edge_recon} to find an appropriate wedge to test whether an
	edge between $v$ and $v'$ exists. We define $\dir_1$ and $\dir_2$ such that the
    lines $\simplePLine{1}{\pi(v)}$ and $\simplePLine{2}{\pi(v)}$, which are
    perpendicular to $\dir_1$ and~$\dir_2$ and go through $\pi(v)$, define a bow
    tie at $\pi(v)$ that isolates the potential edge~$(\pi(v), \pi(v'))$.  This bow tie
    extends to a wedge in $\R^d$ by replacing the lines with hyperplanes that
    intersect $\subspace{e_1}{e_2}$ orthogonally; specifically,
    the line $a_1x_1+a_2x_2=h$ in
    $\subspace{e_1}{e_2}$
    corresponds to
    the~$(d-1)$-dimensional hyperplane,  $a_1x_1+a_2x_2+0x_3+...+0x_d=h$
    in~$\R^d$. Let $\dirLifted_1$ and $\dirLifted_2$ be directions in $\sph^{d-1}$ that
    define this wedge.
	Because $\pi$ is an orthogonal projection and points in~$V$ satisfy the
	\assumptref{gp},~$v'$ must be the only vertex in
	this wedge, thus it isolates the edge $(v,v')$, should it exist.

    We then compute the indegrees of $v$ with respect to $\dirLifted_1$
    and $\dirLifted_2$,
    just as we did in the
    two-dimensional case in Lines~\ref{line:edge:indeg1}
    and~\ref{line:edge:indeg2} of \algref{edge_recon}.  However, we note that
    since our dimension may be greater than two,~\lemref{Indegree} states
    that this step takes $O(n^2)$ time for each indegree computation. We
    perform indegree checks on ${n \choose 2}$ pairs of
    vertices.  Finally, by~\lemref{edgeExist}, we test for an
    edge by determining if the difference
    between the indegrees is one using the same technique as
    Lines~\ref{line:edge:indeg}-\ref{line:edge:edgeCondition} of \algref{edge_recon}.

    The runtime for computing the indegree for all ${n\choose 2}$ pairs of
    vertices and reconstructing the edges is $O(n^2 \persComp + n^4)$.
    Similar to \thmref{edgeEmbed}, the reconstruction uses two diagrams for
    each pair of vertices, which is $n^2 - n$ diagrams.

\end{proof}

Putting together \thmref{genIntComp} and \thmref{highDEdgeEmbed}, we obtain a
graph reconstruction algorithm for graphs embedded with straight-line edges in
$\R^d$:

%$O(dn^{d+1} + d(n+\persComp))$
% $O(\persComp n^2 + n^4)$
\begin{theorem}[Generalized Graph Reconstruction]
    If~$\simComp=(V,E)$ is a straight-line embedded graph in $\R^d$ for~$d>1$, then
$V$, $E$, and the embedding of~$G$ can be computed
    using augmented persistence diagrams \mbox{from~$n^2 -n + d + 1$} different directions in
    time $O(dn^{d+1} + n^4 + (d + n^2)\persComp )$,
    where~$\Theta(\persComp)$ is the time
    complexity of computing a single~diagram.
\label{thm:edge-higher-d}
\end{theorem}

\section{Experiments}
\label{sec:experiments}

In this section, we empirically validate both the correctness of our algorithm
and the runtime of various steps of the algorithm.
Furthermore, we explore the runtime of checking for the presence of edges
for a fixed vertex set as the number of edges varies.
We conclude with an analysis of the probability of having `bad
input;' that is, input with small angles.
Timings were taken on a MacBook Pro with a 2.3GHz Intel Core i5 processor,
8GB RAM,
running macOS Mojave~10.14.6.
Our implementations were written in C++, compiled with the LLVM toolchain version 10.0.1.
Our code is publicly available and experiments were run using the reconstruction library at
git commit hash \emph{d9fd610}\footnote{The library is available at
\url{https://github.com/compTAG/reconstruction}.}.

\paragraph{Implementation}

To complement the theoretical results of this paper, we have written the reconstruction code in C++.
In the implementation, an
oracle is given a direction and returns an augmented persistence diagram.  We use the
Dionysus~2.0 library~\cite{dionysus} for computing
the augmented persistence diagram.  We store the computed birth-death pairs in two lists,
one for zero-dimension points (representing connected components) and one for
one-dimensional points (representing loops).

We note that we have made some changes from the graph reconstruction algorithm
presented in the previous sections.
In particular, instead of using \cite{verma2011slow} to compute a cyclic ordering of all vertices in
$\Theta(n^2)$, in \algref{edge_recon} \lineref{edge:1minang},
we implemented a na\"ive approach of
ordering around each vertex independently in~$O(n^2 \log n)$.

\paragraph{Data}

Our experimental data is a set of random plane graphs embedded in $\R^2$.
For~$n\in \Z^+$,
we randomly generate~$n$ points from the uniform
distribution over the unit square. Next, we compute the Delaunay triangulation using
the SciPy library~\cite{scipy} in Python. This random triangulation is similar to those
of~\cite{boots1999spatial, chenavier2016stretch, devillers2018walking}, but uses a
binomial point process with a uniform density rather than a Poisson point process.
Finally, we arrive at our random
graph by setting a percentage~$\alpha$ of edges~$E$ from the triangulation to keep and
delete edges until~$\alpha \cdot |E|$ remain. The result is a random subgraph
of a Delaunay~triangulation.

\paragraph{Timing}

For each experiment about timing, we subtract out the time spent computing
diagrams as to capture only the time spent in the algorithm.  In
addition, we run each experiment five times and report the average of the runs.

%%%%% RECONSTRUCTION TIME %%%%%
\subsection{Experimental Runtimes}
Our first experiment validates the theoretical runtime of the plane graph
reconstruction algorithm.  We
fix the percent of edges at~$\alpha = 10\%$ and
vary the number of vertices~$n \in \{10,20,\ldots,80\}$.
For each value of $n$, we reconstruct ten random graphs as described in
\pararef{Data}{experiments}.
We track the time spent in
vertex and edge reconstruction
and record the timings as described in \pararef{Timing}{experiments}.

In \figref{vertex_runtimes} and \figref{edge_runtimes}, we plot the time
for reconstructing vertices and edges versus $n$, respectively.
Times are measured in milliseconds
and each mark is the mean for five runs on each random graph.
In \figref{vertex_runtimes} we fit~\mbox{$t = \beta_0 + \beta_1 n \log n$}
to the experimental data, which had a RMSE of 0.0019 ms.
The figure agrees with \thmref{intComp}, which showed that the
vertex reconstruction runtime is $O(n\log n)$
(ignoring the time for computing diagrams).
\figref{edge_runtimes} must be interpreted with a little more care.  Recall that
\thmref{edgeEmbed} showed that the edge reconstruction
runtime is~$\Theta(n^2\persComp)$. However, the runtime of the loop in
Lines~\ref{line:edge:forloopstart}--\ref{line:edge:forloopend}
of \algref{edge_recon} is $O(n^3)$, and in the experiments,
we subtract the time for computing diagrams.
Thus, we expect the time to be upper-bounded by $n^3$.
We fit the curve $t = \beta_0+\beta_1 n^3$ to the experimental data,
which had a RMSE of 0.45 ms.
The figure agrees with \thmref{edgeEmbed}.

\begin{figure}[htb]
    \centering
    \begin{subfigure}[b]{0.47\textwidth}
        \includegraphics[width=\textwidth]{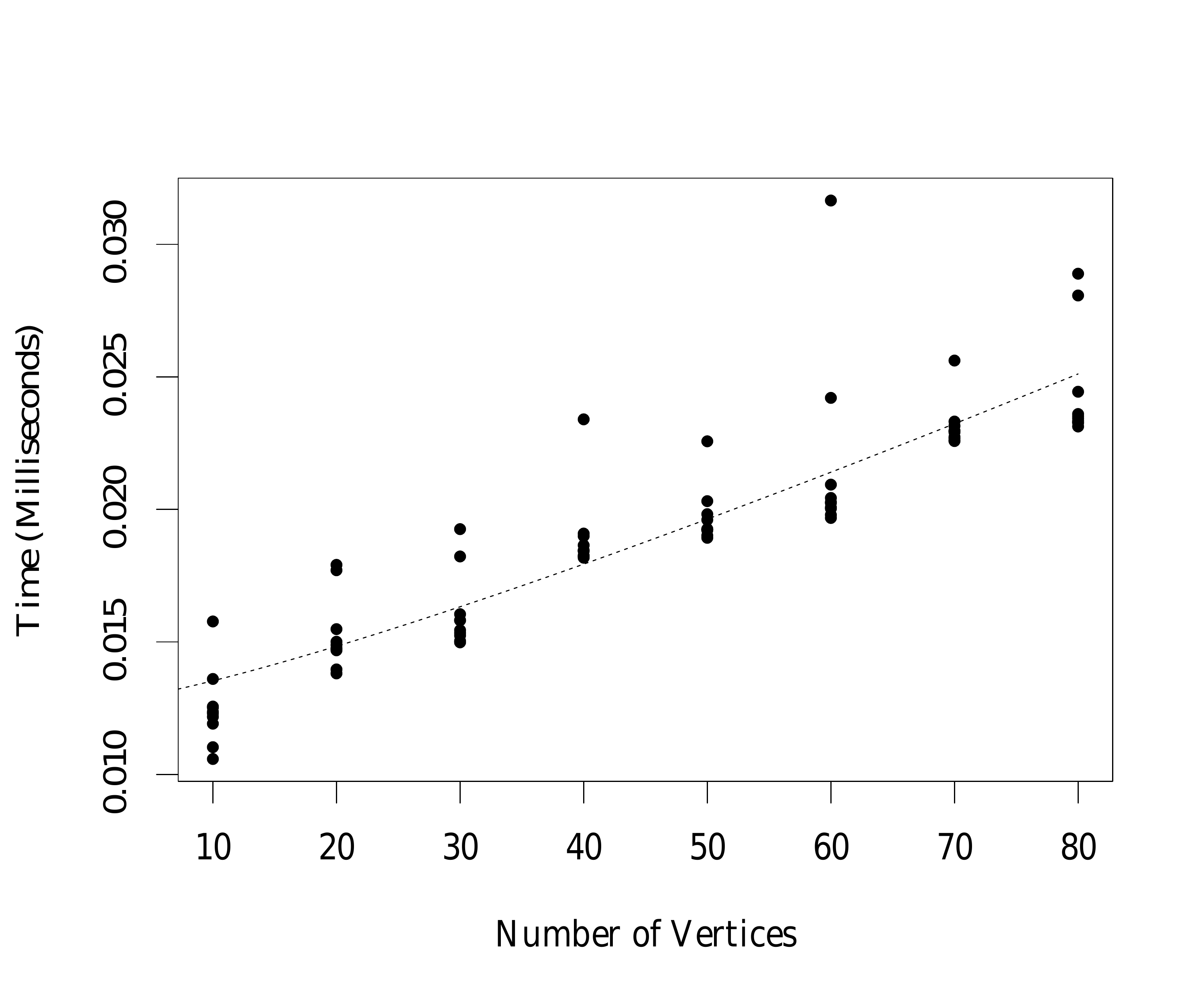}
        \caption{Vertex Reconstruction}
        \label{fig:vertex_runtimes}
    \end{subfigure}
    \quad
    \begin{subfigure}[b]{0.47\textwidth}
        \includegraphics[width=\textwidth]{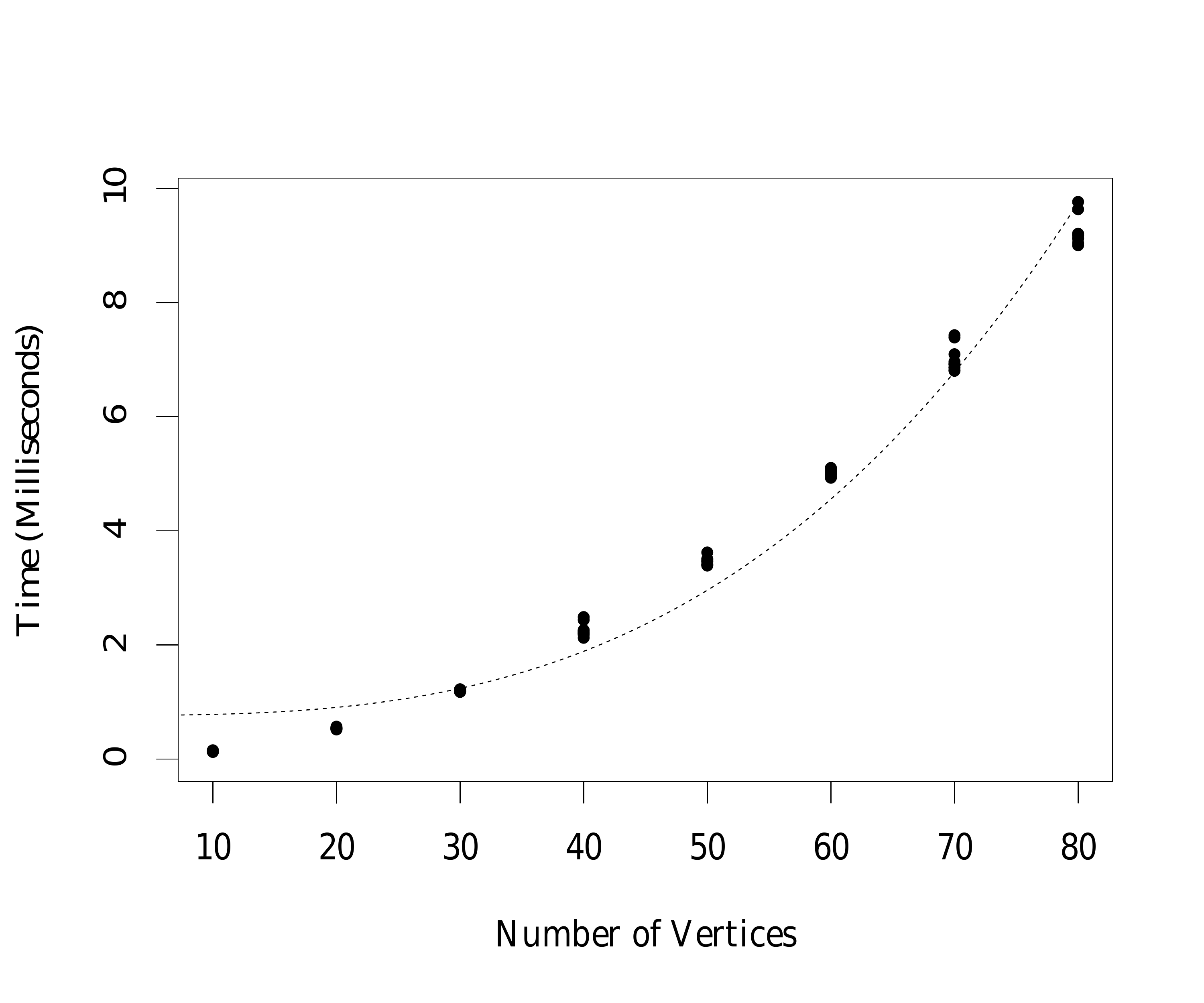}
        \caption{Edge Reconstruction}
        \label{fig:edge_runtimes}
    \end{subfigure}
    \caption{Experimental times for reconstructing plane graphs. For each
        value of $n$, we record the vertex and edge reconstruction times
        for $10$ random graphs and compare to the theoretical runtimes
        for vertex reconstruction (\thmref{intComp})
        and edge reconstruction (\thmref{edgeEmbed}).
        The experimental timings agree with the theoretical~runtimes.}\label{fig:runtimes}
\end{figure}

\subsection{Effect of Edges on Runtime}
\label{cpedge}

Our second experiment investigates the effect of edge density on our algorithm.
Note that while our runtimes are expressed in the number of vertices,
investigating the edges adds an additional insight.
As in the previous section, each graph is generated as described in
\pararef{Data}{experiments} and we track the time spent in the vertex
and edge reconstruction as described in \pararef{Timing}{experiments}.
Specifically, for a fixed Delaunay triangulation on a vertex set of size $n$,
we vary the percent of edges $\alpha$ from the Delaunay triangulation.
We use~$\alpha \in \{10,20,\ldots, 100\}$ and~\mbox{$n \in \{10, 25, 50, 100\}$.}
In \figref{vertex_vs_edges} and \figref{edges_vs_edges}, each value of $n$ is
represented by a different line in the plot.

In the first part of this experiment, we investigate
the effect of edge density on vertex reconstruction times.
In \figref{edges_vs_edges}, we see a constant relationship in the average
runtime per vertex as a function of the percent of edges.
To confirm, we fit the curve $t= \beta_0$,
with RMSE
218.5 ns for $n=10$,
62.4 ns for $n=25$,
33.4 ns for $n=50$, and
18.2 ns for $n=100$.
Thus, as predicted by the analysis of \thmref{intComp}, the runtime of the
vertex reconstruction algorithm is independent of the number of edges.
\begin{figure}[htb]
  \centering
  \begin{subfigure}[b]{0.47\textwidth}
    \includegraphics[width=\textwidth]{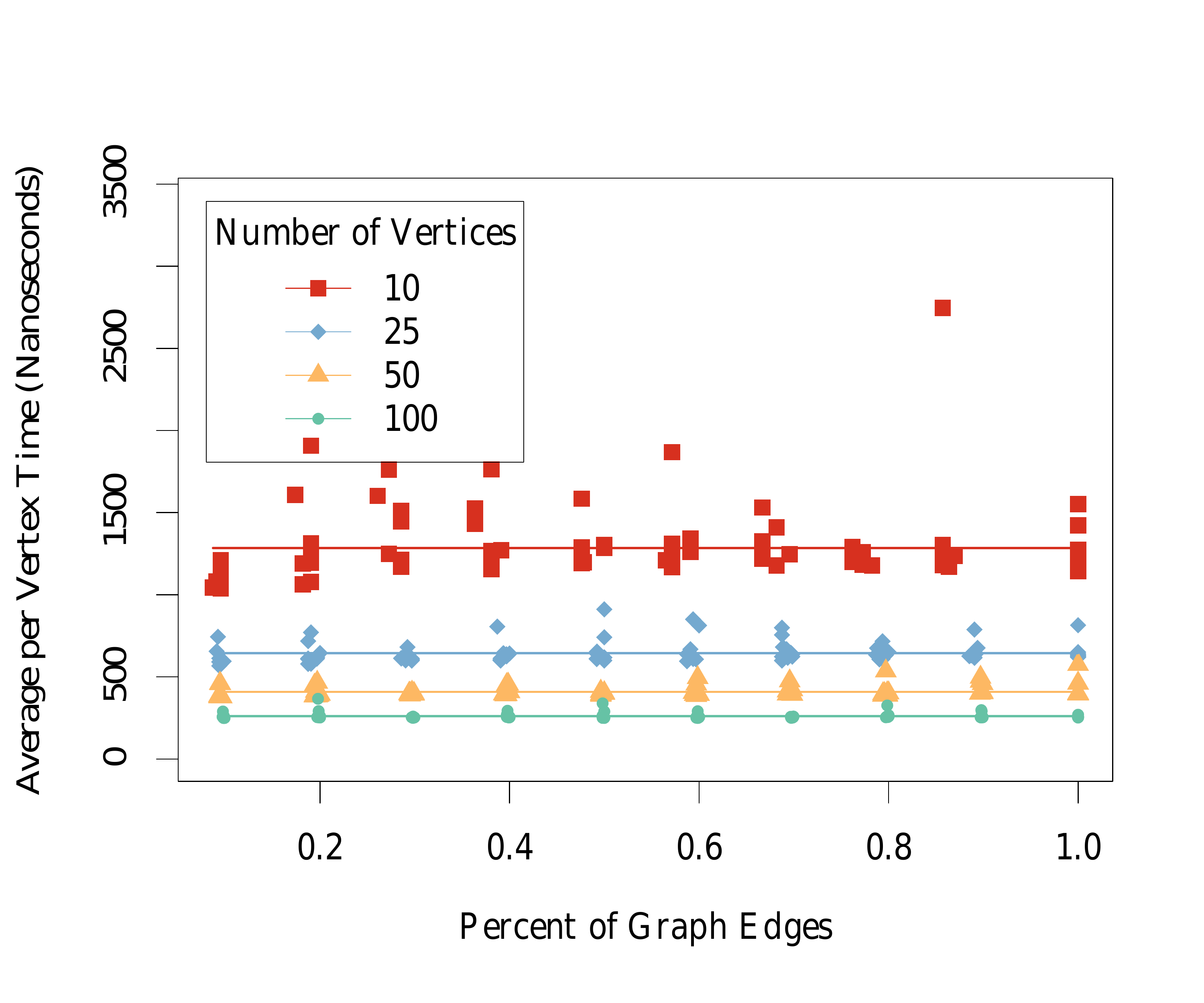}
    \caption{Vertex Time}
    \label{fig:vertex_vs_edges}
  \end{subfigure}
  \quad
  \begin{subfigure}[b]{0.47\textwidth}
    \includegraphics[width=\textwidth]{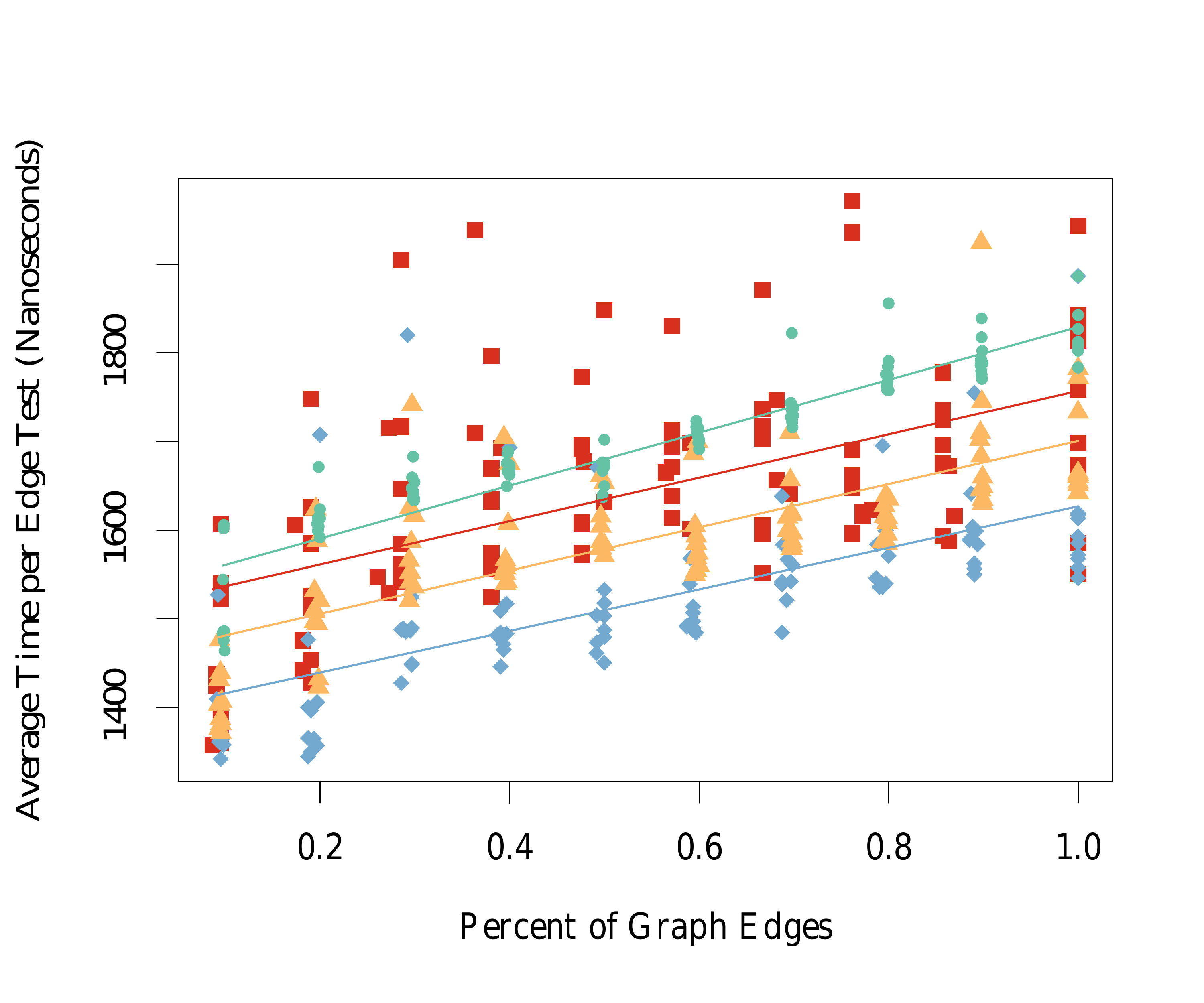}
    \caption{Edge Time}
    \label{fig:edges_vs_edges}
  \end{subfigure}
  \caption{
      Experimental reconstruction times for plane graphs as the number of
      edges increases.
      For each $n$, we record the vertex and edge reconstruction times for $10$
      random graphs as we increase the percent of edges from a
      Delaunay triangulation.
      \emph{Left:} From (\thmref{intComp}), we expect that the vertex reconstruction
      time is independent of the number of edges.
      \emph{Right:} From (\thmref{edgeEmbed}), for a graph with $n$ vertices
      and $m$ edges, we conduct $n^2-n$ edge tests on a given graph.
      Each edge test compares two augmented persistence diagrams of size $O(n+m)$.
      Thus, we expect that the time grows linearly with respect to the number
      (and hence the percent) of edges.
      In both plots, the experimental timings agree with the theoretical run times.
  }
  \label{fig:vs_edges}
\end{figure}

In the second part of this experiment, we focus on the reconstruction time of
edges in the graph.
In \figref{edges_vs_edges}, we see a linear relationship in the average
runtime per edge as a function of the percent of edges.
To confirm, we fit the curve $t = \beta_0+\beta_1 n$,
with RMSE
$108.3$ ns for $n=10$,
$72.9$ ns for $n=25$,
$61.9$ ns for~$n=50$, and
$33.4$ ns for $n=100$.
The linear growth is somewhat unexpected as the number of edges does not appear
in the analysis of \thmref{edgeEmbed}.
But, recall that the edge reconstruction algorithm computes indegree.
By \lemref{Indegree}, the computation takes time proportional
to the size of the augmented persistence diagrams (in dimensions zero and
one combined).
Moreover, the size of the one dimensional diagram grows linearly with respect to
the number of edges.
Thus, when~$\alpha$ is small, as would be the case in sparse graphs,
we see fewer cycles, which makes $\dgm{1}{\cdot}$ smaller. This results in
a small improvement in runtime.

\subsection{Minimum Angle}\label{ss:minangle}

Very small angles cause numerical issues, which is a problem that we encountered
in the above experiments.
In order to mitigate the issues, our code has an assertion that bow tie
half-angles are at least $10^{-6}$ radians. We observed that many of the experiments with
over $50$ vertices were failing this assertion.  In addition, a bounded angle
assumption was used in~\cite{curry2018directions}.
Thus, we investigate the
probability of encountering small angles, both with a back-of-the-envelope calculation
and empirical observations.

\paragraph{Back-of-the-Envelope Calculation}
Let
$a,b,$ and $c$ be three points sampled i.i.d.\ from the uniform distribution
on the unit square.  We consider $\angle abc$.  Without loss
of generality, we can translate and rotate the points such that $b$ is at the origin,
and $c$ is on the positive $x$-axis.  Then, consider the line $ba$.  If $a$ was randomly
chosen, then any angle it makes with the positive $x$-axis ($ba$) is equally likely, so $$ \mathbb{P}(\angle abc
< 10^{-6}) = \frac{2\cdot10^{-6}}{2\pi} = \frac{10^{-6}}{\pi} \approx
3.18e{-7}.$$
Furthermore, let $S_n$ be a set of $n$ (different)
points sampled i.i.d.\ from the uniform distribution on the unit
square.
Notice that we can make \mbox{${{n}\choose{3}} = n(n-1)(n-2)$} angles defined by the
points in $S_n$.
Let $A_n$ be the event that there exists an angle less than $10^{-6}$ in $S_n$.
Assuming independence of angles in $S$, we have:
$$
\mathbb{P}(A_n) = 1 - \left(1-\frac{10^{-6}}{\pi}\right)^{n(n-1)(n-2)}.
$$
We observe that $\mathbb{P}(A_n)>5\%$ when $n \geq 56$.  In the above runtime
experiments, it is not surprising that we see the assertion failing.
In this calculation, we assume that all angles are independent, so
$\mathbb{P}(A_n)$ is an overestimation of the true probability. We conjecture:

\begin{conjecture}[Probability of Encountering Small Angles]
    Let $G=(V,E)$ be a randomly generated plane graph embedded in $\R^2$.
    If $|V|>55$, then with probability at least $5\%$, the minimum angle as
described in~\thmref{edgeEmbed} is less than $10^{-6}$ radians.
\end{conjecture}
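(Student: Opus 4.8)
The plan is to reduce the statement to a clean question about i.i.d.\ points and then control it with a first/second-moment (Poisson-type) argument. First I would observe that the minimum angle of \thmref{edgeEmbed}, namely $\theta=\tfrac12\min_{v}\theta(v)$, is a function of the \emph{vertex positions} alone: the preprocessing forms the lines through $v$ and \emph{every} other vertex (not just neighbours), so the edge-subsampling step of the data model is irrelevant, and \assumptref{gp} (which holds with probability one here) guarantees distinctness. Under the generation model the vertices are $n$ points drawn i.i.d.\ uniformly on the unit square, so it suffices to study $M_n:=\min_v\theta(v)$ for such a point set. Moreover, coupling $S_n\subset S_{n+1}$ gives $M_{n+1}\le M_n$ pointwise, so $\mathbb{P}(M_n<\epsilon)$ is nondecreasing in $n$; it therefore suffices to prove the bound at the single value $n=56$, with $\epsilon$ the angular threshold corresponding to $\theta<10^{-6}$ (bookkeeping the factor $\tfrac12$ and the line-versus-ray convention once and for all).

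Next I would set up a counting variable. Index by $\mathcal{I}$ the $n\binom{n-1}{2}$ ``apex-plus-pair'' configurations, let $Y_i$ indicate that configuration $i$ has angle below $\epsilon$, and put $X=\sum_{i}Y_i$, so that $\{M_n<10^{-6}\}=\{X\ge 1\}$. The first step is the first moment $\mathbb{E}X=|\mathcal{I}|\,p(\epsilon)$, where $p(\epsilon)=c\,\epsilon+O(\epsilon^2)$ is the single-configuration probability. Conditioning on the apex $B$, the directions of the other two points have a density $g_B$ on the circle, and a short computation gives $c=2\,\mathbb{E}_B\!\left[\int g_B^2\right]$. The back-of-the-envelope calculation implicitly takes $g_B$ uniform, giving $c=1/\pi$; since the direction density on a square is genuinely nonuniform, Jensen's inequality yields $\int g_B^2>1/(2\pi)$ strictly, so the true constant is \emph{larger} than $1/\pi$. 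Pinning $c$ down exactly (an explicit integral over the square) makes $\mathbb{E}X=\lambda_n$ with $\lambda_n$ at least the envelope value $n(n-1)(n-2)\,10^{-6}/\pi$.

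The substantive step is the second moment, upgrading $\mathbb{E}X$ into a lower bound on $\mathbb{P}(X\ge 1)$. I would bound $\mathbb{E}[X(X-1)]=\sum_{i\ne j}\mathbb{P}(Y_i=Y_j=1)$ by partitioning ordered pairs $(i,j)$ according to how many points the two configurations share (none, a single non-apex point, a single shared apex, or two points). Disjoint pairs contribute essentially $(\mathbb{E}X)^2$ and are harmless; the dangerous term is the ``pencil'' of near-coincident lines emanating from a shared apex, which is the source of clumping. Showing that each overlap class contributes $O(\lambda_n^2)$ plus a local term that is $o(\lambda_n)$, I would conclude via either the Chen--Stein bound or the second-moment inequality $\mathbb{P}(X\ge1)\ge(\mathbb{E}X)^2/\mathbb{E}[X^2]$ that $\mathbb{P}(X\ge 1)=1-e^{-\lambda_n}+o(1)$. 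Evaluating at $n=56$ gives $\lambda_{56}\gtrsim 0.053$, hence a value above $5\%$, and monotonicity finishes all $n>55$.

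The hard part is that the target $5\%$ sits essentially \emph{on top of} the envelope estimate (only about $5.16\%$ at $n=56$), so the argument carries almost no slack. Two competing corrections must be resolved quantitatively rather than to leading order: the nonuniform direction density on the square \emph{raises} $\lambda_n$ above the envelope (helping us), while positive correlation among apex-sharing configurations \emph{lowers} $\mathbb{P}(X\ge1)$ relative to the independent model (hurting us). The proof succeeds only if the exact value of $c$ and the exact size of the apex-sharing correlation sum are computed tightly enough that the first effect provably dominates; obtaining those two constants with controlled error is the main obstacle, and is presumably why the statement is posed as a conjecture rather than a theorem.
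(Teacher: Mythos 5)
You should know at the outset that the paper contains no proof of this statement: it is posed as a conjecture, supported only by the back-of-the-envelope independence calculation $\mathbb{P}(A_n)=1-\bigl(1-\tfrac{10^{-6}}{\pi}\bigr)^{n(n-1)(n-2)}$ and the experiments of \figref{prob_failure}, and the authors explicitly note that the independence assumption makes this an \emph{overestimate}. So there is nothing to compare your argument against; your proposal is an attempt to upgrade the heuristic to a theorem. Several of your reductions are sound and worth keeping: the angle of \thmref{edgeEmbed} depends only on vertex positions (the cyclic orderings use all vertices, not just neighbors), so the edge-subsampling is indeed irrelevant; the coupling $S_n\subset S_{n+1}$ only adds lines through each apex, so $M_{n+1}\le M_n$ and the case $n=56$ is the critical one; and your Jensen/Cauchy--Schwarz observation that the nonuniform direction density $g_B$ forces the per-configuration constant strictly above the uniform value $1/\pi$ is correct and genuinely helps the lower bound.

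The gaps are concrete and, given how little slack there is, decisive. First, the convention bookkeeping you defer is not deferrable: the quantity in the conjecture is the \emph{half}-angle $\theta=\tfrac12\min_v\theta(v)$ (so the relevant threshold on $\theta(v)$ is $2\cdot 10^{-6}$), the angles are between \emph{lines} (directions mod $\pi$, doubling the per-configuration probability), and the number of distinct apex-pair configurations is $n(n-1)(n-2)/2$, half the paper's ordered count. Depending on which of these factors of $2$ you include, $\lambda_{56}$ ranges over roughly $0.026$, $0.053$, or $0.106$; against a target of $5\%$ versus $1-e^{-0.053}\approx 5.16\%$, these factors alone determine whether the statement is provable, and your quoted $\lambda_{56}\gtrsim 0.053$ silently mixes conventions. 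Second, your error control is asymptotic ($O(\lambda_n^2)$, $o(\lambda_n)$, $o(1)$) while $n=56$ and $\epsilon=10^{-6}$ are fixed, so every term must be an explicit number. In particular the plain second-moment inequality provably cannot close the argument at $\lambda=0.053$: since $X$ is a nonnegative integer, $\mathbb{E}[X^2]\ge\mathbb{E}[X]$, so $(\mathbb{E}X)^2/\mathbb{E}[X^2]\le\lambda$, and even with exactly vanishing pair correlations it is about $\lambda/(1+\lambda)\approx 0.0503$; any positive apex-sharing (pencil) correlation pushes it below $5\%$. You are therefore forced to the Chen--Stein route with an explicit bound on the dependency sums below about $1.6\times 10^{-3}$ --- plausible at these scales (the dominant sums are of order $n^5\epsilon^2\approx 5\times 10^{-4}$ up to constants) but precisely the computation not carried out. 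Third, the paper's own data cut against the knife-edge case: \figref{prob_failure} shows the independence model \emph{upper-bounds} the empirical probability, so the truth at $n=56$ may well sit below $5\%$. You identify this tension honestly, and that is the right diagnosis: until the exact constant $c$ and the apex-sharing correlation sum are computed with controlled error, your proposal is a credible research plan for settling the conjecture (in either direction), not yet a proof of it.
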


Our final experiment is an investigation of the minimum angle of random point
sets. This is motivated by understanding the frequency of small angles, and the
extent to which our assertion and the assumption of~\cite{curry2018directions}
may or may not be limiting in practice.

\paragraph{Empirical Observations}
This
experiment illustrates how the bounded angle assumption is actually quite limiting, and that
small angles appear with high probability as the number of vertices increases.

\begin{figure}[htb]
\begin{center}
\includegraphics[scale=0.3]{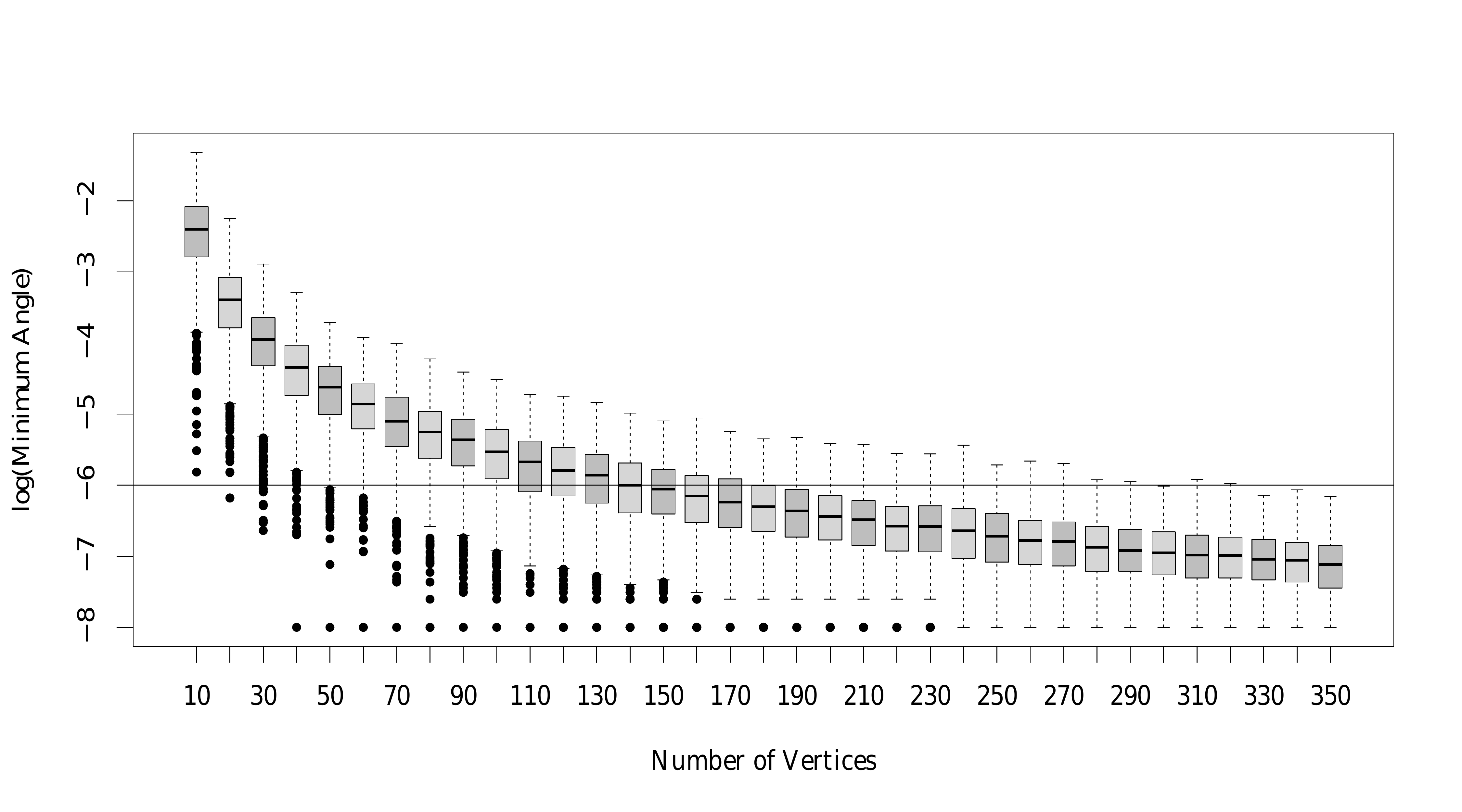}
\caption{Box plots for distributions of minimum angles for $100$
random graphs with $n$ vertices. Any graph that has a minimum angle less than
 $10^{-6}$, denoted by the horizontal line, encounters numerical errors from the
reconstruction algorithm. As the number of vertices increases, the number of
graphs with a minimum angle of $10^{-6}$, and thus encountering numerical errors,
increases.}
\label{fig:min_angle}
\end{center}
\end{figure}
We generated random sets of vertices of size $n\in \{10, 20, \ldots, 350\}$,
and measured the minimum angle between all triples of vertices.
In \figref{min_angle}, we show the box plots of the minimum
angles for $1000$ random graphs for each~$n$.
In \figref{prob_failure},
we show the probability of a graph having a minimum angle less than~$10^{-6}$
as predicted by the theoretical model and by the experimental data.
We notice at $n=20$, we begin to see
minimum angles of $10^{-6}$ radians appear in our random graphs. At~$n=70$, the
number of random graphs with a minimum angle of $10^{-6}$ radians or less
becomes even more substantial, with~$8\%$ of the angles less than~$10^{-6}$
experimentally.
This experiment suggests that for plane graphs with~\mbox{$n \geq 50$,} our algorithm
may encounter numerical errors due to small~angles.  And, as $n$ grows into the
hundreds, the probability gets close to one.  For the
range of $n$ values we tested experimentally, we see---as we expected---that the
theoretical back-of-the-envelope calculation upper bounds the experimental
probability of finding a small angle.  As we mentioned above, the differences in
\figref{prob_failure} are due to the fact that the predicted probability is done
assuming independence of angle.
\begin{figure}[htb]
\begin{center}
\includegraphics[scale=0.3]{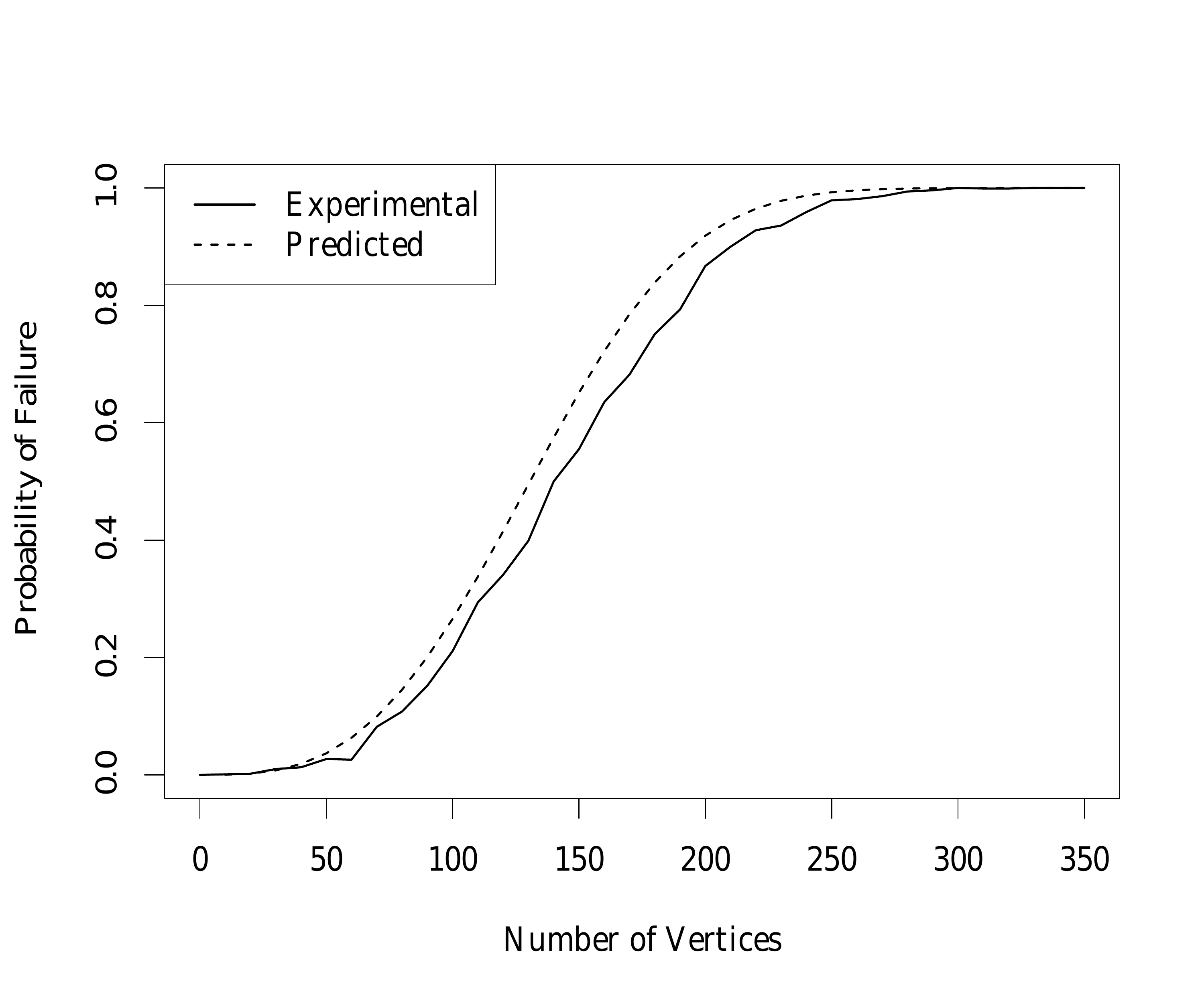}
\caption{Probability of failure (encountering an angle less than $10^{-6}$)
    from experimental data (solid line) and the theoretical model (dashed line).
The theoretical model provides an upper bound for the experimental
probability of finding a small angle.}
\label{fig:prob_failure}
\end{center}
\end{figure}

%----------------------- Discussion --------------------------------------
\section{Discussion}
\label{sec:discussion}

In this paper, we address the problem of reconstructing embedded graphs using persistence
diagrams through the following three main contributions.

\begin{itemize}
\item[1.] We provide the first deterministic algorithm for reconstructing a
    plane graph using (directional, augmented) persistence diagrams.  For a graph with $n$
vertices embedded in~$\R^2$, this algorithm uses $n^2-n+3$ directional APDs in
$\Theta(n^2T_G)$ time.
\item[2.] We extend the algorithm for reconstructing plane graphs to
reconstructing graphs embedded in $\R^d$ for~$d\geq 2$. This algorithm
uses $n^2-n + d + 1$ directional APDs in
$O(dn^{d+1} + n^4 + (d + n^2)\persComp )$ time.
\item[3.] We experimentally validate the correctness and time complexity of our
algorithm for reconstructing plane graphs by implementing the algorithm and
testing the implementation on randomly generated plane graphs.
We found that that the empirical probability of encountering numerical issues are
likely to occur even for graphs with only~$20$--$50$ vertices.
\end{itemize}

We identify a number of avenues for future work.
We recently extended the methods given here to
reconstruct geometric simplicial complexes (not just embedded graphs) using
APDs; see the preprint~\cite{fasy2019persistence}.  It seems possible, however,
that the algorithm for computing the generalization of indegree could be
improved.
Additionally, it would be interesting to explore reconstruction problems
with other topological descriptors,
such as Euler characteristic curves~(ECCs).
Indeed, in~\cite{fasy2018challenges}, we identified some of the challenges
in reconstructing with ECCs.  Thus,
algorithmic reconstruction with ECCs is still largely unexplored.
In other extensions of this work, one could consider variants of the inverse
problem and classify simplicial complexes, up to some topological invariant
(e.g., homotopy type)
instead of up to geometric representation. One can
hope that computing such a classification would require fewer persistence diagrams
than is required for a complete geometric reconstruction.

Finally, our algorithms rely on \emph{augmented}
persistence diagrams, but we conjecture that the corresponding persistence
diagrams (without the explicitly stored on-diagonal points) can be used to
sufficiently differentiate one shape from another.
If proven, this conjecture would close the gap between
our theoretical results for APDs and the use of PDs in practice, such as in
\cite{hofer2017deep}.

\paragraph{Acknowledgements}
This material is based upon work supported by the National Science Foundation
under the following grants: CCF 1618605 (BTF, SM, RM), DBI 1661530 (BTF, DLM,
LW), DGE 1649608 (RLB, AS), and DMS~1664858 (RLB, BTF, AS, JS).
Additionally, RM thanks the Undergraduate Scholars Program for both travel
funding and undergraduate research grants.
All authors thank the CompTaG club at Montana State University and the CCCG reviewers
for their thoughtful feedback on this work.  In particular, we thank Brad McCoy
for checking the examples for correctness. Finally, we thank the reviewers for
their careful reading and constructive comments.

\bibliographystyle{acm}
\bibliography{cccg-journal.bib}

\appendix

\section{Demonstration of Plane Graph Reconstruction}
\label{sec:Example}

We give an example of reconstructing a plane graph embedded in $\R^2$.
Consider the complex, $G$, given in \figref{exampleVertex}. The vertices of $G$
are
 $$V=\{(-1,2),(0,-1),(0.25,0),(1,1)\},$$
and edges are given by the following
pairs of vertices,
$$E=\{((-1,2),(0,-1)),((0,-1),(0.25,0)),((0,-1),(1,1)),((0.25,0),(1,1))\}.$$

\paragraph{Vertex Reconstruction}

We find vertex locations using the algorithm described in \secref{vRec}.
Note, that in this example,~$n=4$.  Using the persistence diagrams from
height filtrations in directions $e_1=(1,0)$ and $e_2=(0,1)$, we construct
the set of lines $\pLines{e_1}{V}=\{(-1,y), (0,y), (0.25,y), (1,y)\mid y\in \R\}$
and~$\pLines{e_2}{V}=\{(x,2),(x,-1),(x,0),(x,1)\mid x\in \R\}$ as shown in
\figref{Vertex-third_dir}. The set~$\pLines{e_1}{V}\cup \pLines{e_2}{V}$
of $2n=8$ lines has~$n^2 = 16$ possible locations for the vertices at the intersections
in~$A$. We show these filtration lines and intersections in \figref{Vertx-second_dir}.

We compute the third direction,~$\dir$, using the algorithm outlined in
\thmref{intComp}. Recall, that we need to find the maximum width between two lines
in $\pLines{e_1}{V}$, denoted by $w$, and smallest height between two adjacent lines
in~$\pLines{e_2}{V}$, denoted by $h$. In our example, $w=1-(-1)=2$ and
$h=2-1=1$. Then, we use the rectangle of width, $w$, and height, $h$, to choose
a direction~$\dir$. We pick $\dir$ to be a unit vector perpendicular to $[w,h/2]$.
In particular, we compute $\dir=(-0.243,0.970)\in \sph^1.$ Then, the four
three-way intersections in~$\pLines{e_1}{V} \cup \pLines{e_2}{V} \cup \pLines{\dir}{V}$
identify all Cartesian coordinates of the vertices in the graph.

\begin{figure*}
\centering
\begin{subfigure}{.3\linewidth}
  \centering
  \includegraphics[width=\linewidth,height=.8\linewidth]{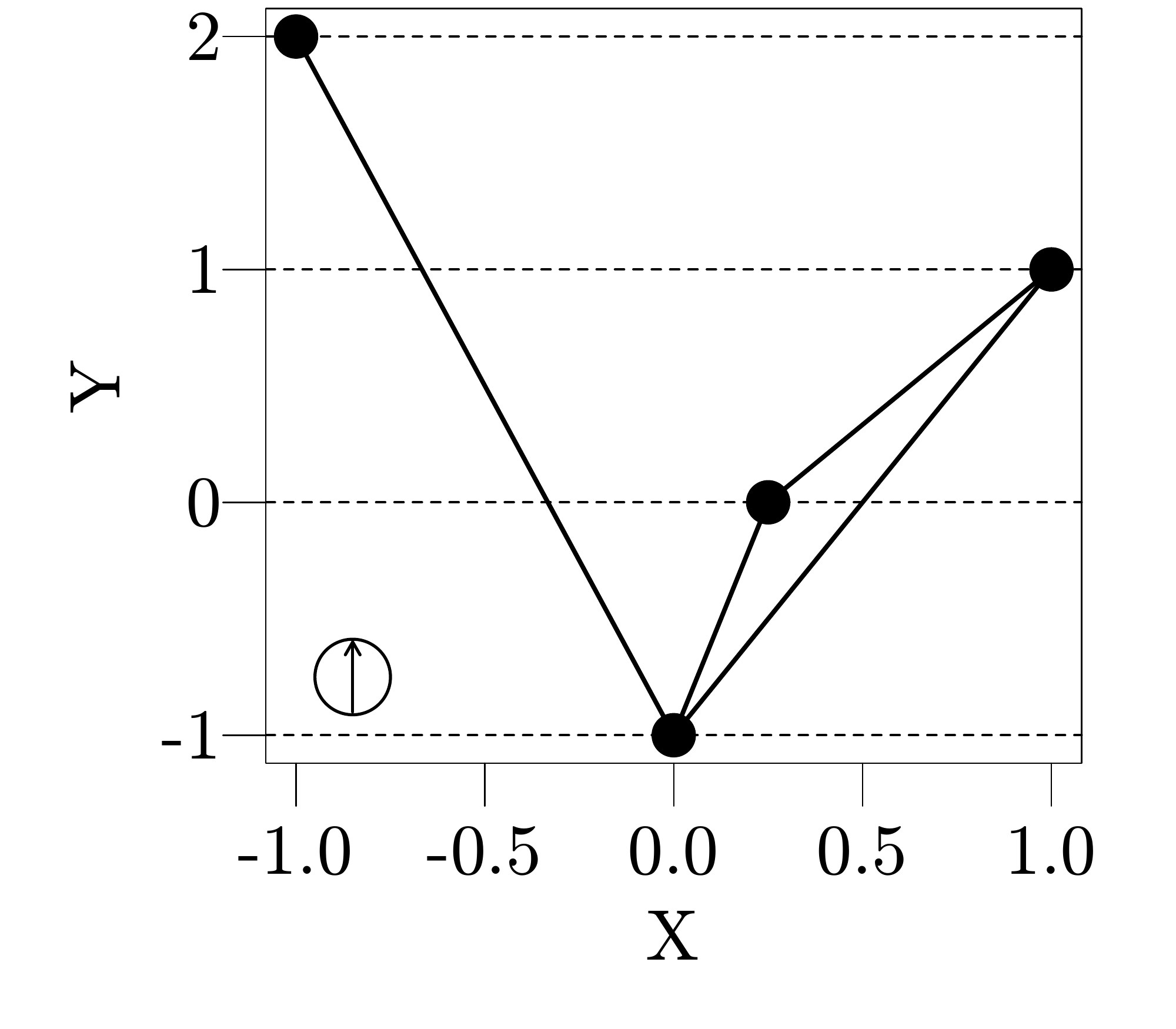}
  \caption{Filtration lines for $e_2$}
\end{subfigure}%
\begin{subfigure}{.3\linewidth}
  \centering
  \includegraphics[width=\linewidth,height=.8\linewidth]{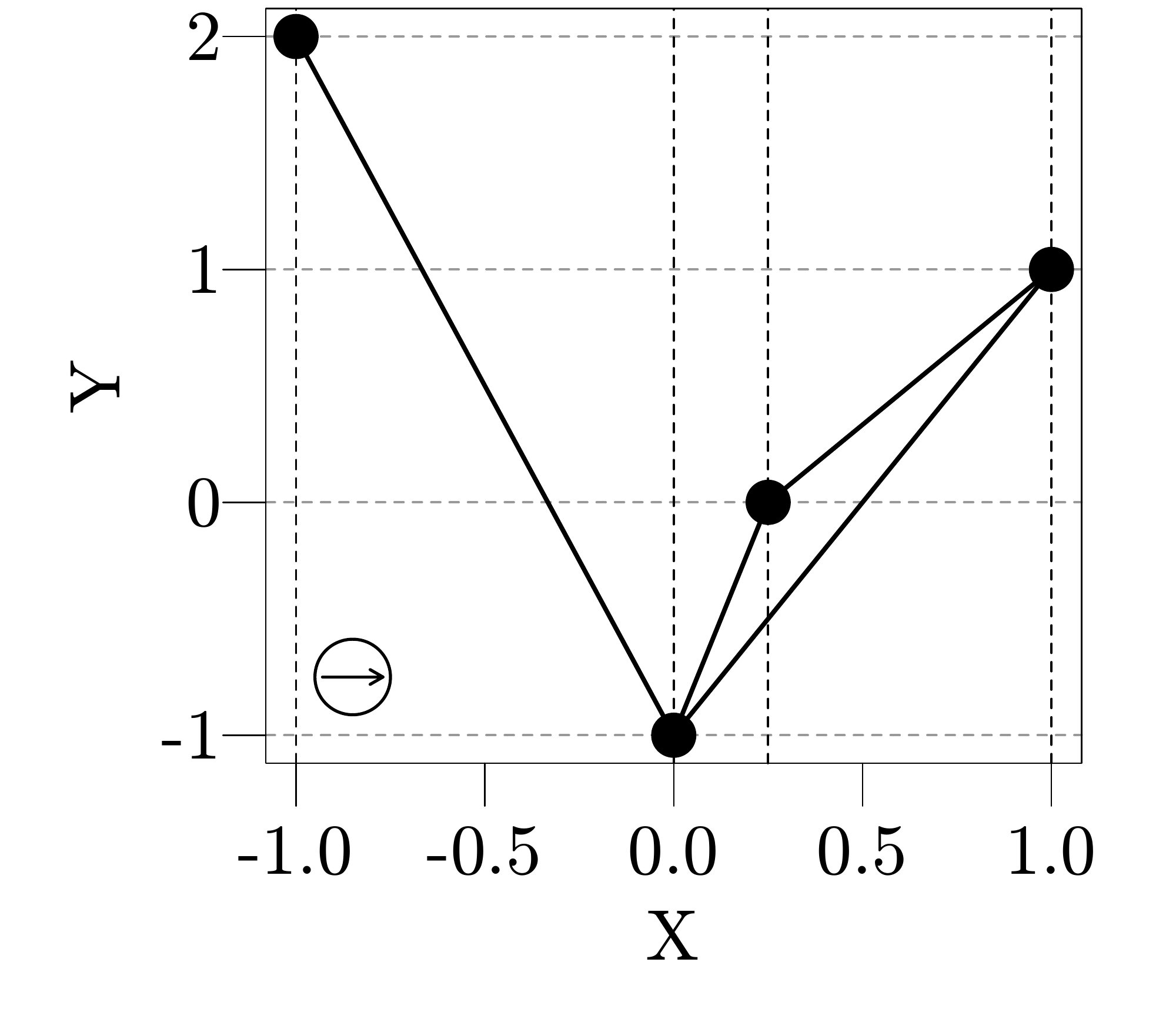}
  \caption{Filtration lines for $e_1$}
  \label{fig:Vertx-second_dir}
\end{subfigure}%
\begin{subfigure}{.3\linewidth}
  \centering
  \includegraphics[width=\linewidth,height=.8\linewidth]{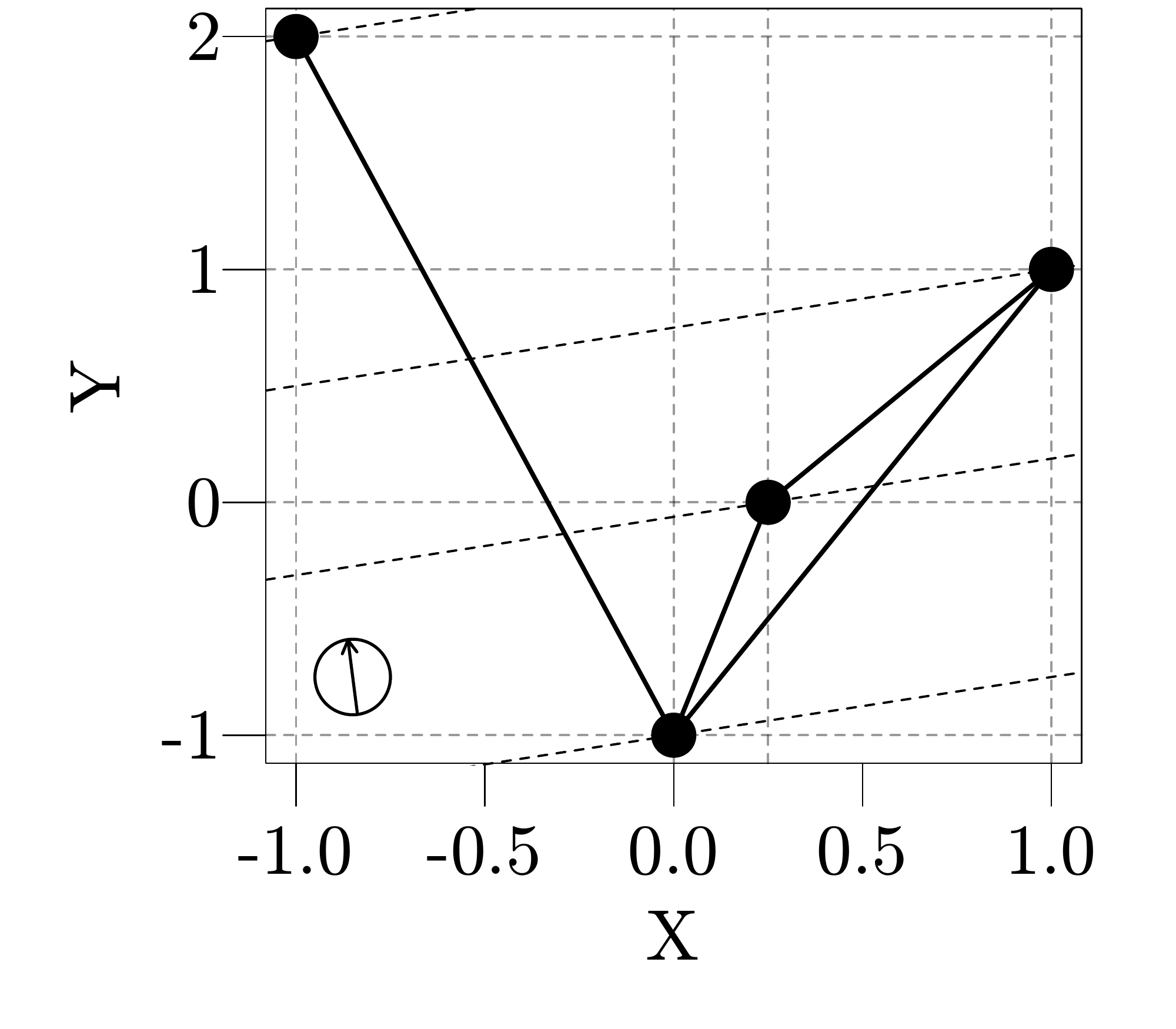}
  \caption{Filtration lines for $\dir$}
    \label{fig:Vertex-third_dir}
\end{subfigure}\\
\vspace{1cm}
\begin{subfigure}{.3\linewidth}
  \centering
  \includegraphics[width=\linewidth,height=.8\linewidth]{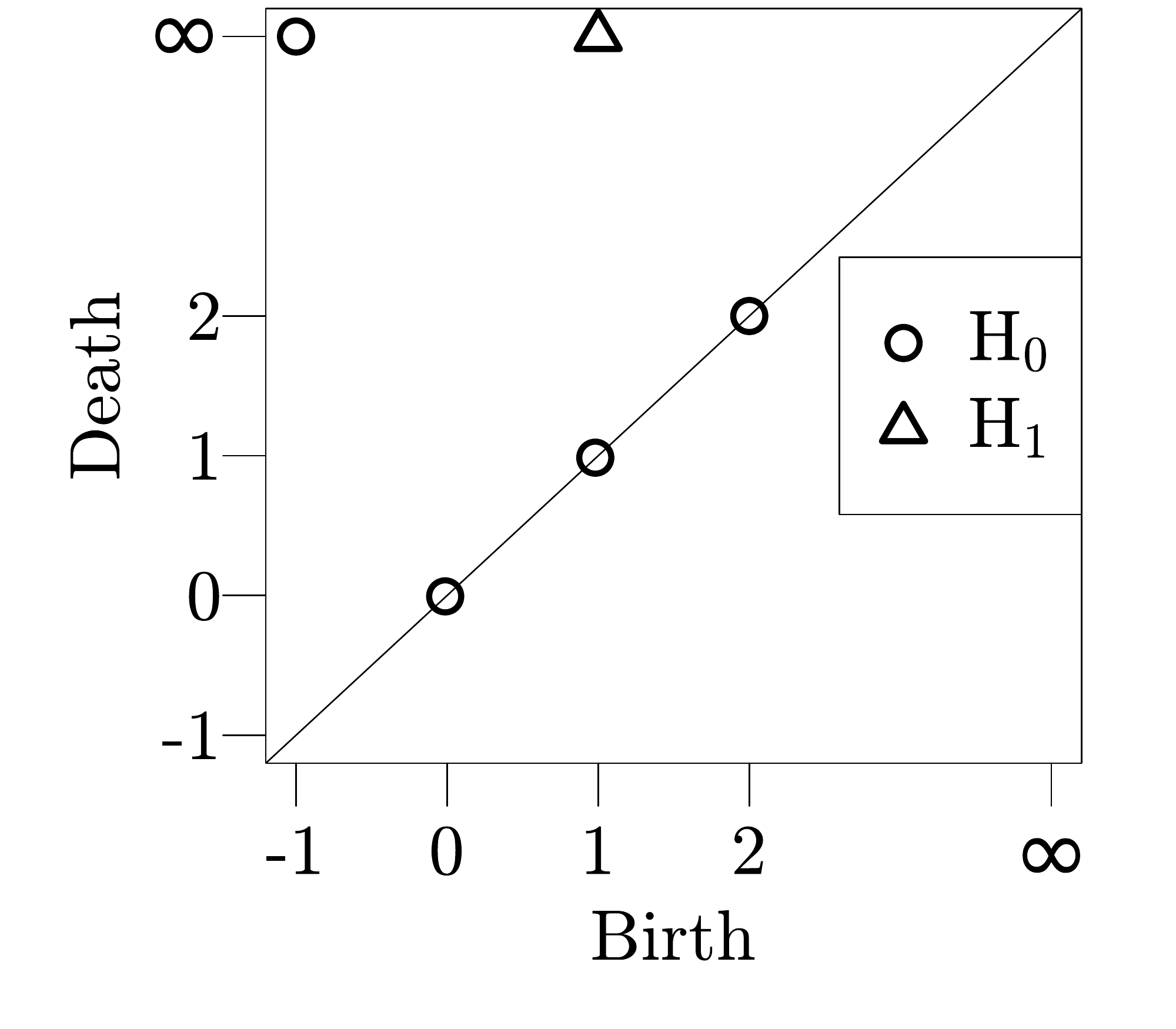}
  \caption{Diagrams for $e_2$}
\end{subfigure}
\begin{subfigure}{.3\linewidth}
  \centering
  \includegraphics[width=\linewidth,height=.8\linewidth]{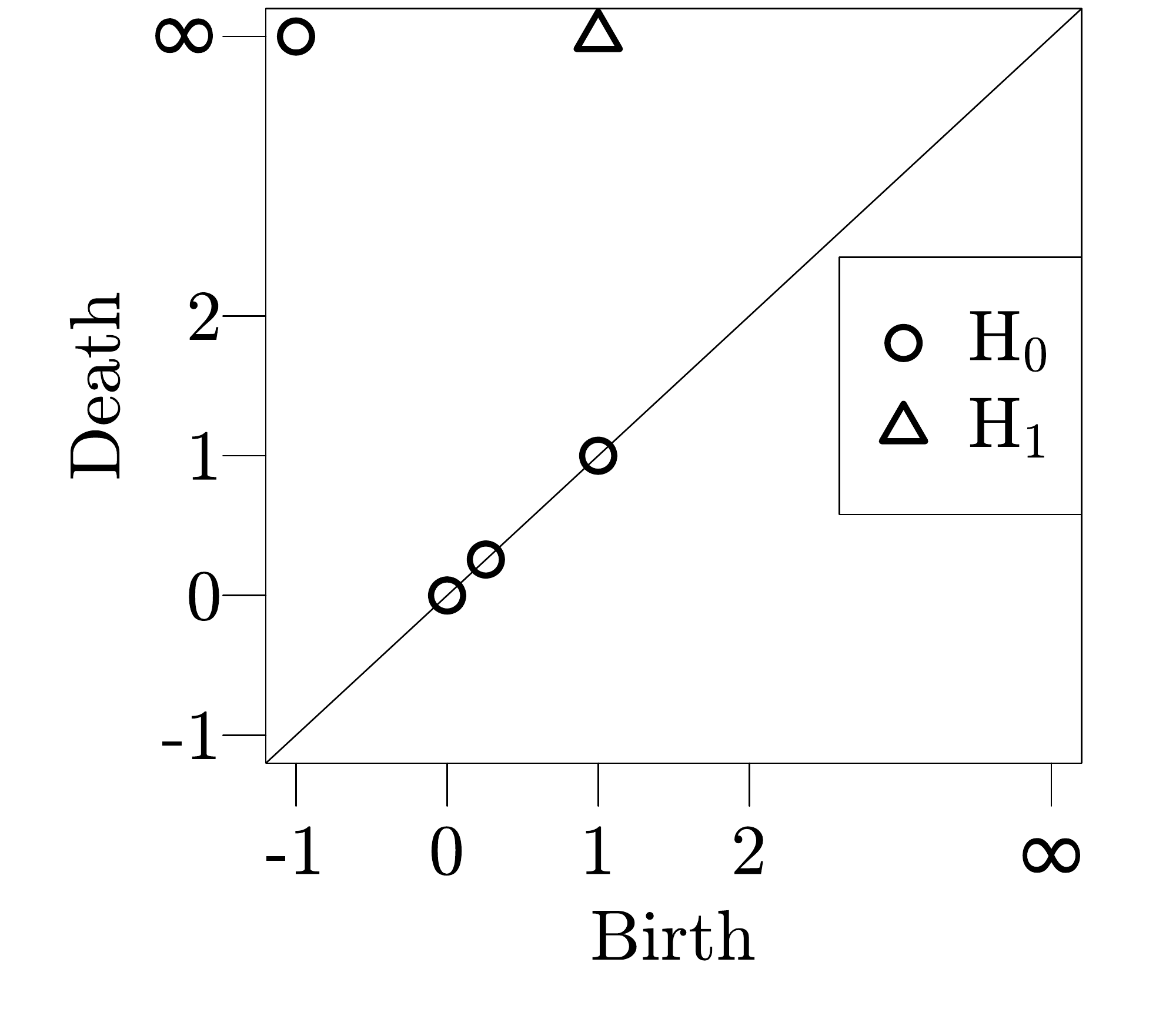}
  \caption{Diagrams for $e_1$}
\end{subfigure}%
\begin{subfigure}{.3\linewidth}
  \centering
  \includegraphics[width=\linewidth,height=.8\linewidth]{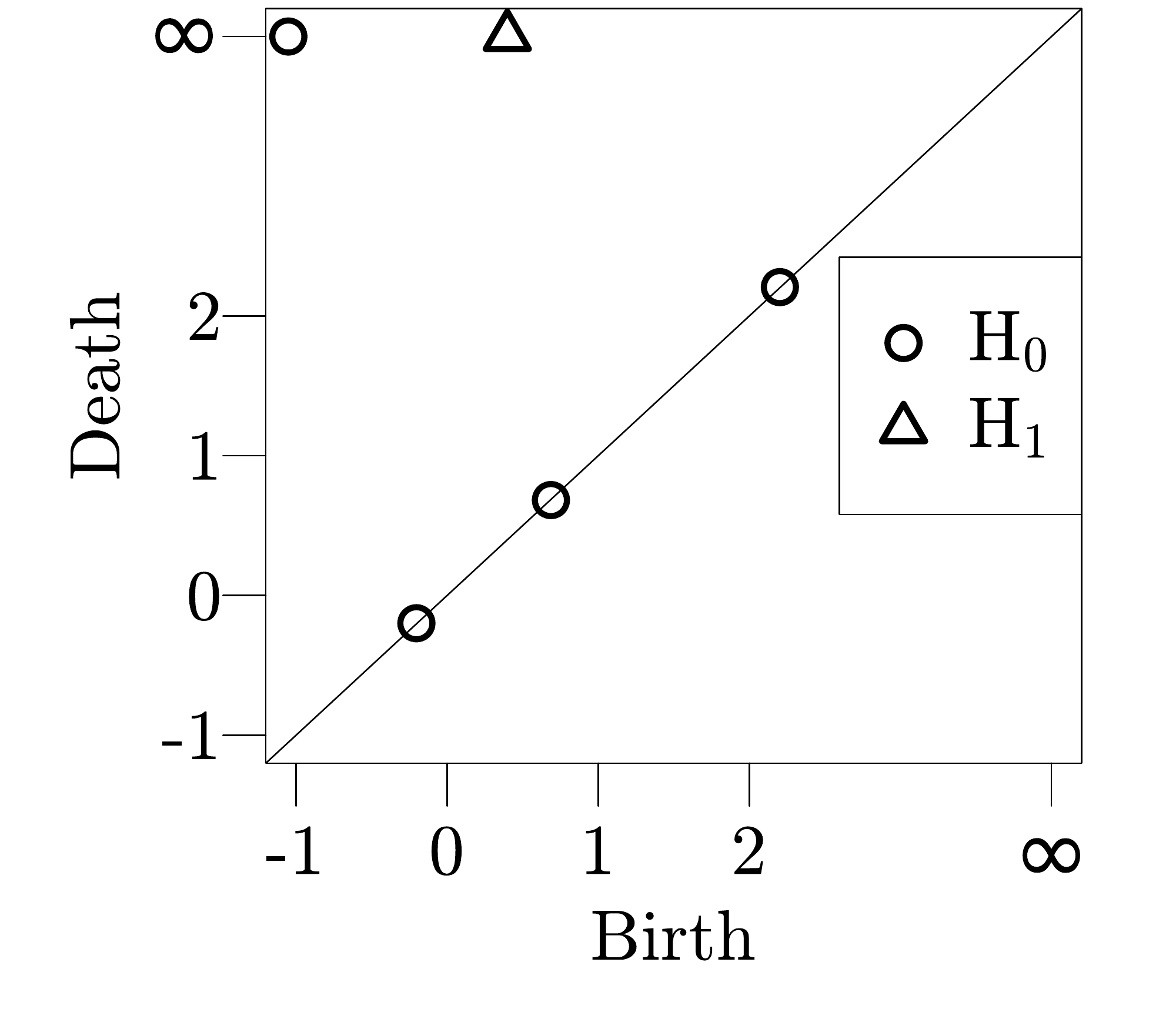}
  \caption{Diagrams for $\dir$}
\end{subfigure}
  \caption{Example of vertex reconstruction from three directions, $e_2$,
  $e_1$ and $\dir$ with corresponding persistence diagrams built for
  height filtrations from these directions. The filtration lines are the dotted
  lines superimposed over the complex.}
  \label{fig:exampleVertex}
\end{figure*}

\paragraph{Edge Reconstruction}
Next, we reconstruct all edges of $\simComp$ as described in \secref{eRec}. In order to
do so, we first compute the bow tie half-angle denoted by~$\theta$. For each
vertex~$v\in V$, we consider the cyclic ordering of the points in
$V \setminus \{v\}$ around $v$ and compute $\theta(v)$, which is the minimum angle
 between all adjacent pairs of lines through $v$. Noting that the vertex set is
$V =\{(-1,2), (0,-1), (0.25,0), (1,1)\}$, we find~$\theta(v)$ to be approximately
 ~$0.237, 0.219, 0.399$, and~$0.180$ radians, respectively. Then, we fix
$\theta$ so that \mbox{$\theta=\min_{v}\theta(v)/2=0.09$.}

Now, for each of the~$n(n-1)/2$ pairs of vertices~$(v,v') \in V^2$, we construct
a bow tie~$B$ and then use this bow tie to determine whether an edge exists between
the two vertices. We go through two examples: one for a pair of vertices that does have an
edge between them, $((0.25,0),(1,1))$, and one for a pair that does not,
$((0.25,0),(-1,2))$. First, consider the pair~$v = (0.25,0)$ and~\mbox{$v' = (1,1)$.}
To construct the bow tie at $v$ that contains $v'$, we first find a unit vector
perpendicular to the vector that points from~$v$ to~$v'$. Here, our unit vector
is~$\dir = (-0.8,0.6)$. Now, we find~$\dir_1,\dir_2$ such that the
bow tie at $v$ has half-angle~$\theta$ with~$\dir$. We choose~$\dir_1 = (-0.851,0.526)$
and~$\dir_2 = (-0.743,0.669)$. By \lemref{Indegree}, we use the persistence
diagrams from these two directions to compute~$\indeg{v}{\dir_1}$ and~$\indeg{v}{\dir_2}$.
We observe that~$\dgm{0}{\dir_1}$ contains exactly one birth-death pair~$(x,y)$
such that~$y=v \cdot \dir_1$ and $\dgm{1}{\dir_1}$ has one birth-death pair such
that~$x=v \cdot \dir_1$. Thus,~$\indeg{v}{\dir_1}=2$. On the other hand, $\dgm{0}{\dir_2}$
contains exactly one birth-death pair~$(x,y)$ such that~$y=v \cdot \dir_2$,
but~$\dgm{1}{\dir_2}$ contains no birth-death pair such that~$x=v \cdot \dir_2$.
So~$\indeg{v}{\dir_2}=1$. Now, since~$|\indeg{v}{\dir_1}-\indeg{v}{\dir_2}|=1$,
we know that~$(v,v') \in E$, by \lemref{edgeExist}.

For the second edge example, consider the pair of vertices~$v = (0.25,0)$ and~$v' = (-1,2)$.
Again, we construct the bow tie at $v$ containing $v'$, by finding a unit vector
perpendicular to the vector, $(v'-v)$. We choose this to be $s=(0.848,0.530)$.
Then, the~$\dir_1$ and~$\dir_2$ that form angle~$\theta$ with~$s$
are~\mbox{$ \dir_1 = (0.892, 0.452)$} and~$\dir_2 = (0.797,0.604)$. Again by \lemref{Indegree},
we examine the zero- and one-dimensional persistence diagrams from these two directions
to compute the indegree from each direction for vertex~$v$. In $\dgm{0}{\dir_1}$, we
have one pair~$(x,y)$ that dies at~$y=v \cdot \dir_1$, but in~$\dgm{1}{\dir_1}$, no pair
is born at~$x=v \cdot \dir_1$. So~$\indeg{v}{\dir_1}=1$. We see the exact same
for~$\dir_2$, which means that~$|\indeg{v}{\dir_1}-\indeg{v}{\dir_2}|=0$. Since
\lemref{edgeExist} tells us that we have an edge between~$v$ and~$v'$ only if the
absolute value of the difference of indegrees is one, we know that there is no edge between
vertices~$(0.25,0)$ and~$(-1,2)$.
\begin{figure*}
\centering
\begin{subfigure}{.3\linewidth}
  \centering
  \includegraphics[width=\linewidth,height=.8\linewidth]{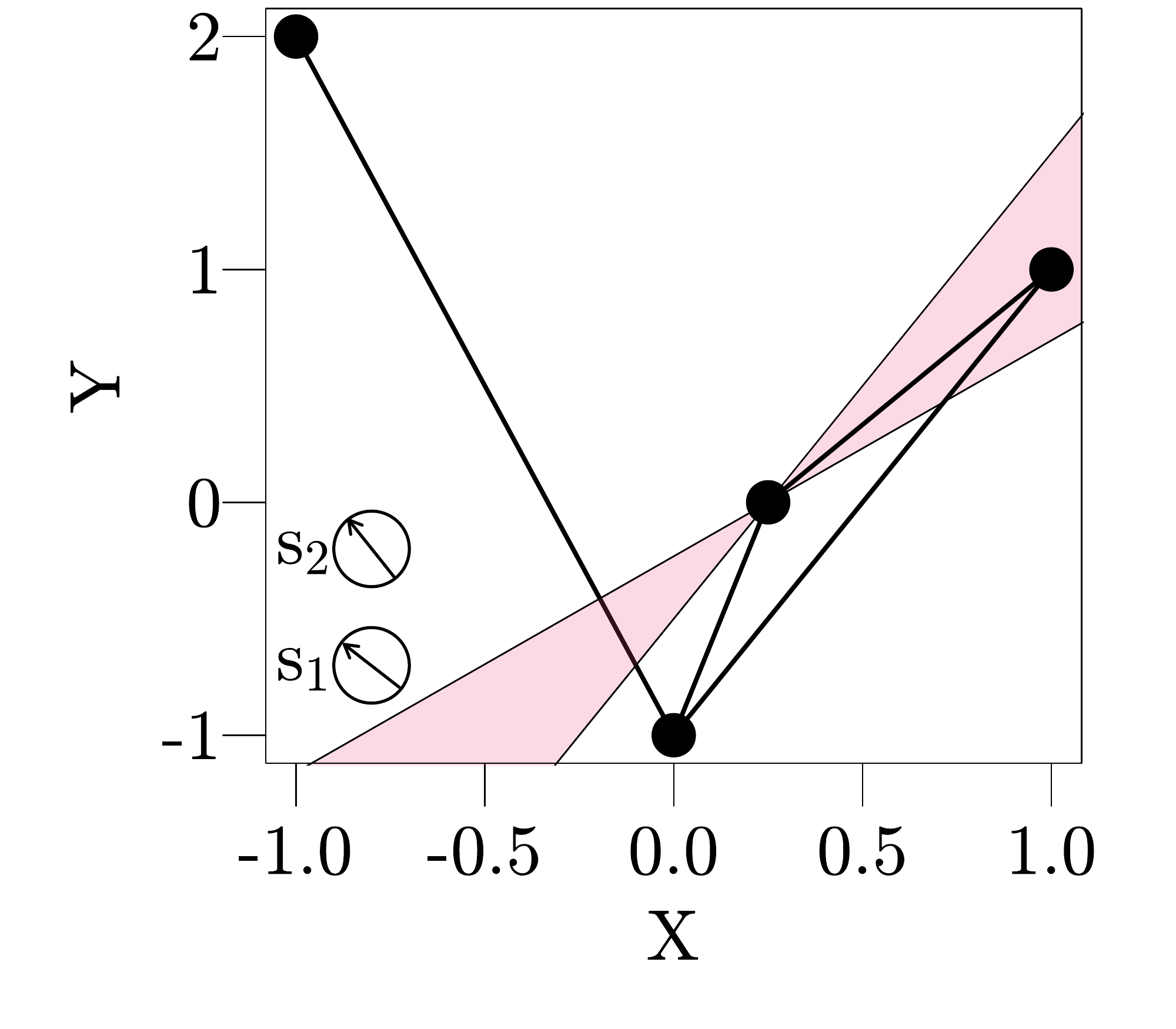}
  \caption{Bow tie lines for $\dir_1$ and $\dir_2$}
\end{subfigure}%
\begin{subfigure}{.3\linewidth}
  \centering
  \includegraphics[width=\linewidth,height=.8\linewidth]{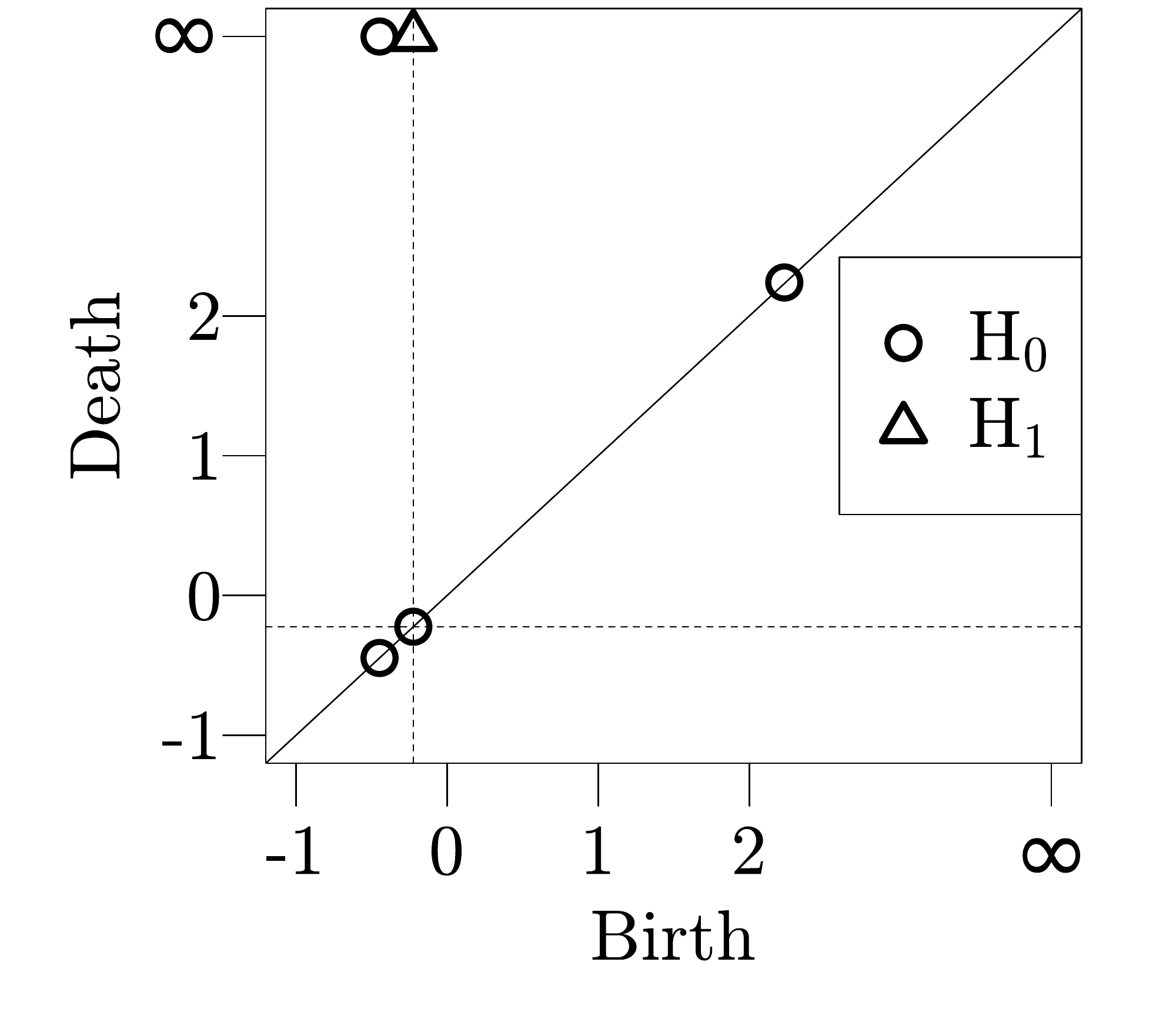}
  \caption{Diagram for $\dir_1$}
  \label{fig:Edge-second_dir}
\end{subfigure}%
\begin{subfigure}{.3\linewidth}
  \centering
  \includegraphics[width=\linewidth,height=.8\linewidth]{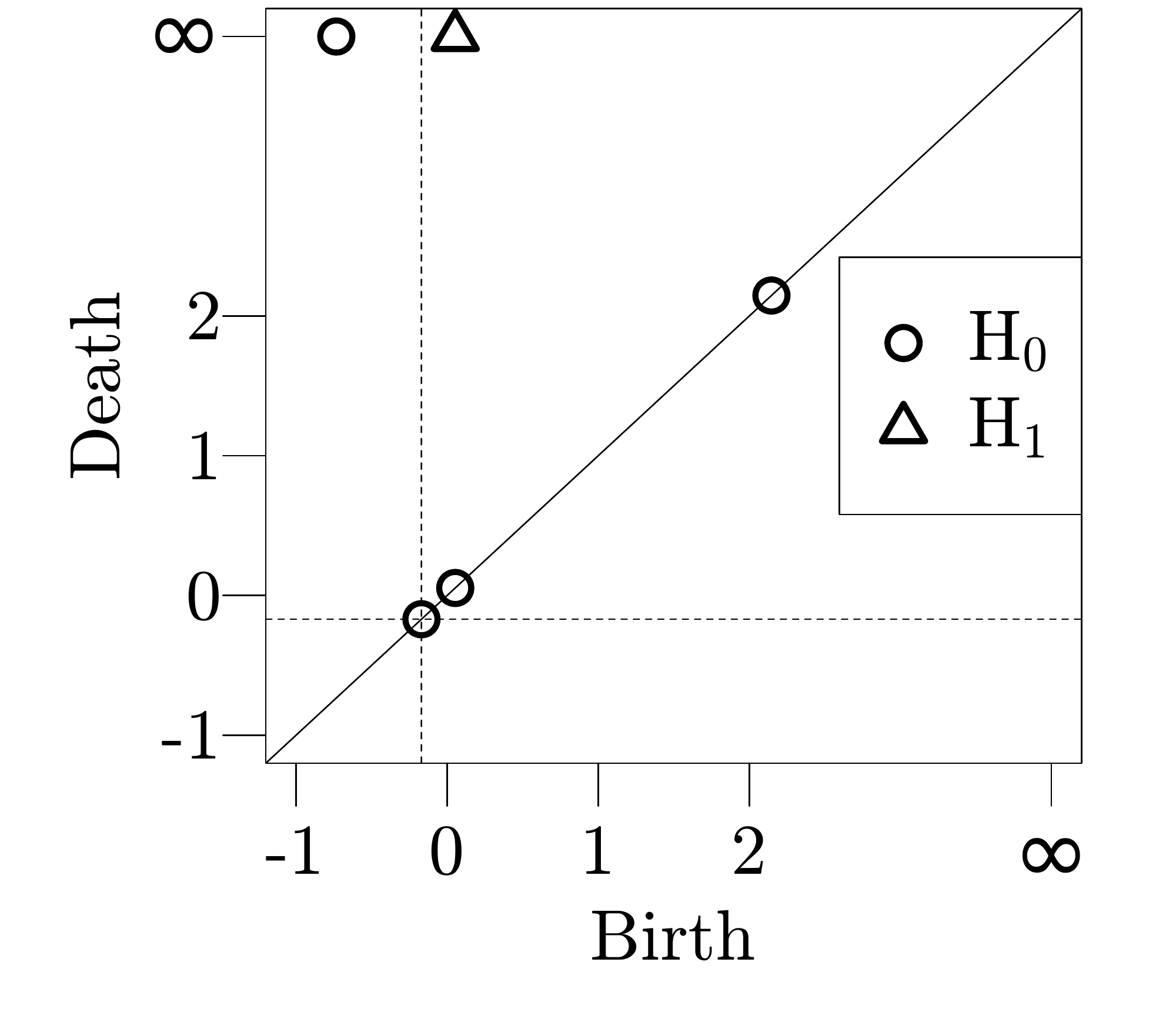}
  \caption{Diagram for $\dir_2$}
    \label{fig:Edge-third_dir}
\end{subfigure}\\
\vspace{1cm}
\begin{subfigure}{.3\linewidth}
  \centering
  \includegraphics[width=\linewidth,height=.8\linewidth]{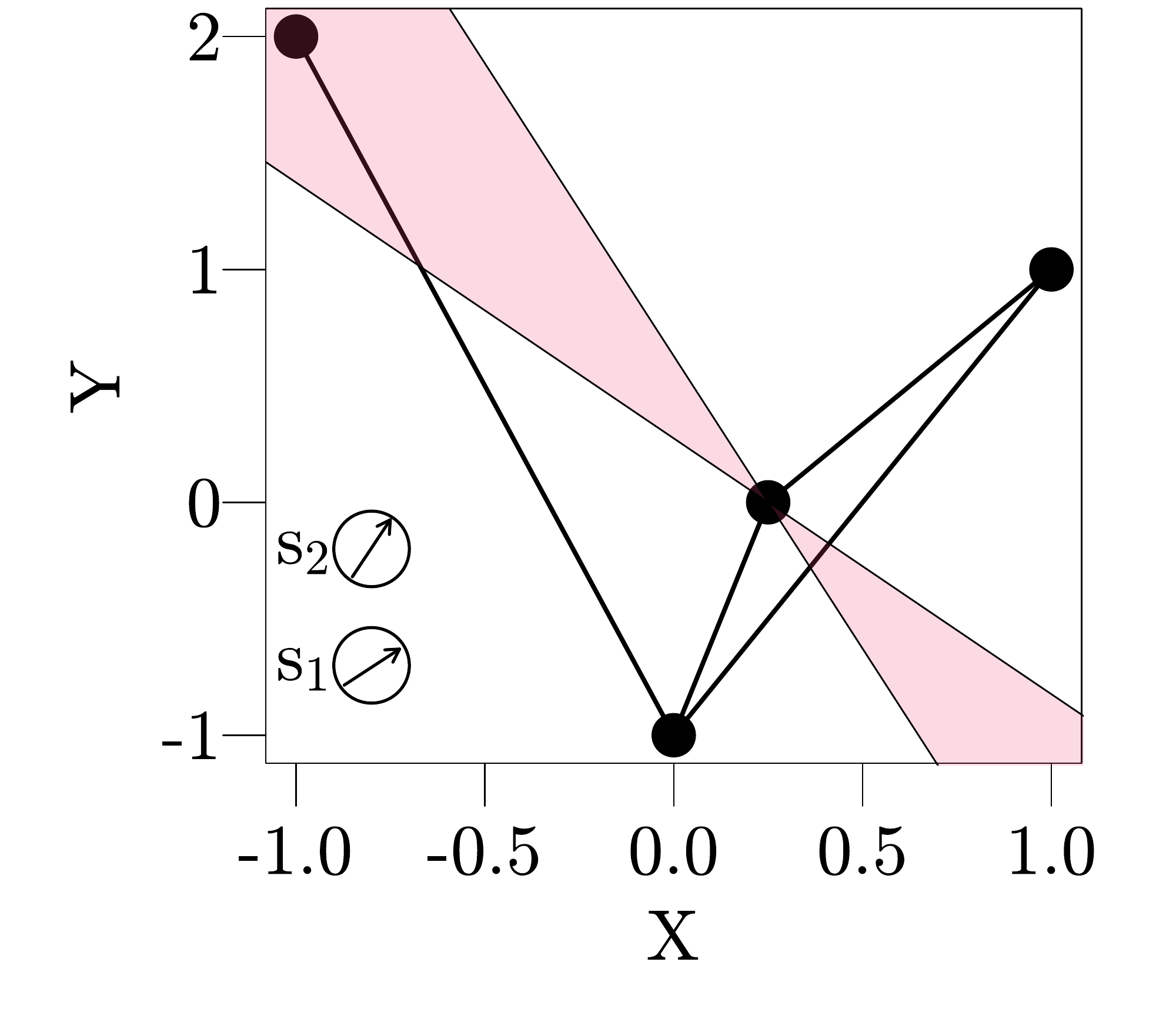}
  \caption{Bow tie lines for $\dir_1$ and $\dir_2$}
\end{subfigure}
\begin{subfigure}{.3\linewidth}
  \centering
  \includegraphics[width=\linewidth,height=.8\linewidth]{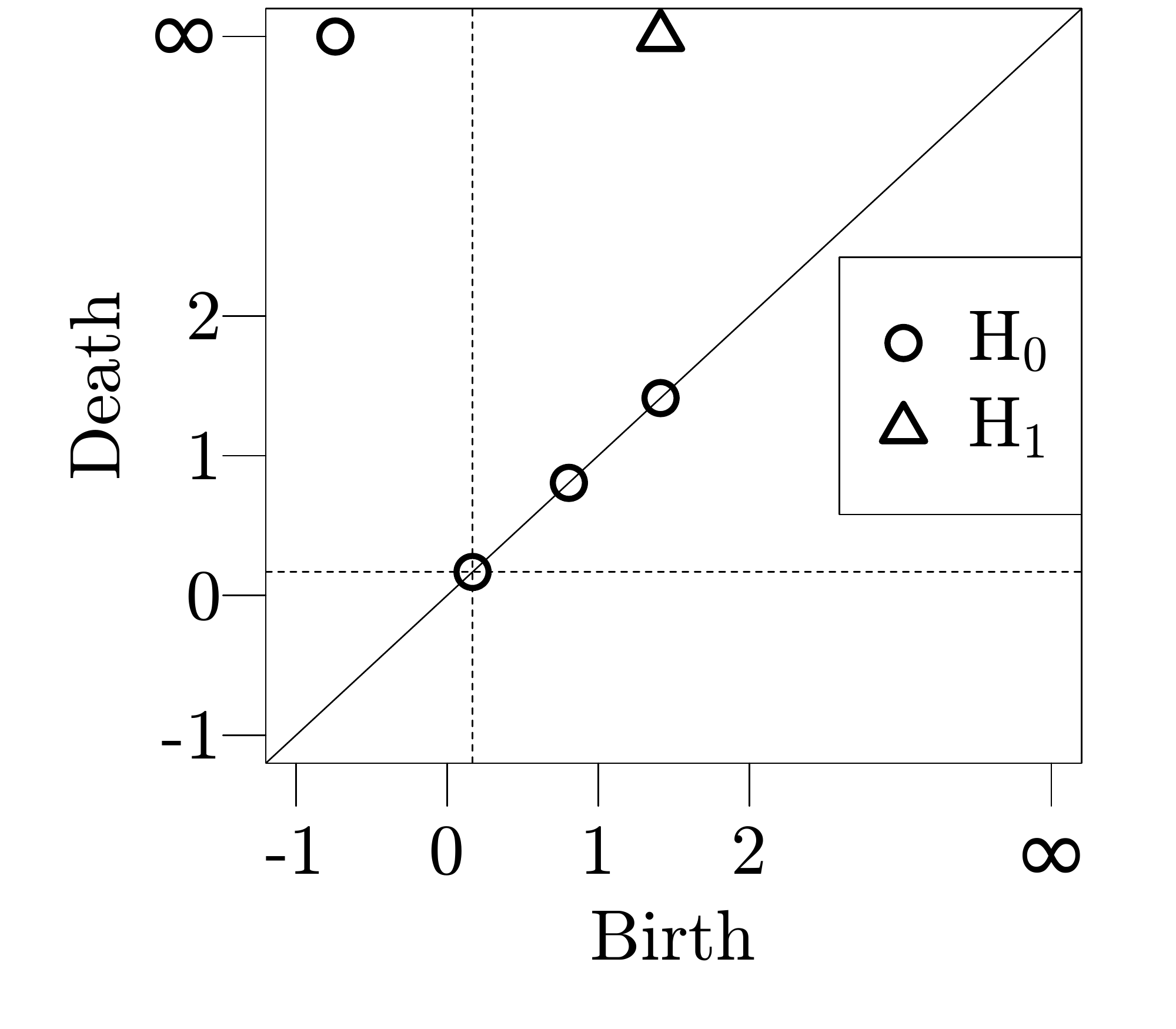}
  \caption{Diagram for $\dir_1$}
\end{subfigure}%
\begin{subfigure}{.3\linewidth}
  \centering
  \includegraphics[width=\linewidth,height=.8\linewidth]{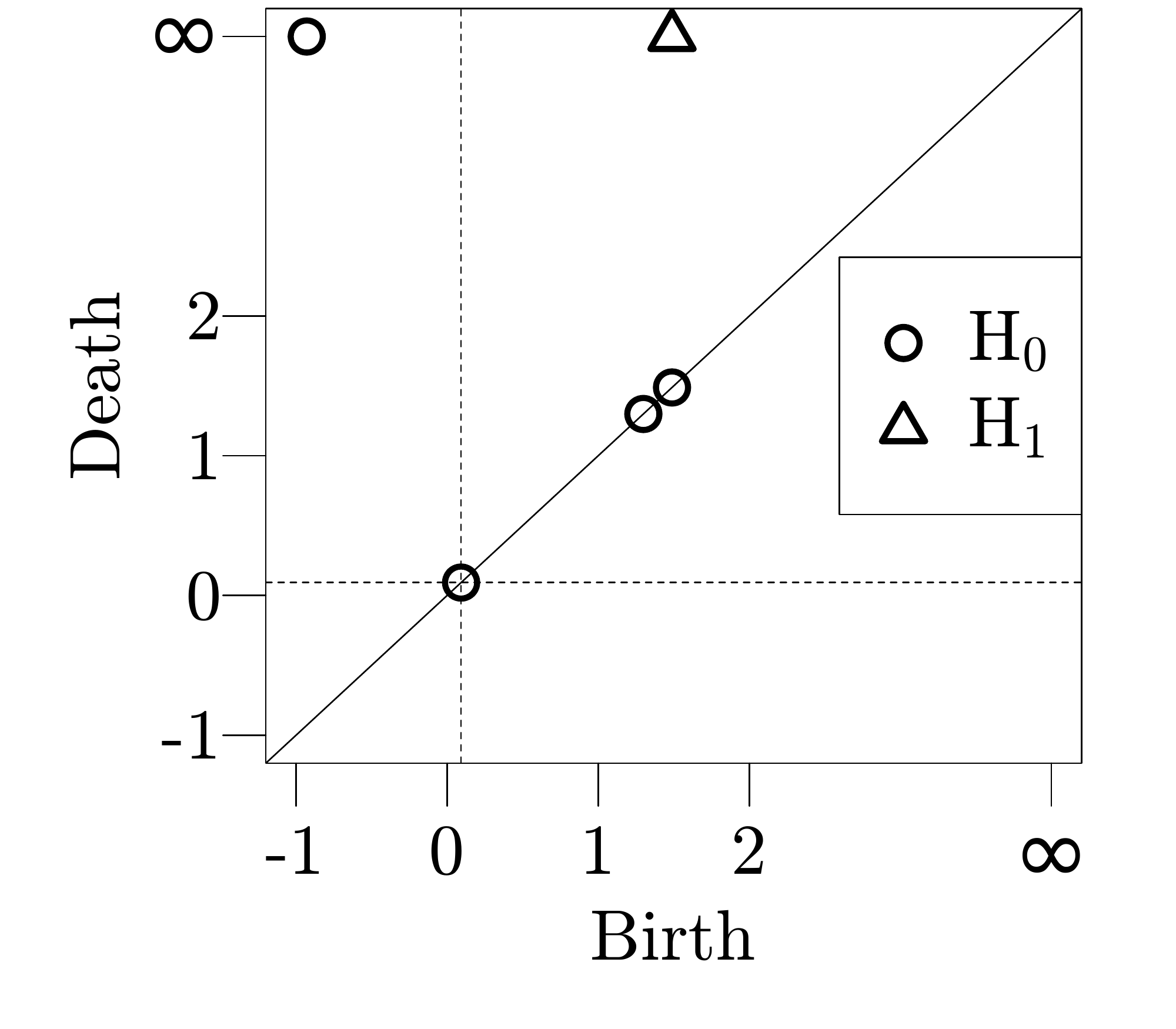}
  \caption{Diagram for $\dir_2$}
\end{subfigure}
  \caption{Example of edge reconstruction for two edges. The first edge (top
  row) exists while the second edge (bottom row) does not. The bow tie
  is given on the left while the persistence diagrams $\dgm{0}{\dir_1}$ and
  $\dgm{1}{\dir_1}$ are given in the middle and the persistence diagrams
  $\dgm{0}{\dir_2}$ and $\dgm{1}{\dir_2}$ are given on the right. The dotted lines
  indicate $v \cdot \dir_1$ and $v \cdot \dir_2$ in diagrams for $\dir_1$ and $\dir_2$ respectively.}
  \label{fig:exampleEdge}
\end{figure*}

In order to reconstruct all edges, we perform the same computations for all pairs
of vertices. After doing this, we obtain the desired plane graph as shown in~\figref{exampleVertex}.

\end{document}